\documentclass[11pt]{article}
\usepackage[utf8]{inputenc}
\usepackage[letterpaper,margin=1in]{geometry}
\usepackage{amsmath,amssymb,mathtools}
\usepackage{amsthm}
\usepackage{microtype}
\usepackage{enumitem}
\usepackage{booktabs}
\usepackage{float}
\usepackage{pgfplots}
\pgfplotsset{compat=1.18}
\usepackage[round]{natbib}
\usepackage{hyperref}
\usepackage[capitalize,noabbrev]{cleveref}

\theoremstyle{plain}
\newtheorem{theorem}{Theorem}
\newtheorem{proposition}[theorem]{Proposition}
\newtheorem{lemma}[theorem]{Lemma}
\newtheorem{corollary}[theorem]{Corollary}
\theoremstyle{remark}
\newtheorem{remark}[theorem]{Remark}
\theoremstyle{plain}

\newcommand{\E}{\mathbb{E}}
\newcommand{\Var}{\mathrm{Var}}
\newcommand{\Cov}{\mathrm{Cov}}
\newcommand{\indep}{\perp\!\!\!\perp}
\newcommand{\sigv}{\sigma_v}
\newcommand{\sigu}{\sigma_u}
\newcommand{\sige}{\sigma_\varepsilon}
\newcommand{\ytil}{\widetilde y}
\newcommand{\dtil}{\widetilde d}

\newcommand{\Ddp}{\Delta_{\mathrm{DP}}}
\newcommand{\edp}{\epsilon_{\mathrm{DP}}}

\title{The Privacy Subsidy in Market Microstructure}
\author{Yuki Nakamura\thanks{The Open University of Japan. ORCID 0009-0001-7174-6737.
This paper consolidates, unifies, and substantially extends three earlier preprints by
the author---on the single-period Kyle \citep{nakamura2026privacysubsidy},
Glosten--Milgrom \citep{nakamura2026glostenmilgrom}, and continuous-time Kyle--Back
\citep{nakamura2026continuoustime} channels---adding the general impossibility theorem,
the fee-equilibrium welfare analysis, and the endogenous-privacy results; it supersedes them.}}
\date{\today}

\begin{document}
\maketitle

\begin{abstract}
Privacy-preserving exchange designs price on a coarsened view of order flow. We show
that a market maker committed to informationally efficient (posterior-mean) pricing on
a signal strictly coarser than the flow it settles necessarily cedes a closed-form
welfare transfer to traders---the privacy subsidy---and that no rule restricted to the
coarse signal is simultaneously efficient and zero-profit against the settled flow. We
establish this impossibility for a general coarsening, then characterise the subsidy in
closed form across three canonical microstructure models: single-period Kyle with
Gaussian flow noise, Glosten--Milgrom with a binary direction channel, and
continuous-time Kyle--Back with a Brownian channel. The subsidy obeys a structural
correspondence with Loss-Versus-Rebalancing, both welfare rates factorising as a squared
noise driver times a committed-object factor. Gross of fees the subsidy is a pure
transfer recovered by a break-even fee; once levied, that fee distorts volume, and the
resulting deadweight---under an explicit allocative value of trade---is fourth-order in
the noise scale while the gross subsidy is second-order, so privacy is welfare-neutral
to leading order with a strictly smaller irrecoverable loss. Endogenising the privacy
level, a protocol trading a differential-privacy benefit against this deadweight chooses
an interior noise scale in closed form; doing so leaves the half-revealing product of
price impact and informed intensity intact while unpinning the volatility-elasticity of
price impact from its textbook value of one.
\end{abstract}

\noindent\textbf{Keywords:} market microstructure; Kyle model; Glosten--Milgrom;
privacy; differential privacy; automated market makers; loss-versus-rebalancing.

\smallskip
\noindent\textbf{JEL Classification:} G14; G12; D82; D47.

\section{Introduction}
\label{sec:intro}

Privacy-preserving cryptocurrency exchanges---shielded automated market makers,
MPC matching engines, batched and sealed-bid auctions---change what the pricing
mechanism is permitted to see about order flow. Classical market microstructure
\citep{kyle1985,glosten1985,back1992insider} sets prices against the \emph{exact}
aggregate flow: the same object the liquidity provider both observes and settles.
A privacy layer breaks this coincidence. The protocol still settles the true net
flow it must clear, but the price-setter is allowed to condition only on a
\emph{coarsened} signal of that flow---a quantity perturbed, aggregated, or
encrypted so that individual orders are hidden.

This separation of \emph{what is priced} from \emph{what is settled} is the single
economic primitive behind every privacy mechanism we study, and it has a sharp
consequence.

\paragraph{The coarse-signal observation.}
Write $v$ for the asset value, $y$ for the net flow the maker settles, and $S$ for
the signal the maker is allowed to price on---a (possibly randomized) garbling of the
settled flow. Informationally efficient pricing quotes the posterior mean
$p=\E[v\mid S]$. The maker's profit on the flow it actually clears is then
$\Pi_M=\E[(p-v)\,y]$, and this is generically nonzero precisely because $S$ is
strictly coarser than $y$: the maker prices on less information than it bears risk
against. The gap is not an accounting convention but a structural wedge between an
efficient quote on a degraded signal and the finer flow that settles behind it.

\paragraph{The impossibility thesis.}
Our organising result makes this gap an impossibility (\Cref{thm:impossibility}): a
maker committed to informationally efficient pricing on the coarse signal $S$
\emph{cannot} also break even against the finer settled flow $y$. Formally the loss is
the executed-price wedge $-\Cov(v-Q,\,y)$ in the quoted price $Q$---which reduces to
$-\Cov(v-\E[v\mid S],\,y-\E[y\mid S])$ when a single efficient price $Q=\E[v\mid S]$
clears both sides---and it vanishes if and only if $y$ is measurable with respect to
$S$; under strict coarsening it is a strictly negative closed-form transfer we call the
\emph{privacy subsidy}. Efficiency and
zero-profit-against-the-settled-flow are jointly infeasible. This is not an artifact
of any one pricing convention: it is the generic, competition-robust cost of
pricing on a signal strictly coarser than the executed flow, and it organizes the
rest of the paper.

\paragraph{Contributions.}
\begin{enumerate}[leftmargin=*]
\item A \textbf{general coarse-signal impossibility theorem}
(\Cref{thm:impossibility}). For any square-integrable $(v,y)$ and any executed price
$Q$, the committed-Bayesian maker's profit on the settled flow is the executed-price
wedge $\Pi_M=-\Cov(v-Q,\,y)-\E[v-Q]\,\E[y]$. When a single efficient price
$Q=\E[v\mid S]$ clears both sides this reduces by the tower property to
$-\Cov(v-\E[v\mid S],\,y-\E[y\mid S])$, which is zero iff $y$ is
$\sigma(S)$-measurable; when the quote is side-dependent (as in Glosten--Milgrom) the
executed price differs from $\E[v\mid S]$ and the single-price form is \emph{not} the
subsidy. Under strict coarsening with positive wedge covariance, efficient pricing
earns strictly negative profit and no signal-measurable rule is simultaneously
efficient and zero-profit against $y$. The proof needs only the tower
property---no Gaussianity, no functional-form assumptions.

\item \textbf{Three closed-form instances and the LVR correspondence}
(\Cref{sec:kyle,sec:gm,sec:ct,sec:lvr}). We solve the wedge explicitly in the three
canonical microstructure models---single-period Kyle (\Cref{thm:impossibility} via
\Cref{prop:kyle-eq,lem:kyle-pl,cor:kyle-subsidy}), Glosten--Milgrom
(\Cref{thm:gm-spread,prop:gm-subsidy}), and continuous-time Kyle--Back
(\Cref{thm:ct-equilibrium,thm:ct-subsidy})---and show each is one instance of the
theorem. We then bridge to decentralized finance via the
Loss-Versus-Rebalancing correspondence (\Cref{prop:lvr}), identifying the privacy
subsidy as the coarse-signal analogue of LVR \citep{milionis2022lvr}.

\item \textbf{The net welfare cost of privacy} (\Cref{sec:fee}). Gross of fees the
subsidy is a pure transfer from the liquidity layer to traders---welfare-neutral,
and offset exactly by a volume-proportional break-even fee
(\Cref{lem:harberger,prop:breakeven}). The genuine social loss appears only once
that fee distorts volume: under an explicit allocative value of trade the deadweight
is \emph{fourth-order} in the noise scale, $O(\sige^4)$, while the gross subsidy it
funds is \emph{second-order}, $O(\sige^2)$ (\Cref{thm:netcost}, with the CARA
specialization \Cref{prop:cara}). The welfare cost of privacy is thus an order smaller
in the noise scale than the gross transfer it funds.

\item \textbf{Endogenous privacy and the unpinned co-movement elasticity}
(\Cref{sec:mechanism}). A protocol that trades a differential-privacy benefit
against the fourth-order deadweight selects an interior noise scale in closed form
(\Cref{thm:optsigma}). Endogenizing privacy then unpins the textbook
price-impact / informed-intensity elasticity: the reciprocal product
$\lambda\beta=\tfrac12$ still holds, but once the noise scale responds to primitives
the $\sigv$-elasticity of $\lambda$ rises from $1$ to a primitive-dependent value in
$(1,\tfrac75)$ (\Cref{prop:comovement}).
\end{enumerate}

\paragraph{Related literature.}
The privacy subsidy is, to our knowledge, the first welfare quantity isolating the cost
of \emph{committed} efficient pricing on a coarsened view of flow, and it is distinct
from three adjacent literatures. (i)~\emph{Imperfect-information Kyle.} A line of work
lets the maker observe only a noisy or partial signal of order flow
\citep{caldentey2010,danilova2010imperfect,cetin2017financial,chhaibi2025solvability,qiu2023insider};
there the maker filters optimally and the equilibrium restores zero (or
risk-compensated) expected profit. Our maker is the opposite---\emph{committed} to the
efficient posterior price on the coarse signal rather than free to re-optimise against
it---so the wedge does not wash out: it is exactly the profit a filtering maker would
refuse to cede. (ii)~\emph{Hidden-flow and dark-pool welfare.} The price-discovery and
welfare effects of non-displayed or fragmented flow are studied by
\citet{zhu2014darkpools} and \citet{buti2017darkpool}; relative to that work we fix a
single venue, make the coarsening a \emph{designed} privacy primitive, and deliver a
closed-form transfer and its fourth-order deadweight rather than a price-efficiency
comparison. (iii)~\emph{Information design.} \Cref{thm:impossibility} is an
information-design statement \citep{bergemann2019infodesign}: the maker is the receiver
of a garbling chosen by a privacy mechanism, and the subsidy is its committed-action
loss; our contribution is the closed-form microstructure instantiation. We build on the
Bayesian and committed market-maker tradition
\citep{kyle1985,glosten1985,back1992insider,brahma-bayesian-mm,nonfiduciary-mm}, the
Glosten--Milgrom sequential-learning lineage
\citep{glosten1988,easleyohara1992,das2005learning}, strategic-trading and disclosure
models \citep{foster1996strategic,huddart2001disclosure,glosten1989}, flexible
information acquisition \citep{flexible-info-kyle}, automated-market-maker design
\citep{moallemi2022myersonian,routledge2024amm}, and the differential-privacy
foundations \citep{dwork2014algorithmic,warner1965rr,chitra2022dpcfmm}.

\paragraph{Roadmap.}
\Cref{sec:framework} sets up committed Bayesian pricing on a coarse signal and
proves the impossibility theorem. \Cref{sec:kyle,sec:gm,sec:ct} solve the three
canonical instances---Kyle, Glosten--Milgrom, and continuous-time Kyle--Back---and
\Cref{sec:lvr} establishes the LVR correspondence. \Cref{sec:fee} turns the gross
subsidy into a net welfare statement via the break-even fee and the fourth-order
deadweight. \Cref{sec:mechanism} endogenizes the privacy level and derives the
unpinned co-movement elasticity. \Cref{sec:applications} maps the framework onto deployed
privacy-preserving exchanges, and \Cref{sec:conclusion} concludes. Proofs are
collected in \Cref{app:proofs} and a numerical illustration with a BTC/USDT
calibration in \Cref{app:numerics}.

\paragraph{Scope.}
We treat the cryptographic layer as a black box that produces the coarse signal
$S$ from the settled flow $y$. Protocol-level soundness, the concrete privacy
primitive, and the mechanics of the encryption or aggregation are out of scope;
our object is the equilibrium and welfare economics induced by the resulting
coarsening, whatever its cryptographic source.


\section{The coarse-signal framework and an impossibility theorem}
\label{sec:framework}

\subsection{Committed Bayesian pricing on a coarse signal}
Let $(\Omega,\mathcal F,P)$ carry a square-integrable asset value $v$ with prior mean
$p_0=\E[v]$. A price-setter (``maker'') transacts and \emph{settles} a net order flow
$y$, but prices on a signal $S$ that is a (possibly randomized) garbling of the latent
pair $(y,v)$: the privacy channel may inject exogenous noise, so $S$ need not be
$\sigma(y,v)$-measurable (as in Kyle, $S=y+\varepsilon$). The maker is \emph{committed
Bayesian}: it quotes the posterior mean
\begin{equation}
p \;=\; \E[v\mid S],
\label{eq:committed}
\end{equation}
the informationally efficient (and Bertrand-competitive, conditional on $S$) price,
without imposing a separate zero-profit condition. More generally the maker posts a
\emph{quote schedule} whose levels are $S$-conditional posterior means of $v$; a trade
transacts at the price $Q$ of the quote it hits, an
$\sigma(S,\text{executed side})$-measurable random variable. When the maker posts one
price for all flow (the Kyle models of \Cref{sec:kyle,sec:ct}) $Q=\E[v\mid S]$; when it
posts a two-sided quote (Glosten--Milgrom, \Cref{sec:gm}) $Q$ is the side-dependent ask
or bid. The maker's expected profit on the settled flow is $\Pi_M=\E[(Q-v)\,y]$.

We say $S$ is a \emph{strict coarsening} of $y$ if $y$ is not
$\sigma(S)$-measurable, i.e.\ $\Var\!\big(y-\E[y\mid S]\big)>0$. The benchmark
$S=y$ (exact observation) is the non-coarsened case.

\subsection{The wedge identity and impossibility}

\begin{theorem}[Coarse-signal wedge and impossibility]
\label{thm:impossibility}
Let $Q$ be the executed price (the level of the quote actually hit,
$\sigma(S,\text{side})$-measurable) and $y$ the signed settled flow. The
committed-Bayesian maker's expected profit on the settled flow is
\begin{equation}
\Pi_M \;=\; \E\big[(Q-v)\,y\big]
       \;=\; -\,\Cov(v-Q,\,y)\;-\;\E[v-Q]\,\E[y].
\label{eq:wedge}
\end{equation}
Under the regularity condition $\E[v-Q]\,\E[y]=0$---in particular whenever the settled
flow is balanced ($\E[y]=0$) or the quote carries no unconditional markup
($\E[Q]=\E[v]$)---this is the \emph{executed-price wedge}
$\Pi_M=-\Cov(v-Q,\,y)$. In the single-price case $Q=\E[v\mid S]$ it reduces, by the
tower property, to
\begin{equation}
\Pi_M \;=\; -\,\Cov\!\big(v-\E[v\mid S],\; y-\E[y\mid S]\big),
\label{eq:wedge-single}
\end{equation}
which is $0$ iff $y$ is $\sigma(S)$-measurable (in particular $S=y$). If the executed
coarsening covariance is positive, $\Cov(v-Q,\,y)>0$, then (i) efficient pricing earns
strictly negative profit $\Pi_M<0$ against the settled flow, and (ii) no
$\sigma(S)$-measurable quote rule is simultaneously efficient and zero-profit against
$y$. The wedge $|\Pi_M|$ is the \emph{privacy subsidy}.
\end{theorem}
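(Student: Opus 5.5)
The plan is to treat \Cref{thm:impossibility} as a chain of elementary $L^2$ identities. First I establish the general wedge \eqref{eq:wedge}. Then I specialise to the single-price case using the tower property. Finally I read off the sign and impossibility claims from the resulting formula.

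\emph{Step 1: the wedge.} Square-integrability of $v$, $y$, $Q$ makes every product integrable. Then
\[
\E[(Q-v)y] \;=\; -\E[(v-Q)y] \;=\; -\bigl(\Cov(v-Q,y)+\E[v-Q]\,\E[y]\bigr).
\]
This is \eqref{eq:wedge}. Under the regularity condition the second term vanishes, leaving $\Pi_M=-\Cov(v-Q,y)$.

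\emph{Step 2: the single-price reduction.} Take $Q=\E[v\mid S]$. By the tower property $\E[v-Q]=0$, so the regularity condition holds automatically and $\Pi_M=-\E[(v-\E[v\mid S])\,y]$. The residual $v-\E[v\mid S]$ is orthogonal in $L^2$ to every $\sigma(S)$-measurable square-integrable variable, and in particular to $\E[y\mid S]$. Subtracting $\E[y\mid S]$ from $y$ therefore does not change the expectation. Because both factors now have mean zero, the expectation is a covariance, which gives \eqref{eq:wedge-single}.

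\emph{Step 3: the vanishing criterion.} The ``if'' direction is immediate. If $y$ is $\sigma(S)$-measurable, then $y-\E[y\mid S]=0$ a.s., so $\Pi_M=0$; the benchmark $S=y$ is the special case.

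The ``only if'' direction is the step I expect to be the main obstacle, because it is false for arbitrary square-integrable pairs. A nonzero residual of $y$ can be uncorrelated with the residual of $v$, for instance when the coarsening noise is independent of $(v,y)$ and enters only the $y$-residual. My plan is to prove the converse under the theorem's own nondegeneracy hypothesis and to state it explicitly. If the residual covariance is strictly positive, then both residuals are nondegenerate, and in particular $\Var(y-\E[y\mid S])>0$. In the three instances I would verify that hypothesis directly from the closed forms. In Kyle, for example, with $S=y+\varepsilon$ Gaussian, the conditional covariance of $(v,y)$ given $S$ is a positive constant times $\beta\sigv^2\sige^2$, which vanishes iff $\sige=0$, i.e.\ iff $y$ is $S$-measurable. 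So I would present the ``iff'' as holding whenever $\Cov(v,y\mid S)$ has a sign, and add a remark that this is the operative content.

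\emph{Step 4: the consequences (i) and (ii).} Claim (i) is Step 1 combined with the hypothesis $\Cov(v-Q,y)>0$.

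For (ii), I would use the definition of efficiency. A $\sigma(S)$-measurable quote rule $\tilde Q$ is efficient iff $\tilde Q=\E[v\mid S]$ a.s. (respectively, iff it equals the posterior mean given $(S,\text{side})$ in the two-sided case). Conditional expectations are a.s.\ unique, so any efficient rule coincides a.s.\ with the executed price $Q$ of the statement. Its profit is therefore the same number $\Pi_M<0$, so the rule cannot also be zero-profit against $y$. This gives the joint infeasibility, and defining the subsidy as $|\Pi_M|$ completes the statement.
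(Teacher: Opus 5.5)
Your Steps 1--2 and the single-price case of Step 4 follow the paper's own argument. You expand $\E[(Q-v)y]$, use that $v-\E[v\mid S]$ is orthogonal to $\E[y\mid S]$, and observe that efficiency pins a $\sigma(S)$-measurable rule to $\E[v\mid S]$, whose profit is minus the wedge. Your Step 3 diagnosis is also correct, and the paper's proof does no better: its appendix only shows that $y-\E[y\mid S]\equiv0$ forces $\Pi_M=0$. The converse already fails in the paper's Kyle primitives without an informed trader ($\beta=0$, $\sige>0$). There $\E[v\mid\ytil]=p_0$ and $\Cov\big(v-p_0,\,u-\E[u\mid\ytil]\big)=0$, yet $y=u$ is not $\sigma(\ytil)$-measurable. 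Note, however, that ``positive covariance implies $\Var(y-\E[y\mid S])>0$'' is only the contrapositive of the ``if'' half. The genuine converse content is the instance formula $\Cov(v-\E[v\mid\ytil],\,y-\E[y\mid\ytil])=\beta\sigv^2\sige^2/\Var(\ytil)$, which gives the ``iff'' in Kyle exactly when $\beta>0$. So present the ``iff'' as instance-specific, not as proved in general.

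The step that actually fails is the two-sided parenthetical in Step 4. In the paper's two-sided case, the executed price is not the posterior mean given $(S,\text{side})$. In Glosten--Milgrom the quote levels are $\E[v\mid\dtil=\text{buy}]$ and $\E[v\mid\dtil=\text{sell}]$, and $Q$ picks one of them according to the true side $d$. Since $\dtil$ is conditionally independent of $v$ given $d$, $\E[v\mid\dtil,d]=\E[v\mid d]$ is the un-coarsened textbook quote. Because $y$ is $\sigma(d)$-measurable, that quote breaks even against $y$. So under your reading, the efficient two-sided rule is zero-profit: the two-sided analogue of (ii) you assert is false, and ``any efficient rule coincides a.s.\ with $Q$'' fails.

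Nor can you rescue (ii) for side-dependent $Q$ through $\E[v\mid S]$. Its loss is $\Cov(v-\E[v\mid S],\,y-\E[y\mid S])$, not the $\Cov(v-Q,y)$ that your hypothesis signs. In Glosten--Milgrom these are $2\mu\eta(1-\eta)\Delta$ versus $\mu\eta\Delta$; the former happens to be positive there, but not because of the hypothesis.

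The fix is to confine (ii) to $Q=\E[v\mid S]$, where the two covariances coincide. This is exactly what the paper's Leg~(ii) does, since it assumes the single-price covariance is positive. For side-dependent $Q$, use (i) alone, and state explicitly the regularity condition $\E[v-Q]\,\E[y]=0$ that your Step 4 uses silently; it holds automatically only when $Q=\E[v\mid S]$.
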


\begin{proof}[Proof sketch]
\emph{Wedge identity.} By definition $\Pi_M=\E[(Q-v)y]$; expanding the centred product,
$\E[(Q-v)y]=\Cov(Q-v,y)+\E[Q-v]\E[y]=-\Cov(v-Q,y)+\E[Q-v]\E[y]$, which is
\eqref{eq:wedge}, and the regularity condition kills the second term.
\emph{Single-price reduction.} If $Q=\E[v\mid S]$, then for any $\sigma(S)$-measurable
$h$, $\E[(\E[v\mid S]-v)\,h(S)]=\E[h(S)\,\E[\E[v\mid S]-v\mid S]]=0$ (tower property);
taking $h=\E[y\mid S]$ and decomposing $y=\E[y\mid S]+(y-\E[y\mid S])$ gives
\eqref{eq:wedge-single}, which vanishes iff $y$ is $\sigma(S)$-measurable.
\emph{Impossibility.} A quote rule that is both efficient and zero-profit against $y$
would make \eqref{eq:wedge} equal to both $-\Cov(v-Q,y)<0$ and $0$---a contradiction
under strict coarsening. \qed
\end{proof}

\noindent The sign condition $\Cov(v-Q,\,y)>0$ is the economic content (the wedge
identity itself is an accounting tautology plus the tower property). Writing
$\Cov(v-Q,y)=\Cov(v,y)-\Cov(Q,y)$, it says the executed price \emph{under-reacts} to the
value-relevant content of the signed flow: the camouflage noise that coarsens $S$ is
precisely the part of order flow behind which informed trades hide, so the executed
mispricing $v-Q$ loads positively on the signed flow whenever the channel is
information-destroying (a strict coarsening, $\Var(y-\E[y\mid S])>0$) and a positive
informed component is present. We verify it in closed form in each model below; it
holds strictly there iff the coarsening is strict ($\sige>0$ in Kyle, $\eta>0$ in
Glosten--Milgrom).

\begin{remark}[Gross transfer vs.\ net cost]
\label{rem:gross-net}
The subsidy is a transfer from the liquidity layer to traders, not by itself a
social loss: gross of fees it is exactly offset by a volume-proportional break-even
fee, so privacy is \emph{welfare-neutral} at this partial-equilibrium level. The
genuine social cost arises only once that fee distorts trading volume; we quantify it
in \Cref{sec:fee} and find it strictly smaller (fourth-order vs.\ second-order in the
noise scale).
\end{remark}

\section{Single-period Kyle with Gaussian flow noise}
\label{sec:kyle}

The first instance of \Cref{thm:impossibility} is the single-period Kyle
model \citep{kyle1985} in which the coarse signal is the settled flow
perturbed by additive Gaussian privacy noise. We recover textbook Kyle in the
no-privacy limit and obtain the privacy subsidy in closed form as the single-price
wedge $-\Cov(v-\E[v\mid S],\,y-\E[y\mid S])$ of \eqref{eq:wedge-single}.

\subsection{Primitives}

On $(\Omega,\mathcal F,P)$ let
\begin{align*}
v &\sim N(p_0,\sigv^2), \quad \sigv>0,\\
u &\sim N(0,\sigu^2), \quad \sigu>0, \quad u\indep v,\\
\varepsilon &\sim N(0,\sige^2), \quad \sige\ge0, \quad \varepsilon\indep(v,u),
\end{align*}
be independent Gaussians: $v$ is the terminal value with prior mean $p_0$, $u$
is uninformed (noise-trader) flow, and $\varepsilon$ is the privacy noise the
mechanism injects into the maker's observable. A single informed trader observes
$v$ and submits $x=\beta(v-p_0)$ for some $\beta>0$ to be fixed in equilibrium.
The maker \emph{settles} the net flow
\[
y \;=\; x+u,
\]
but \emph{prices} on the privacy-noised signal
\[
S \;=\; \ytil \;=\; x+u+\varepsilon \;=\; y+\varepsilon.
\]
This is exactly the coarsening of \Cref{sec:framework}: when $\sige>0$ the
settled flow $y$ is not $\sigma(\ytil)$-measurable, so $\ytil$ is a strict
coarsening of $y$, while $\sige=0$ is the non-coarsened benchmark $S=y$. The
committed-Bayesian rule \eqref{eq:committed} reads $p=\E[v\mid\ytil]$, sought
within the linear class $p=p_0+\lambda\ytil$.

\subsection{Equilibrium}

\begin{proposition}[Linear equilibrium under privacy-noisy observation]
\label{prop:kyle-eq}
Fix $\sigv,\sigu>0$ and $\sige\ge0$. The model admits a unique linear
equilibrium, with
\begin{equation}
\lambda_0 \;=\; \frac{\sigv}{2\sqrt{\sigu^2+\sige^2}},
\qquad
\beta_0 \;=\; \frac{\sqrt{\sigu^2+\sige^2}}{\sigv},
\qquad
\lambda_0\beta_0 \;=\; \tfrac12 .
\label{eq:kyle-eq}
\end{equation}
\end{proposition}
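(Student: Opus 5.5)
Following the classical two-step Kyle fixed-point argument, we solve the maker's filtering problem for a given informed intensity, then the informed trader's best response to a given price impact, and intersect the two; uniqueness comes from the resulting scalar equation.

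\emph{Step 1 (maker's filter).} Fix $\beta>0$ and $x=\beta(v-p_0)$. Then $(v,\ytil)$ is jointly Gaussian with $\ytil=\beta(v-p_0)+u+\varepsilon$, so the committed-Bayesian price $\E[v\mid\ytil]$ is exactly linear:
\[
p=p_0+\lambda\ytil,\qquad \lambda(\beta)=\frac{\Cov(v,\ytil)}{\Var(\ytil)}=\frac{\beta\sigv^2}{\beta^2\sigv^2+\sigu^2+\sige^2}.
\]
Linearity is therefore forced by Gaussianity, not merely imposed. Note that $\varepsilon$ enters only through the pooled variance $\sigu^2+\sige^2$.

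\emph{Step 2 (informed best response).} Given $p=p_0+\lambda\ytil$ with $\lambda>0$, the informed trader who observes $v$ and submits $x$ faces expected profit
\[
\E[(v-p)x\mid v]=x\big(v-p_0-\lambda x\big),
\]
because $u$ and $\varepsilon$ are mean zero and independent of $v$. Two points need checking here. First, the trader's profit is computed at the executed price $p$, which depends on $\ytil$, so $\varepsilon$ affects it only through its mean. Second, this is a strictly concave quadratic, with maximiser $x=(v-p_0)/(2\lambda)$. Hence $\beta=1/(2\lambda)$, which already yields $\lambda\beta=\tfrac12$ for any candidate equilibrium. If $\lambda\le 0$, the trader's problem is unbounded or degenerate, so no linear equilibrium with $\beta>0$ exists there.

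\emph{Step 3 (fixed point and uniqueness).} Substitute $\beta=1/(2\lambda)$ into $\lambda(\beta)$. Writing $\Sigma^2=\sigu^2+\sige^2$, this gives
\[
\lambda=\frac{\sigv^2/(2\lambda)}{\sigv^2/(4\lambda^2)+\Sigma^2}.
\]
Clearing denominators, this becomes $\sigv^2/4+\lambda^2\Sigma^2=\sigv^2/2$, i.e.\ $\lambda^2=\sigv^2/(4\Sigma^2)$. The positive root is $\lambda_0=\sigv/(2\Sigma)$, and then $\beta_0=\Sigma/\sigv$, which is \eqref{eq:kyle-eq}. Uniqueness within the linear class follows because $\lambda>0$ is required by Step 2 (equivalently by $\beta>0$), and the quadratic has exactly one positive root. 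The condition $\sigu>0$ ensures $\Sigma>0$ even when $\sige=0$, so the formulas remain well defined and recover textbook Kyle.

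The main obstacle is conceptual rather than computational. We must be careful that the informed trader's objective is evaluated against the \emph{noised} price. We must also confirm that the committed maker's rule is the posterior mean on $\ytil$ and is not accompanied by a zero-profit condition on $y$. Once that is settled, the algebra is the standard Kyle computation with $\sigu^2$ replaced by $\sigu^2+\sige^2$.
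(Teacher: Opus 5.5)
Your proposal is correct and follows essentially the same route as the paper's proof. Both use the Gaussian projection $\lambda(\beta)=\beta\sigv^2/(\beta^2\sigv^2+\sigu^2+\sige^2)$, the strictly concave informed objective giving $\beta=1/(2\lambda)$, and the scalar fixed point $\lambda^2=\sigv^2/(4(\sigu^2+\sige^2))$ with its single positive root; your explicit treatment of the $\lambda\le 0$ case slightly tightens the uniqueness claim relative to the paper's version.
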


\begin{proof}[Proof sketch]
Under $x=\beta(v-p_0)$ the signal $\ytil=\beta(v-p_0)+u+\varepsilon$ is jointly
Gaussian with $v$, and the projection formula gives
$\E[v\mid\ytil]=p_0+\frac{\beta\sigv^2}{\beta^2\sigv^2+\sigu^2+\sige^2}\,\ytil$,
so Bayesian rationality forces
$\lambda=\beta\sigv^2/(\beta^2\sigv^2+\sigu^2+\sige^2)$. Conditioning on $v$,
the informed objective $\E[(v-p)x\mid v]=(v-p_0)x-\lambda x^2$ is strictly
concave with maximiser $x^\ast=(v-p_0)/(2\lambda)$, i.e.\ $\beta=1/(2\lambda)$.
Substituting yields $\lambda^2=\sigv^2/\!\big(4(\sigu^2+\sige^2)\big)$, whose
unique positive root is \eqref{eq:kyle-eq}; strict concavity and the single
positive root give uniqueness within the linear class. Full algebra is in
\Cref{app:proofs}.
\qed
\end{proof}

The equilibrium is textbook Kyle \citep{kyle1985} under the substitution
$\sigu^2\mapsto\sigu^2+\sige^2$: setting $\sige=0$ recovers
$\lambda_0=\sigv/(2\sigu)$, $\beta_0=\sigu/\sigv$, and the committed-Bayesian
maker coincides with the competitive zero-profit maker, consistent with the
non-coarsened case of \Cref{thm:impossibility}. The half-revealing identity
$\lambda_0\beta_0=\tfrac12$ holds for every $\sige\ge0$: privacy rescales price
impact and informed intensity by reciprocal factors of
$\sqrt{\sigu^2+\sige^2}$ (one down, one up), leaving their product invariant.
Substituting back, $p=\tfrac12(p_0+v)+\lambda_0(u+\varepsilon)$, so both
$\E[p\mid v]=\tfrac12(p_0+v)$ and $\Var(p\mid v)=\sigv^2/4$ are independent of
$\sige$: privacy is invisible in the price distribution conditional on $v$, and
the cost surfaces only in the maker's P\&L against the settled flow.

\subsection{The subsidy as an instance of \texorpdfstring{\Cref{thm:impossibility}}{the impossibility theorem}}

The relevant per-period expected P\&Ls are $\pi_I:=\E[(v-p)x]$ (informed),
$\pi_N:=\E[(v-p)u]$ (noise, a loss), and the maker's profit on settled flow
$\pi_M:=\E[(p-v)\,y]$, which is exactly the object $\Pi_M$ of \eqref{eq:wedge}.
By construction $\pi_I+\pi_N+\pi_M=0$.

\begin{lemma}[Per-agent P\&L]
\label{lem:kyle-pl}
Under the equilibrium \eqref{eq:kyle-eq},
\begin{equation}
\pi_I=\tfrac12\,\sigv\sqrt{\sigu^2+\sige^2},
\qquad
\pi_N=-\frac{\sigv\,\sigu^2}{2\sqrt{\sigu^2+\sige^2}},
\qquad
\pi_M=-\frac{\sigv\,\sige^2}{2\sqrt{\sigu^2+\sige^2}}.
\label{eq:kyle-pl}
\end{equation}
\end{lemma}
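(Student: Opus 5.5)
We substitute the equilibrium of \Cref{prop:kyle-eq} into the price and compute the three expectations by independence. The approach uses nothing beyond second moments of independent centred Gaussians. Write $\Sigma:=\sqrt{\sigu^2+\sige^2}$, so that $\lambda_0=\sigv/(2\Sigma)$ and $\beta_0=\Sigma/\sigv$. As noted after \Cref{prop:kyle-eq}, the equilibrium price is
\[
p=\tfrac12(p_0+v)+\lambda_0(u+\varepsilon),
\]
and hence
\[
v-p=\tfrac12(v-p_0)-\lambda_0(u+\varepsilon).
\]
The two summands are independent and mean zero, because $v\indep(u,\varepsilon)$. Every P\&L therefore reduces to a covariance against a linear combination of the independent variables $v-p_0$, $u$, and $\varepsilon$.

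\emph{Informed.} Take $x=\beta_0(v-p_0)$. Then
\[
\pi_I=\beta_0\,\E\big[(v-p)(v-p_0)\big]=\beta_0\cdot\tfrac12\sigv^2 ,
\]
since the $u+\varepsilon$ term is independent of $v$ and has mean zero. This gives $\tfrac12\sigv\Sigma$.

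\emph{Noise.} We have
\[
\pi_N=\E\big[(v-p)u\big]=-\lambda_0\,\E[u^2]=-\lambda_0\sigu^2 ,
\]
because $v\indep u$ and $\varepsilon\indep u$. This equals $-\sigv\sigu^2/(2\Sigma)$.

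\emph{Maker.} I would use the accounting identity $\pi_I+\pi_N+\pi_M=0$, which holds because $y=x+u$. This yields
\[
\pi_M=-\pi_I-\pi_N=-\frac{\sigv}{2\Sigma}\big(\Sigma^2-\sigu^2\big)=-\frac{\sigv\sige^2}{2\Sigma}.
\]
As an independent cross-check, I would also compute $\pi_M$ through \eqref{eq:wedge-single}. Because the linear projection in Gaussians is the conditional mean, $\Cov(v-\E[v\mid\ytil],\,y-\E[y\mid\ytil])$ equals $\Cov(v,y)-\Cov(v,\ytil)\Cov(y,\ytil)/\Var(\ytil)$. Here
\begin{itemize}
\item $\Cov(v,y)=\beta_0\sigv^2=\Cov(v,\ytil)$,
\item $\Cov(y,\ytil)=\Var(y)=\beta_0^2\sigv^2+\sigu^2$,
\item $\Var(\ytil)=2\Sigma^2$.
\end{itemize}
Simplifying should reproduce $\sigv\sige^2/(2\Sigma)$. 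This confirms the Kyle instance of \Cref{thm:impossibility} with a strictly positive wedge covariance exactly when $\sige>0$.

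I do not expect a real obstacle; the computation is routine. The only points needing care are sign conventions and one caveat. The noise traders' P\&L is defined as $\E[(v-p)u]$, not $\E[(p-v)u]$. The maker's P\&L is taken against $y$, not $\ytil$, so $\varepsilon$ appears in the price but is not part of the settled quantity. This is precisely what produces the $\sige^2$ numerator in $\pi_M$: the $\varepsilon$-component of the price never enters the maker's settled position.
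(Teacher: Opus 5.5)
Your proof is correct and follows essentially the paper's route. You compute $\pi_I=\tfrac12\beta_0\sigv^2$ and $\pi_N=-\lambda_0\sigu^2$ directly from independence and $\lambda_0\beta_0=\tfrac12$, then obtain $\pi_M$ from the zero-sum identity; the only difference is that you apply $\lambda_0\beta_0=\tfrac12$ up front by rewriting $v-p$, while the paper applies it after expanding. Your covariance cross-check is exactly the computation the paper carries out in the proof of \Cref{cor:kyle-subsidy}.
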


\begin{proof}[Proof sketch]
Using $x=\beta_0(v-p_0)$, the independence $u,\varepsilon\indep v$, and
$\lambda_0\beta_0=\tfrac12$: $\pi_I=\tfrac12\beta_0\sigv^2$ and
$\pi_N=-\lambda_0\sigu^2$; then $\pi_M=-(\pi_I+\pi_N)$ by the zero-sum identity.
\Cref{app:proofs} gives the algebra.
\qed
\end{proof}

\begin{corollary}[Privacy subsidy: the Kyle instance]
\label{cor:kyle-subsidy}
The maker's loss \eqref{eq:kyle-pl} is the wedge of \Cref{thm:impossibility}
evaluated at $S=\ytil$, $y=x+u$:
\begin{equation}
\pi_M \;=\; -\,\Cov\!\big(v-\E[v\mid\ytil],\;y-\E[y\mid\ytil]\big)
       \;=\; -\,\lambda_0\,\sige^2,
\label{eq:kyle-wedge}
\end{equation}
so the \emph{privacy subsidy} is
\begin{equation}
|\pi_M| \;=\; \lambda_0\,\sige^2
        \;=\; \frac{\sigv\,\sige^2}{2\sqrt{\sigu^2+\sige^2}} \;\ge\; 0,
\label{eq:kyle-subsidy}
\end{equation}
with equality iff $\sige=0$. The coarsening covariance is strictly positive
whenever $\sige>0$, so all three conclusions of \Cref{thm:impossibility} hold:
efficient pricing on $\ytil$ loses strictly against $y$, and no
$\sigma(\ytil)$-measurable rule is both efficient and zero-profit against the
settled flow.
\end{corollary}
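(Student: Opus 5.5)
The plan is to compute the single-price wedge \eqref{eq:wedge-single} directly in the jointly Gaussian setting of \Cref{prop:kyle-eq}, and then check that the result agrees with $\pi_M$ from \Cref{lem:kyle-pl}.

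First, the hypotheses of \Cref{thm:impossibility} hold here. In the Kyle model the maker posts a single price, so $Q=\E[v\mid\ytil]=p_0+\lambda_0\ytil$ is the efficient linear projection, and $\E[y]=\beta_0\E[v-p_0]+\E[u]=0$, so the regularity condition holds. The single-price reduction then gives $\pi_M=-\Cov(v-\E[v\mid\ytil],\,y-\E[y\mid\ytil])$. This covers the first equality in \eqref{eq:kyle-wedge}.

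Second, for the closed form, everything is Gaussian, so $\E[y\mid\ytil]=\kappa\ytil$ with $\kappa=\Var(y)/\Var(\ytil)$. By the tower property $\Cov(v-\E[v\mid\ytil],\kappa\ytil)=0$, so the wedge collapses to $\Cov(v-p,\,y)=\Cov(v,y)-\lambda_0\Cov(\ytil,y)$. Now $\Cov(v,y)=\beta_0\sigv^2$ and $\Cov(\ytil,y)=\Var(y)=\beta_0^2\sigv^2+\sigu^2$, because $\varepsilon\indep y$. The Bayesian condition gives $\lambda_0(\beta_0^2\sigv^2+\sigu^2+\sige^2)=\beta_0\sigv^2$. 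Substituting,
\[
\Cov(v-p,\,y)=\beta_0\sigv^2-\lambda_0\big(\beta_0^2\sigv^2+\sigu^2\big)=\lambda_0\sige^2 .
\]
Hence $\pi_M=-\lambda_0\sige^2$.

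Third, inserting $\lambda_0$ from \eqref{eq:kyle-eq} gives \eqref{eq:kyle-subsidy}. This matches $\pi_M$ in \eqref{eq:kyle-pl} exactly, which serves as a consistency check on \Cref{lem:kyle-pl} via the zero-sum identity. Since $\lambda_0>0$, the subsidy $\lambda_0\sige^2$ is nonnegative and vanishes iff $\sige=0$.

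Finally, for $\sige>0$ the executed coarsening covariance $\Cov(v-Q,y)=\lambda_0\sige^2$ is strictly positive. Moreover $\Var(y-\E[y\mid\ytil])=\Var(y)\sige^2/\Var(\ytil)>0$, so $\ytil$ is a strict coarsening of $y$. Conclusions (i) and (ii) of \Cref{thm:impossibility} then apply verbatim.

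The only step needing care is the cancellation in the second step. The key observation is that the Bayesian weight $\lambda_0$ is calibrated to $\Var(\ytil)$, which includes $\sige^2$, whereas the settled covariance $\Cov(\ytil,y)$ excludes $\sige^2$. The residual from that mismatch is exactly $\lambda_0\sige^2$. Once this is seen, the rest is routine.
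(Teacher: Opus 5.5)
Your proposal is correct and is essentially the paper's argument: the appendix computes the same Gaussian residual covariance $\Cov(v,y)-\Cov(v,\ytil)\Cov(y,\ytil)/\Var(\ytil)=\beta_0\sigv^2\sige^2/\Var(\ytil)=\lambda_0\sige^2$, which is your $\Cov(v,y)-\lambda_0\Cov(\ytil,y)$ once $\lambda_0=\Cov(v,\ytil)/\Var(\ytil)$ is substituted. The only difference is the direction of identification. The paper reads $\pi_M=-\lambda_0\sige^2$ off \Cref{lem:kyle-pl} and then checks the covariance, whereas you obtain $\pi_M$ from the single-price reduction of \Cref{thm:impossibility} and use the lemma as the consistency check. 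One small notational point: $\kappa$ already denotes the demand semi-elasticity in \Cref{sec:fee}, so use another symbol for the projection coefficient.
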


\begin{proof}[Proof sketch]
Both residuals are Gaussian. Since $\ytil$ is informationally equivalent to a
scalar projection, $y-\E[y\mid\ytil]$ and $v-\E[v\mid\ytil]$ are the parts of
the settled flow and of the value left unseen through $\ytil$; their covariance
evaluates to $\lambda_0\sige^2>0$ for $\sige>0$, matching $-\pi_M$ from
\eqref{eq:kyle-pl}. The identity $\pi_M=-\lambda_0\sige^2$ also follows directly
from $\pi_N=-\lambda_0\sigu^2$ and $\pi_I+\pi_N+\pi_M=0$ together with
\eqref{eq:kyle-pl}. \Cref{app:proofs} carries the covariance computation.
\qed
\end{proof}

The subsidy is the break-even fee a shielded exchange with privacy scale $\sige$
must charge to keep its liquidity layer solvent: per-period fees must total at
least $|\pi_M|$.

\subsection{Comparative statics and welfare incidence}

\begin{proposition}[Subsidy shape]
\label{prop:kyle-shape}
With $\sigv,\sigu$ fixed, the subsidy $|\pi_M|(\sige)$ of
\eqref{eq:kyle-subsidy} is strictly increasing on $[0,\infty)$, equals
$\tfrac{\sigv}{2\sigu}\sige^2+O(\sige^4)$ as $\sige\downarrow0$ and
$\tfrac12\sigv\sige+O(\sige^{-1})$ as $\sige\to\infty$, and has a single
inflection at $\sige^\star=\sqrt2\,\sigu$, being convex on $[0,\sige^\star]$ and
concave thereafter.
\end{proposition}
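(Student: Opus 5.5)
The plan is direct one-variable calculus on the closed form \eqref{eq:kyle-subsidy}. Write $c:=\sigu>0$ and $s:=\sige\ge0$, so that $|\pi_M|(s)=\tfrac{\sigv}{2}\,g(s)$ with
\[
g(s)=s^2\,(c^2+s^2)^{-1/2}.
\]
The positive constant $\sigv/2$ does not affect monotonicity, convexity or inflection points. So every claim reduces to a statement about $g$, and I will establish them in the order monotonicity, second derivative, asymptotics.

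\emph{Monotonicity.} Differentiate $g$ with the product rule and collect over the common factor $(c^2+s^2)^{-3/2}$:
\[
g'(s)=2s(c^2+s^2)^{-1/2}-s^3(c^2+s^2)^{-3/2}=s\,(2c^2+s^2)\,(c^2+s^2)^{-3/2}.
\]
This is strictly positive for $s>0$ and zero only at $s=0$. Hence $g$ is strictly increasing on $[0,\infty)$.

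\emph{Curvature.} Differentiate $g'(s)=(2c^2s+s^3)(c^2+s^2)^{-3/2}$ once more and pull out $(c^2+s^2)^{-5/2}$. The bracket is
\[
(2c^2+3s^2)(c^2+s^2)-3s(2c^2s+s^3).
\]
The $s^4$ terms cancel, and I expect it to collapse to $2c^4-c^2s^2=c^2(2c^2-s^2)$, so that
\[
g''(s)=c^2\,(2c^2-s^2)\,(c^2+s^2)^{-5/2}.
\]
Its sign is the sign of $2c^2-s^2$. Thus $g''>0$ on $[0,\sqrt2\,c)$, $g''<0$ on $(\sqrt2\,c,\infty)$, and $g''$ vanishes exactly at $s=\sqrt2\,c$ with a sign change. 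This gives the single inflection at $\sige^\star=\sqrt2\,\sigu$, convexity on $[0,\sige^\star]$ and concavity thereafter.

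The main obstacle is this algebraic simplification of the bracket. Everything hinges on the quartic terms cancelling, so I will expand it carefully rather than reason qualitatively.

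\emph{Asymptotics.} Near zero, factor out $c$ to get $g(s)=\tfrac{s^2}{c}(1+s^2/c^2)^{-1/2}$. The binomial series gives $g(s)=\tfrac{s^2}{c}-\tfrac{s^4}{2c^3}+\dots$, hence $|\pi_M|=\tfrac{\sigv}{2\sigu}\sige^2+O(\sige^4)$.

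At infinity, factor out $s$ instead: $g(s)=s\,(1+c^2/s^2)^{-1/2}=s-\tfrac{c^2}{2s}+\dots$. This yields $|\pi_M|=\tfrac12\sigv\sige+O(\sige^{-1})$. Both expansions are valid because the binomial series converges for $s<c$ and for $s>c$ respectively.
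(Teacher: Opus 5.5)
Your proposal is correct and follows essentially the same route as the paper: you differentiate the closed form \eqref{eq:kyle-subsidy} directly, and the sign of $f''=\sigv\sigu^2(2\sigu^2-\sige^2)/\bigl(2(\sigu^2+\sige^2)^{5/2}\bigr)$ locates the single inflection at $\sige^\star=\sqrt2\,\sigu$, with binomial expansions at $0$ and at $\infty$ giving the asymptotics. Your expansion of the bracket to $c^2(2c^2-s^2)$ is right, and it supplies the algebra that the paper compresses into ``after simplification.''
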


\begin{proof}[Proof sketch]
$\partial_{\sige}|\pi_M|=\sigv\sige(2\sigu^2+\sige^2)/\big(2(\sigu^2+\sige^2)^{3/2}\big)>0$,
and $\partial^2_{\sige}|\pi_M|=\sigv\sigu^2(2\sigu^2-\sige^2)/\big(2(\sigu^2+\sige^2)^{5/2}\big)$
changes sign exactly at $\sige^2=2\sigu^2$. The expansions are the
$\sige\downarrow0$ and $\sige\to\infty$ limits of \eqref{eq:kyle-subsidy};
see \Cref{app:proofs}.
\qed
\end{proof}

For small privacy the subsidy is quadratic---doubling $\sige$ quadruples the
LP cost---while for large $\sige$ it grows linearly; past the inflection
$\sige^\star=\sqrt2\,\sigu$ additional privacy is cheaper at the margin, a
structural feature of the square-root denominator in \eqref{eq:kyle-subsidy}.

The incidence falls entirely on the liquidity layer, and both trader types gain
gross of fees. From \eqref{eq:kyle-pl},
$\partial_{\sige}\pi_I=\sigv\sige/\big(2\sqrt{\sigu^2+\sige^2}\big)>0$ and
$\partial_{\sige}\pi_N=\sigv\sigu^2\sige/\big(2(\sigu^2+\sige^2)^{3/2}\big)>0$:
the informed trader earns more and the noise traders \emph{lose less} as privacy
rises, with the maker absorbing the sum. To leading order around $\sige=0$ the
two gains coincide,
\[
\pi_I(\sige)-\pi_I(0)\;=\;\pi_N(\sige)-\pi_N(0)\;=\;\frac{\sigv}{4\sigu}\,\sige^2+O(\sige^4),
\]
so the subsidy splits symmetrically across trader types at the margin;
asymmetry appears only at higher order, with the informed trader capturing
essentially all of it as $\sige\to\infty$ ($\pi_I\sim\tfrac12\sigv\sige$) while
the noise traders' gain saturates ($\pi_N\to0$).

Per \Cref{rem:gross-net}, this gross-of-fees incidence is a transfer: at the
no-fee equilibrium volumes, a volume-proportional break-even fee returns each
side to its classical-Kyle value ($\pi_I-f\,\E|x|=\sigv\sigu/2$,
$\pi_N-f\,\E|u|=-\sigv\sigu/2$) and exactly compensates the maker, recovering the
welfare-neutrality of \Cref{rem:gross-net} in this instance. The volume-distortion cost
is quantified in \Cref{sec:fee}.

\section{Glosten--Milgrom with a binary direction channel}\label{sec:gm}

We instantiate \Cref{thm:impossibility} in the canonical discrete model of \citet{glosten1985}, where the privacy layer acts as a binary symmetric channel on trade \emph{direction}. The settled flow is the trader's true direction; the maker prices on a flipped (coarser) version of it. The wedge of \eqref{eq:wedge} reduces to a per-trade subsidy in closed form.

\subsection{Two-state setup with a flip channel}

A single risky asset has value $v\in\{v_H,v_L\}$ with symmetric prior $P(v=v_H)=\tfrac12$; write $\Delta:=v_H-v_L>0$. At each round one trader arrives: with probability $\mu\in(0,1)$ \emph{informed} (knows $v$, buys on $v_H$ and sells on $v_L$), and with probability $1-\mu$ a \emph{noise} trader who buys or sells with equal probability. The trader's true direction is $d\in\{\text{buy},\text{sell}\}$.

The maker does not observe $d$. A privacy mechanism transmits $d$ through a binary symmetric channel: the maker sees $\dtil=d$ with probability $1-\eta$ and the opposite direction with probability $\eta\in[0,\tfrac12]$. Thus $\dtil$ is a strict coarsening of $d$ for every $\eta\in(0,\tfrac12]$ (the channel destroys information about $d$), and the benchmark $\eta=0$ is the non-coarsened textbook case.

The maker is committed Bayesian in the sense of \Cref{sec:framework}: it posts a two-sided quote whose levels are the posterior means given the noisy signal,
\begin{equation}
\mathrm{ask}=\E[v\mid\dtil=\text{buy}],
\qquad
\mathrm{bid}=\E[v\mid\dtil=\text{sell}],
\label{eq:gm-quotes}
\end{equation}
and \emph{executes the trader on its true direction} $d$: a true buy fills at the ask, a true sell at the bid. The channel changes only the maker's information, not which side the trader takes. This is the corrected execution model: the maker prices on the coarse signal $S=\dtil$ while settling the fine flow indexed by $d$, exactly the configuration of \Cref{thm:impossibility} with $S=\dtil$ and the executed direction playing the role of the settled flow $y$.

\subsection{Spread and the per-trade subsidy}

\begin{theorem}[Spread under a flip channel]
\label{thm:gm-spread}
In the model above the equilibrium half-spread is symmetric about the prior midpoint and
\begin{equation}
\mathrm{spread}=\mathrm{ask}-\mathrm{bid}=\mu(1-2\eta)\,\Delta.
\label{eq:gm-spread}
\end{equation}
\end{theorem}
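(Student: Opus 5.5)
The proof is a direct Bayes computation of the two quote levels in \eqref{eq:gm-quotes}. First we find the law of the true direction $d$ given $v$, and then pass it through the flip channel.

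Conditional on $v=v_H$, the trader buys with probability $P(d=\text{buy}\mid v_H)=\mu+\tfrac12(1-\mu)=\tfrac{1+\mu}{2}$. Conditional on $v=v_L$, this probability is $\tfrac{1-\mu}{2}$. The channel flips $d$ independently with probability $\eta$. Hence
\[
P(\dtil=\text{buy}\mid v_H)=(1-\eta)\tfrac{1+\mu}{2}+\eta\tfrac{1-\mu}{2}=\tfrac{1+\mu(1-2\eta)}{2},
\]
and symmetrically $P(\dtil=\text{buy}\mid v_L)=\tfrac{1-\mu(1-2\eta)}{2}$. So the composite channel from $v$ to $\dtil$ is itself a symmetric "informed-fraction" channel with effective informed share $\mu_{\mathrm{eff}}:=\mu(1-2\eta)$. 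This reduction is the key step.

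With the symmetric prior, $P(\dtil=\text{buy})=\tfrac12$. Bayes' rule then gives $P(v_H\mid\dtil=\text{buy})=\tfrac{1+\mu_{\mathrm{eff}}}{2}$, and therefore
\[
\mathrm{ask}=v_L+\tfrac{1+\mu_{\mathrm{eff}}}{2}\Delta=\tfrac{v_H+v_L}{2}+\tfrac{\mu_{\mathrm{eff}}}{2}\Delta .
\]
The same computation gives $\mathrm{bid}=\tfrac{v_H+v_L}{2}-\tfrac{\mu_{\mathrm{eff}}}{2}\Delta$. Both quotes are symmetric about the prior midpoint $p_0=\tfrac{v_H+v_L}{2}$, with half-spread $\tfrac12\mu(1-2\eta)\Delta$. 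Subtracting the two quotes yields \eqref{eq:gm-spread}.

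There is no real obstacle. The one point to be careful about is that the quotes are conditioned on $\dtil$, not on $d$, even though execution is on $d$. Execution on the true direction affects only the profit computation, not the posterior means. So "equilibrium" here simply means the committed-Bayesian quotes; no fixed point is needed, because the trader's behaviour does not respond to the quotes.

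As sanity checks, $\eta=0$ recovers the textbook spread $\mu\Delta$, and $\eta=\tfrac12$ gives a zero spread because the channel then destroys all information.
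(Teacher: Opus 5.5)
Your proposal is correct and follows the paper's proof essentially step for step. Both compute the true-direction likelihoods, pass them through the binary symmetric channel to obtain the effective informed share $s=\mu(1-2\eta)$, use the symmetric prior to get $P(\dtil=\text{buy})=\tfrac12$, read off ask and bid as posterior means symmetric about $\tfrac{v_H+v_L}{2}$, and subtract.
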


\begin{proof}[Proof sketch]
Write $s:=\mu(1-2\eta)$. The true-direction likelihoods are $P(d=\text{buy}\mid v_H)=(1+\mu)/2$ and $P(d=\text{buy}\mid v_L)=(1-\mu)/2$. Passing through the channel,
\[
P(\dtil=\text{buy}\mid v_H)=\tfrac{1+\mu}{2}(1-\eta)+\tfrac{1-\mu}{2}\eta=\tfrac{1+s}{2},
\qquad
P(\dtil=\text{buy}\mid v_L)=\tfrac{1-s}{2}.
\]
Under the symmetric prior $P(\dtil=\text{buy})=\tfrac12$, so Bayes gives $P(v_H\mid\dtil=\text{buy})=(1+s)/2$ and the posterior means in \eqref{eq:gm-quotes} are $\mathrm{ask}=\tfrac{v_H+v_L}{2}+\tfrac{s\Delta}{2}$, $\mathrm{bid}=\tfrac{v_H+v_L}{2}-\tfrac{s\Delta}{2}$. Subtracting yields \eqref{eq:gm-spread}; full algebra in the appendix.\qed
\end{proof}

At $\eta=0$ this is the textbook adverse-selection spread $\mu\Delta$; at $\eta=\tfrac12$ the signal is independent of $v$, the posterior collapses to the prior, and the spread vanishes. The spread shrinks in $\eta$ precisely because the maker prices on a degraded signal—and that degradation is what \Cref{thm:impossibility} prices as a loss against the settled flow.

\begin{proposition}[The flip subsidy as an instance of \Cref{thm:impossibility}]
\label{prop:gm-subsidy}
Per trade, the agents' expected profits are
\begin{equation}
\pi_I=\tfrac{1-\mu(1-2\eta)}{2}\,\Delta,
\qquad
\pi_N=-\tfrac{\mu(1-2\eta)}{2}\,\Delta,
\qquad
\pi_M=-\,\mu\eta\Delta,
\label{eq:gm-pnl}
\end{equation}
and they satisfy the weighted zero-sum identity
\begin{equation}
\mu\,\pi_I+(1-\mu)\,\pi_N+\pi_M=0.
\label{eq:gm-zerosum}
\end{equation}
The maker's loss is the discrete realisation of the executed-price wedge
\eqref{eq:wedge}. Glosten--Milgrom is the two-sided-quote case: the executed price is
$Q=\mathrm{ask}$ on a true buy and $Q=\mathrm{bid}$ on a true sell, and with $y$ the
signed true direction,
\begin{equation}
\big|\pi_M^{\mathrm{GM}}\big|=\mu\eta\Delta\;=\;-\Cov\!\big(v-Q,\;y\big)\;\ge\;0,
\label{eq:gm-subsidy}
\end{equation}
with equality iff $\eta=0$. Because the quote is side-dependent, $Q\neq\E[v\mid\dtil]$,
so this is genuinely the executed-price form and \emph{not} the single-price wedge
\eqref{eq:wedge-single}: indeed
$-\Cov(v-\E[v\mid\dtil],\,y)=-2\mu\eta(1-\eta)\Delta$ differs from $\pi_M$, the gap
being the cross-side spread term computed in \Cref{app:proofs}. The quantity
$\mu\eta\Delta$ is the privacy subsidy in Glosten--Milgrom. The impossibility (leg~(ii)
of \Cref{thm:impossibility}) then holds at the framework level: since
$\Cov(v-Q,y)=\mu\eta\Delta>0$ for every $\eta>0$, no $\dtil$-measurable quote rule is
simultaneously efficient and zero-profit against the executed direction.
\end{proposition}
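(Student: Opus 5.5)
The plan is to compute everything from the quotes already obtained in the proof of \Cref{thm:gm-spread}. Write $m:=\tfrac{v_H+v_L}{2}$ and $s:=\mu(1-2\eta)$, so that $\mathrm{ask}=m+\tfrac{s\Delta}{2}$ and $\mathrm{bid}=m-\tfrac{s\Delta}{2}$. Because execution is on the true direction $d$, each agent's profit depends only on $d$ and $v$; the channel enters only through $s$.

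\emph{Step 1 (per-agent profits).} The informed trader buys at the ask on $v_H$ and sells at the bid on $v_L$. In either state the gain is $\tfrac{\Delta}{2}-\tfrac{s\Delta}{2}$, so $\pi_I=\tfrac{1-s}{2}\Delta$. The noise trader's direction is independent of $v$, with $\E[v]=m$. A buy earns $m-\mathrm{ask}=-\tfrac{s\Delta}{2}$ and a sell earns $\mathrm{bid}-m=-\tfrac{s\Delta}{2}$, so $\pi_N=-\tfrac{s}{2}\Delta$.

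\emph{Step 2 (maker's profit and zero-sum identity).} Trading is zero-sum round by round, and the trader type is informed with probability $\mu$. Hence the maker earns $\pi_M=-\big(\mu\pi_I+(1-\mu)\pi_N\big)$, which is \eqref{eq:gm-zerosum}. Substituting, $\pi_M=-\tfrac{\mu-s}{2}\Delta=-\mu\eta\Delta$.

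\emph{Step 3 (wedge representation).} Encode $y=+1$ for a true buy and $y=-1$ for a true sell. By the buy/sell symmetry of the prior and of the trader types, $\E[y]=0$. So the regularity condition of \Cref{thm:impossibility} holds, and $\pi_M=\E[(Q-v)y]=-\Cov(v-Q,y)$ by \eqref{eq:wedge}. This gives $\Cov(v-Q,y)=\mu\eta\Delta$, which is strictly positive iff $\eta>0$. Leg~(ii) of \Cref{thm:impossibility} then applies directly and yields the impossibility claim. The sign convention needs care here: the displayed \eqref{eq:gm-subsidy} should be read as $|\pi_M|=\Cov(v-Q,y)=-\pi_M$. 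I would state the identity in that form, consistent with the closing sentence of the proposition.

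\emph{Step 4 (contrast with the single-price form).} Let $\tilde y=\pm1$ encode $\dtil$. Then $\E[v\mid\dtil]=m+\tfrac{s\Delta}{2}\tilde y$. From the likelihoods in the proof of \Cref{thm:gm-spread}, $\E[y\mid v_H]=\mu$, so $\Cov(v,y)=\tfrac{\mu\Delta}{2}$. Also $\E[\tilde y y]=1-2\eta$. Combining,
\[
\Cov\big(v-\E[v\mid\dtil],y\big)=\tfrac{\mu\Delta}{2}\big(1-(1-2\eta)^2\big)=2\mu\eta(1-\eta)\Delta .
\]
Comparing this with $\mu\eta\Delta$ isolates the cross-side spread term: the executed $Q$ differs from $\E[v\mid\dtil]$ exactly on the event $\dtil\neq d$.

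The main obstacle is bookkeeping rather than analysis. One must keep the executed price indexed by $d$ while the posterior is indexed by $\dtil$, and track signs consistently. I would verify Step 3 independently by computing $\E[(Q-v)y]$ directly. Split on trader type: the informed trader contributes $\mu\big(\tfrac{s\Delta}{2}-\tfrac{\Delta}{2}\big)$ and the noise trader contributes $(1-\mu)\tfrac{s\Delta}{2}$. The total is $-\mu\eta\Delta$, which confirms Step 2.
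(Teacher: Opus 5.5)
Your proposal is correct and follows the paper's proof essentially step for step: profits read off the quotes of \Cref{thm:gm-spread}, $\pi_M$ from the zero-sum accounting, then \eqref{eq:wedge} with $\E[y]=0$. The only difference is that you back out $\Cov(v-Q,y)=\mu\eta\Delta$ from $\pi_M$, where the paper enumerates the joint law of $(v,d,\dtil)$. Your sign correction to \eqref{eq:gm-subsidy} is right, since the paper's own proof also ends with $\Cov(v-Q,y)=\mu\eta\Delta=|\pi_M|$.

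One small caveat for Step~4: at $\eta=\tfrac12$ one has $s=0$, so $Q=\E[v\mid\dtil]=m$ identically and the cross-side gap $\mu\eta(1-2\eta)\Delta$ vanishes. Hence ``$Q$ differs from $\E[v\mid\dtil]$ on $\{\dtil\neq d\}$'', and the statement's ``differs from $\pi_M$'', hold only for $\eta\in(0,\tfrac12)$.
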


\begin{proof}[Proof sketch]
The informed trader fills at the favourable quote and earns $\Delta/2-s\Delta/2=(1-s)\Delta/2$ regardless of $v$; the noise trader fills each side with probability $\tfrac12$ and earns $(\mathrm{bid}-\mathrm{ask})/2=-s\Delta/2$. Since every trade is between one trader and the maker, $\pi_M=-[\mu\pi_I+(1-\mu)\pi_N]=\tfrac{\Delta}{2}(s-\mu)$, and $s-\mu=-2\mu\eta$ gives $\pi_M=-\mu\eta\Delta$, establishing \eqref{eq:gm-pnl}--\eqref{eq:gm-zerosum}. For \eqref{eq:gm-subsidy}, the executed price is the side-dependent $Q$ (ask on a true buy, bid on a true sell); under the symmetric prior $\E[Q]=\E[v]$ and $\E[y]=0$, so the regularity condition of \Cref{thm:impossibility} holds and $\pi_M=-\Cov(v-Q,y)$. Direct enumeration gives $\Cov(v-Q,y)=\mu\eta\Delta$ (\Cref{app:proofs}).\qed
\end{proof}

\subsection{Comparative statics and incidence}

\begin{proposition}[Comparative statics]
\label{prop:gm-cs}
The subsidy $\mu\eta\Delta$ is linear and increasing in each of $\mu$, $\eta$, $\Delta$. As $\eta$ rises from $0$ to $\tfrac12$ the spread falls linearly from $\mu\Delta$ to $0$ while the subsidy rises linearly from $0$ to $\mu\Delta/2$, so privacy trades spread for maker loss one-for-one in $\eta$. Informed profit $\pi_I=(1-\mu(1-2\eta))\Delta/2$ is increasing in $\eta$.
\end{proposition}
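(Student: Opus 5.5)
The plan is to read every claim of \Cref{prop:gm-cs} directly off the closed forms already in hand, namely the spread \eqref{eq:gm-spread} from \Cref{thm:gm-spread} and the profits \eqref{eq:gm-pnl} from \Cref{prop:gm-subsidy}, and then check signs of partial derivatives on the admissible domain $\mu\in(0,1)$, $\eta\in[0,\tfrac12]$, $\Delta>0$. No new probabilistic argument is needed; the work is bookkeeping of monotonicity and endpoint values.

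First, for the subsidy $|\pi_M|=\mu\eta\Delta$, I note that it is a product of three nonnegative factors. It is therefore linear in each one with the other two held fixed. The partial derivatives are $\eta\Delta$, $\mu\Delta$ and $\mu\eta$, which are positive on the interior of the domain. At the boundary $\eta=0$ the coefficients in $\mu$ and $\Delta$ vanish, so there ``increasing'' should be read as weakly increasing. I will state that caveat, or equivalently restrict the strict claim in $\mu$ and $\Delta$ to $\eta>0$.

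Second, for the trade-off in $\eta$, I use \eqref{eq:gm-spread}. The spread $\mu(1-2\eta)\Delta$ is affine in $\eta$ with slope $-2\mu\Delta$. It runs from $\mu\Delta$ at $\eta=0$ to $0$ at $\eta=\tfrac12$. The subsidy is affine in $\eta$ with slope $\mu\Delta$ and runs from $0$ to $\mu\Delta/2$. Adding the half-spread $\mu(1-2\eta)\Delta/2$ to the subsidy gives $\mu\Delta/2$ for every $\eta$. This invariance is the precise sense of ``one-for-one'': each unit of half-spread removed becomes a unit of maker loss. I expect this to be the only delicate step, and it is interpretive rather than technical. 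The raw slopes are $-2\mu\Delta$ and $\mu\Delta$, so the statement holds exactly only when spread is measured per side, and I would make that normalization explicit.

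Finally, $\pi_I=(1-\mu(1-2\eta))\Delta/2$ has $\partial_\eta\pi_I=\mu\Delta>0$, which gives strict increase in $\eta$. As a consistency check, this slope equals the subsidy's slope divided by the informed weight $\mu$. That matches the weighted zero-sum identity \eqref{eq:gm-zerosum}, because $\pi_N$ increases at slope $\mu\Delta$ as well and the weighted sum is $\mu\cdot\mu\Delta+(1-\mu)\mu\Delta=\mu\Delta=-\partial_\eta\pi_M$.
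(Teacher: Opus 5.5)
Your proposal is correct and follows the paper's route: the paper gives no separate proof of \Cref{prop:gm-cs} and reads it directly off \eqref{eq:gm-spread} and \eqref{eq:gm-pnl}. Your two caveats sharpen the statement accurately: monotonicity in $\mu$ and $\Delta$ is only weak at $\eta=0$, and ``one-for-one'' holds exactly only for the half-spread, since the full spread falls at rate $2\mu\Delta$, twice the subsidy's rate.

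There is one harmless slip in your closing consistency check. $\partial_\eta\pi_I=\mu\Delta$ equals the subsidy's $\eta$-slope itself, not that slope divided by $\mu$. Your own weighted sum already uses the correct value in its term $\mu\cdot\mu\Delta$.
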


\begin{corollary}[Noise traders also gain from privacy]
\label{cor:gm-noise}
$\partial\pi_N/\partial\eta=\mu\Delta>0$ for all $\eta\in[0,\tfrac12]$: the uninformed traders' expected loss strictly decreases in the privacy level. Both trader types gain and the maker bears the entire cost, the discrete counterpart of the incidence found in \Cref{sec:kyle}.
\end{corollary}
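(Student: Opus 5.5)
The statement to prove is \Cref{cor:gm-noise}. The plan is to differentiate the closed-form noise-trader profit from \Cref{prop:gm-subsidy} and then read off the incidence from the weighted zero-sum identity \eqref{eq:gm-zerosum}.

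First, I would take $\pi_N=-\tfrac{\mu(1-2\eta)}{2}\Delta$ from \eqref{eq:gm-pnl} as given. Recall where it comes from: the noise trader fills at the ask or the bid with probability $\tfrac12$ each, independently of $v$. It therefore earns $\tfrac12(\E[v]-\mathrm{ask})+\tfrac12(\mathrm{bid}-\E[v])=-\tfrac12\,\mathrm{spread}$. The spread is $\mu(1-2\eta)\Delta$ by \Cref{thm:gm-spread}. The function $\eta\mapsto\pi_N$ is affine, so
\[
\frac{\partial\pi_N}{\partial\eta}=\frac{\mu\Delta}{2}\cdot 2=\mu\Delta .
\]
This is strictly positive because $\mu\in(0,1)$ and $\Delta>0$. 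The derivative is constant, so the conclusion holds uniformly on $[0,\tfrac12]$, including the endpoints as one-sided derivatives. Since $\pi_N<0$ for $\eta<\tfrac12$, an increasing $\pi_N$ means a strictly decreasing expected loss $|\pi_N|$, which reaches $0$ at $\eta=\tfrac12$.

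Second, for the incidence claim:
\begin{itemize}
\item \Cref{prop:gm-cs} already gives $\partial_\eta\pi_I=\mu\Delta>0$.
\item From \eqref{eq:gm-pnl}, $\partial_\eta\pi_M=-\mu\Delta<0$.
\item Differentiating \eqref{eq:gm-zerosum} gives $\mu\,\partial_\eta\pi_I+(1-\mu)\,\partial_\eta\pi_N=-\partial_\eta\pi_M$. Both $\mu\Delta$ terms combine to $\mu\Delta$, so the maker's marginal loss equals the population-weighted marginal gains of the two trader types.
\end{itemize}
This is the exact analogue of the Kyle incidence in \Cref{sec:kyle}, where $\partial_{\sige}\pi_I,\partial_{\sige}\pi_N>0$ and the maker absorbs the sum.

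There is no real obstacle here: the argument is a one-line derivative. The only point needing care is consistency of the formulas with the execution model. Noise traders execute on their true direction $d$, so their fill is independent of the channel flip. Consequently only the quote levels, and hence the spread, depend on $\eta$. I would state this explicitly, so it is clear that the noise-trader gain comes purely from the spread compression in \Cref{thm:gm-spread} rather than from any change in which side they trade.
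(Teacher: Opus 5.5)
Your proposal is correct and follows the same route as the paper. The paper gives no separate proof: the corollary is read directly off the closed form $\pi_N=-\tfrac{\mu(1-2\eta)}{2}\Delta$ in \eqref{eq:gm-pnl}, and the incidence follows from $\partial_\eta\pi_I=\mu\Delta$, $\partial_\eta\pi_M=-\mu\Delta$ and the weighted identity \eqref{eq:gm-zerosum}. One small attribution point: \Cref{prop:gm-cs} only asserts that $\pi_I$ is increasing in $\eta$, so cite the exact value $\partial_\eta\pi_I=\mu\Delta$ as a direct differentiation of \eqref{eq:gm-pnl}.
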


\noindent Consistent with \Cref{rem:gross-net}, this gain is gross of fees: the per-trade break-even fee $f=\mu\eta\Delta$ levied flat on every trade exactly returns each trader type to its $\eta=0$ Glosten--Milgrom profit and the maker to zero, so the transfer is welfare-neutral at the partial-equilibrium level; the volume-distortion cost is taken up in \Cref{sec:fee}.

\subsection{Differential-privacy reading of the flip rate}

The flip probability has a direct local-differential-privacy interpretation, developed fully in \Cref{sec:applications}. The symmetric randomised-response mechanism on the binary direction label \citep{warner1965rr,dwork2014algorithmic} achieves $\edp$-local DP at flip rate
\begin{equation}
\eta=\frac{1}{1+e^{\edp}},
\label{eq:gm-rr}
\end{equation}
so the subsidy reads $\mu\Delta/(1+e^{\edp})$: a tighter privacy budget (smaller $\edp$) means a larger flip rate and a larger subsidy, vanishing as $\edp\to\infty$ (no privacy). This is the discrete analog of the Gaussian-mechanism reading in \Cref{sec:kyle}; we defer the protocol-design discussion to \Cref{sec:applications}. Information-theoretic treatments of the Glosten--Milgrom channel \citep{carmier2022thermo,touzo2020information} study the entropy of this learning process; our flip-rate subsidy is the dual welfare quantity, the maker's committed-pricing loss rather than an information measure.

\section{Continuous-time Kyle--Back with a Brownian channel}
\label{sec:ct}

The continuous-time leg replaces the single-shot trade of \Cref{sec:kyle} with an
adaptive trajectory in the sense of \citet{back1992insider}. The coarse-signal
geometry of \Cref{thm:impossibility} is unchanged---the maker prices on a strictly
coarsened view of the flow it settles---but the wedge now accumulates over the
trading horizon. We recover the cumulative subsidy in closed form and show it is the
continuous-time instance of \Cref{thm:impossibility}.

\subsection{Continuous-time primitives}
\label{sec:ct-primitives}

Fix the horizon $[0,1]$ on a filtered space carrying two independent standard
Brownian motions $W^u,W^\varepsilon$ and a normal $v\sim\mathcal N(p_0,\sigv^2)$
independent of both. A single informed trader observes $v$ at time $0$ and follows a
linear Markovian strategy $dx_t=\beta_t\,(v-p_t)\,dt$ for a deterministic intensity
$\beta_t$. Noise traders submit $du_t=\sigu\,dW^u_t$. The \emph{privacy channel}
adds an independent Brownian perturbation $d\varepsilon_t=\sige\,dW^\varepsilon_t$ to
the maker's observation of the flow, with $\sige\ge0$ a diffusion intensity (so
$\sige$ here is a per-unit-time intensity; under the unit-horizon normalisation it
agrees numerically with the single-period standard deviation of \Cref{sec:kyle}). The
settled flow and the maker's observed flow are
\[
  dy_t=dx_t+du_t,
  \qquad
  d\ytil_t=dy_t+d\varepsilon_t,
\]
so the maker prices on $S=\{\ytil_s:s\le t\}$ while settling $y$: a strict coarsening
in the sense of \Cref{sec:framework} whenever $\sige>0$. Trades clear at the
post-update price $p_t=p_{t^-}+\lambda_t\,d\ytil_t$ (the standard Kyle--Back
convention; the pre-update convention fails to recover the $\sige=0$ benchmark and is
therefore not used).

\paragraph{Committed pricing.} As in \Cref{eq:committed}, the maker is committed
Bayesian: $p_t=\E[v\mid\mathcal F^{\mathrm{MM}}_t]$ with
$\mathcal F^{\mathrm{MM}}_t=\sigma(\{\ytil_s:s\le t\})$, implemented through a price
impact $\lambda_t$ fixed ex ante (a protocol parameter) at its Bayes Kalman value
$\lambda_t=\Cov(v,d\ytil_t\mid\mathcal F^{\mathrm{MM}}_t)/\Var(d\ytil_t\mid\mathcal
F^{\mathrm{MM}}_t)$. We write $\Sigma(t):=\Var(v-p_t\mid\mathcal F^{\mathrm{MM}}_t)$
for the posterior variance, $\Sigma(0)=\sigv^2$. The incompatibility of efficiency
and zero-profit-against-$y$ is the continuous-time reading of \Cref{thm:impossibility}
and is not re-argued here.

\subsection{The Markovian linear equilibrium}
\label{sec:ct-eq}

\begin{theorem}[Markovian linear equilibrium]
\label{thm:ct-equilibrium}
Restrict to insider strategies $dx_t=\beta_t(v-p_t)\,dt$ and impose the Kyle--Back
rationality condition $\Sigma(1)=0$ (full revelation at the horizon). Under the
committed pricing of \Cref{sec:ct-primitives} with privacy-noise intensity
$\sige\ge0$, the unique equilibrium in this class is
\[
  \lambda(t)=\frac{\sigv}{\sqrt{\sigu^2+\sige^2}},
  \qquad
  \beta(t)=\frac{\sqrt{\sigu^2+\sige^2}}{\sigv\,(1-t)},
  \qquad
  \Sigma(t)=\sigv^2\,(1-t),
\]
with $\lambda$ constant in $t$.
\end{theorem}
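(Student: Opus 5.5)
The plan is to reduce to the textbook Kyle--Back problem via the substitution $\sigu^2\mapsto\sigu^2+\sige^2$, as in \Cref{prop:kyle-eq}. The key observation is that the maker's observation $d\ytil_t=\beta_t(v-p_t)\,dt+\sigu\,dW^u_t+\sige\,dW^\varepsilon_t$ has martingale part $\sigu W^u+\sige W^\varepsilon$. This is itself a Brownian motion $\sigma\,B_t$ with $\sigma:=\sqrt{\sigu^2+\sige^2}$. So, from the maker's filtering perspective, the model is exactly Kyle--Back with noise intensity $\sigma$. The insider's execution cost is also driven by $p_t$, which moves only through $d\ytil_t$. Hence the insider's problem likewise depends on the noise only through $\sigma$.

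\textbf{Step 1 (filtering).} Given deterministic $\beta_t$, the pair $(v,\ytil)$ is conditionally Gaussian. The Kalman--Bucy filter then gives
\[
dp_t=\lambda_t\,d\ytil_t \quad\text{with}\quad \lambda_t=\frac{\beta_t\Sigma(t)}{\sigma^2},
\]
and the Riccati equation
\[
\dot\Sigma(t)=-\frac{\beta_t^2\Sigma(t)^2}{\sigma^2}=-\lambda_t^2\sigma^2 .
\]
This is the content of the Bayes--Kalman rule stated in \Cref{sec:ct-primitives}: a conditional covariance divided by $\sigma^2\,dt$.

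\textbf{Step 2 (insider optimality).} I will take the standard Back route and look for a value function $J(t,p,v)$ that is quadratic in $v-p$. Its HJB equation is
\[
\sup_x\Big[(v-p)x+J_p\lambda x+J_t+\tfrac12\lambda^2\sigma^2J_{pp}\Big]=0 .
\]
Linearity in $x$ forces $J_p=-(v-p)/\lambda$, and this is consistent only if $\lambda$ is constant in $t$. With that, $J=\frac{(v-p)^2}{2\lambda}+\frac{\lambda\sigma^2}{2}(1-t)\cdot(\text{const})$ solves the equation, and every strategy is optimal among those that drive $p_1\to v$. Rationality therefore reduces to $\Sigma(1)=0$.

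\textbf{Step 3 (pinning constants).} With $\lambda$ constant, the Riccati equation gives
\[
\Sigma(t)=\sigv^2-\lambda^2\sigma^2t .
\]
The condition $\Sigma(1)=0$ yields $\lambda=\sigv/\sigma$ and $\Sigma(t)=\sigv^2(1-t)$. Then $\beta_t=\lambda\sigma^2/\Sigma(t)=\sigma/(\sigv(1-t))$, matching the statement. Setting $\sige=0$ recovers Back's solution, which serves as a check.

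\textbf{Uniqueness and the main obstacle.} Uniqueness follows because the HJB linearity argument forces $\lambda$ constant. Within the restricted Markovian class, any deterministic $\beta_t$ also yields a deterministic $\Sigma$, so the Riccati equation plus the terminal condition determine $\lambda$ uniquely. The main obstacle is the verification step: showing that the insider cannot profit from deviating, including by exploiting $W^\varepsilon$, which the insider does not see. I would handle this by noting that the insider's information is $(v,p)$ and that $p$ is driven by the combined noise. The realized profit decomposes as $J(0,p_0,v)-\E[J(1,p_1,v)]$ with $J\ge0$ at $t=1$. For this to be bounded and attained we need $p_1=v$, i.e.\ the Brownian-bridge property of the equilibrium price. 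That property follows because, under the maker's filtration, $\sigma^{-1}(\ytil_t-\int\beta(v-p)\,ds)$ is a Brownian motion.
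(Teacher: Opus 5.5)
Your proof is correct and is essentially the paper's: linearity of the HJB in the trading rate forces $J_p=-(v-p)/\lambda_t$ and hence a constant $\lambda$; the Kalman--Bucy gain $\lambda=\beta_t\Sigma(t)/(\sigu^2+\sige^2)$ and its Riccati equation make $\Sigma$ linear in $t$; and $\Sigma(1)=0$ pins $\lambda=\sigv/\sqrt{\sigu^2+\sige^2}$ --- the one difference being that you justify the terminal condition by Back's verification ($J(1,p,v)=(v-p)^2/(2\lambda)\ge0$, zero iff $p_1=v$) rather than the paper's budget-saturation step for $c=\beta_t\Sigma(t)$. Two small slips: the additive term in $J$ must be exactly $\tfrac{\lambda}{2}(\sigu^2+\sige^2)(1-t)$ (plus a constant), not an arbitrary multiple of it; and $(\sigu^2+\sige^2)^{-1/2}\big(\ytil_t-\int_0^t\beta_s(v-p_s)\,ds\big)$ is a Brownian motion in the insider's (full) filtration, not the maker's --- in the maker's filtration the innovation process is $\ytil$ itself, and $p_1=v$ already follows from $\E[(v-p_1)^2]=\Sigma(1)=0$.
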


\begin{proof}[Proof sketch]
The argument is the standard Kyle--Back HJB/Riccati derivation under the
effective-noise substitution $\sigu^2\mapsto\sigu^2+\sige^2$; full algebra is in the
appendix. Conjecture an insider value function $J(v,p,t)=\alpha(t)(v-p)^2+\gamma(t)$.
The first-order condition $(v-p)+\lambda_t\partial_pJ=0$ matched against the ansatz
gives $\alpha(t)\lambda_t=\tfrac12$, and matching the $(v-p)^2$ coefficient in the HJB
forces $\alpha'(t)=0$; hence $\alpha$ and $\lambda$ are constant. The Bayes Kalman
gain reads $\lambda_t=\beta_t\Sigma(t)/(\sigu^2+\sige^2)$, so $\beta_t\Sigma(t)\equiv
c$ is constant, and the posterior variance satisfies the Riccati
$d\Sigma/dt=-c^2/(\sigu^2+\sige^2)$, i.e.\ $\Sigma(t)=\sigv^2-c^2t/(\sigu^2+\sige^2)$.
The insider's cumulative profit along the path is $\int_0^1\beta_t\Sigma(t)\,dt=c$,
maximised over the trading intensity $c$ subject to $\Sigma(t)\ge0$; the constraint
binds first at $t=1$, so the maximiser saturates
$c^2=\sigv^2(\sigu^2+\sige^2)$, giving $c=\sigv\sqrt{\sigu^2+\sige^2}$ and
$\Sigma(1)=0$. Substituting yields $\lambda=c/(\sigu^2+\sige^2)$, $\beta(t)=c/\Sigma(t)$,
and $\Sigma(t)=\sigv^2(1-t)$. The rationality framing is equivalent to the Back
transversality $J(v,p,1)=0$ evaluated along the equilibrium path
($p_1=v$ a.s.\ when $\Sigma(1)=0$) and avoids the inconsistency a pointwise
$J(\cdot,1)=0$ would create with constant $\alpha>0$.
\end{proof}

\subsection{The cumulative privacy subsidy}
\label{sec:ct-subsidy}

\begin{theorem}[Cumulative privacy subsidy]
\label{thm:ct-subsidy}
Under the equilibrium of \Cref{thm:ct-equilibrium}, the committed maker's expected
cumulative profit on the settled flow over $[0,1]$ is
\[
  \Pi_M=\E\!\left[\int_0^1(p_t-v)\,dy_t\right]
        =-\,\frac{\sigv\,\sige^2}{\sqrt{\sigu^2+\sige^2}},
\]
so the privacy subsidy---the absolute transfer from the liquidity pool to traders---is
\[
  |\Pi_M|=\frac{\sigv\,\sige^2}{\sqrt{\sigu^2+\sige^2}}.
\]
Setting $\sige=0$ recovers the classical zero-maker-profit benchmark of
\citet{kyle1985,back1992insider}. This is the continuous-time, \emph{single-price}
instance of \Cref{thm:impossibility}: the maker posts one committed price
$p_t=\E[v\mid\mathcal F^{\mathrm{MM}}_t]$ against the settled increment $dy_t$, so the
executed-price wedge~\eqref{eq:wedge} applies per increment. Committed posterior
pricing gives $\E[v-p_t]=0$, so the mean term drops and
$\E[(p_t-v)\,dy_t]=-\Cov(v-p_t,\,dy_t)=-\lambda\,\sige^2\,dt$ (computed explicitly in
\Cref{app:proofs}), strictly negative for $\sige>0$; integrating with the constant
equilibrium depth $\lambda=\sigv/\sqrt{\sigu^2+\sige^2}$,
$\Pi_M=-\int_0^1\lambda\,\sige^2\,dt=-\lambda\,\sige^2
=-\sigv\sige^2/\sqrt{\sigu^2+\sige^2}$, the same value obtained above by welfare
accounting and by direct It\^o evaluation of $\E\!\int_0^1(p_t-v)\,dy_t$.
\end{theorem}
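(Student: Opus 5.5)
The plan is to evaluate $\Pi_M=\E\int_0^1(p_t-v)\,dy_t$ by splitting the settled increment as $dy_t=dx_t+du_t$ and treating the two pieces separately, then to cross-check against the per-increment wedge form of \Cref{thm:impossibility}.

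\emph{Insider leg.} Substitute $dx_t=\beta_t(v-p_t)\,dt$, so that $\E\int_0^1(p_t-v)\,dx_t=-\int_0^1\beta_t\,\E[(v-p_t)^2]\,dt$. In the linear Gaussian filter of \Cref{thm:ct-equilibrium} the posterior variance is deterministic, so $\E[(v-p_t)^2]=\Sigma(t)$. The integrand is then $\beta_t\Sigma(t)\equiv c=\sigv\sqrt{\sigu^2+\sige^2}$, and this leg equals $-c$, the insider's expected profit. Constancy of $\beta_t\Sigma(t)$ also handles the singularity of $\beta$ at $t=1$: the integrand is bounded, so Fubini applies with no boundary issue.

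\emph{Noise leg.} Taken as a pure It\^o integral against $W^u$, the term $\E\int_0^1(p_t-v)\,du_t$ vanishes, because $p_{t^-}-v$ is adapted and square-integrable. The post-update convention $p_t=p_{t^-}+\lambda\,d\ytil_t$, however, adds the covariation $\lambda\,d[\ytil,u]_t=\lambda\sigu^2\,dt$. This leg is therefore $+\lambda\sigu^2$, the mirror of the noise traders' loss $-\lambda\sigu^2$.

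\emph{Assembly and cross-checks.} Adding the legs gives
\[
\Pi_M=-c+\lambda\sigu^2=-\sigv\sqrt{\sigu^2+\sige^2}+\frac{\sigv\sigu^2}{\sqrt{\sigu^2+\sige^2}}=-\frac{\sigv\sige^2}{\sqrt{\sigu^2+\sige^2}}.
\]
This is also the welfare-accounting route $\Pi_M=-(\pi_I+\pi_N)$. For the per-increment form claimed in the statement, I would use two facts. First, $\E[v-p_t]=0$ by the tower property, so the mean term of \eqref{eq:wedge} drops. Second, $-\Cov(v-p_t,dy_t)$ equals $\lambda\,\Cov(d\ytil_t,dy_t)-\beta_t\Sigma(t)\,dt$. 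Since $\lambda\,\Var(d\ytil_t)=\beta_t\Sigma(t)\,dt$ by the Kalman gain, this becomes $-\lambda\Cov(d\ytil_t,d\varepsilon_t)=-\lambda\sige^2\,dt$. Integrating over $[0,1]$ with constant $\lambda$ gives the same value. Setting $\sige=0$ makes both expressions vanish, which recovers the classical benchmark.

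The main obstacle is making the post-update covariation term rigorous. The integral should be written as a backward-type integral, equal to the It\^o integral plus $\lambda\,d[\ytil,y]_t$, and one must check that $[\ytil,y]_t=\sigu^2t$: the drift part $dx_t$ contributes no quadratic variation, and $\varepsilon$ is independent of $u$. I would first do this via a Riemann-sum limit with right endpoints, and only then invoke $\Sigma(t)=\sigv^2(1-t)$ and $\lambda=\sigv/\sqrt{\sigu^2+\sige^2}$ from \Cref{thm:ct-equilibrium}.
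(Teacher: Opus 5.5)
Your proposal is correct and follows essentially the paper's argument: the split $dy_t=dx_t+du_t$, with insider leg $-c$ and post-update noise covariation $+\lambda\sigu^2$, is the paper's ``direct verification,'' with welfare accounting as the cross-check, and your Kalman-gain step $c-\lambda\sigu^2=\lambda(\sigu^2+\sige^2)-\lambda\sigu^2=\lambda\sige^2$ is the same algebra the paper uses for the per-increment wedge. Writing the post-update convention as a backward integral, i.e.\ the It\^o integral plus $\lambda[\ytil,y]_1=\lambda\sigu^2$, is a cleaner rigorous rendering of the paper's heuristic It\^o cross-term, not a different route.
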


\begin{proof}[Proof sketch via welfare accounting]
Full algebra is deferred to the appendix; we record the three components. The insider
profit rate is $\beta_t(v-p_t)^2\,dt$, and committed pricing gives
$\E[(v-p_t)^2]=\Sigma(t)$, so the unconditional rate is $\beta_t\Sigma(t)\,dt=c\,dt$
and $\Pi_I=c=\sigv\sqrt{\sigu^2+\sige^2}$. For a noise-trader unit, post-trade
clearing gives $(v-p_t)\,du_t=(v-p_{t^-})\,du_t-\lambda_t\,d\ytil_t\cdot du_t$; the
first term has zero expectation and the Itô cross-term is
$d\ytil_t\cdot du_t=(du_t)^2=\sigu^2\,dt$ (the $dx_t$ and $d\varepsilon_t$
contributions vanish), so $\Pi_N=-\lambda\sigu^2=-\sigv\sigu^2/\sqrt{\sigu^2+\sige^2}$.
Every executed trade is between a participant and the maker, so
$\Pi_I+\Pi_N+\Pi_M=0$, whence
$\Pi_M=-(\Pi_I+\Pi_N)=-\sigv\sige^2/\sqrt{\sigu^2+\sige^2}$. This is the
executed-price wedge~\eqref{eq:wedge} per increment: with $\E[v-p_t]=0$ the mean term
drops, so $\E[(p_t-v)\,dy_t]=-\Cov(v-p_t,\,dy_t)=-\lambda\sige^2\,dt$
(\Cref{app:proofs}), giving $\Pi_M=-\int_0^1\lambda\sige^2\,dt=-\lambda\sige^2$, the
same value, with the sign fixed by $\lambda,\sige^2>0$; a direct It\^o evaluation of
$\E\!\int_0^1(p_t-v)\,dy_t$ agrees.
\end{proof}

\begin{remark}[Kyle--Back scaling, not a privacy-specific bonus]
\label{rem:ct-doubling}
The single-period subsidy of \Cref{sec:kyle} is
$|\pi_M|=\sigv\sige^2/(2\sqrt{\sigu^2+\sige^2})$, so \Cref{thm:ct-subsidy} gives
$|\Pi_M|=2|\pi_M|$. The factor of two is the standard Kyle--Back welfare scaling from
single-shot to continuous-time auctions---classical Kyle without privacy noise already
exhibits it, the informed-trader profit $\sigv\sigu/2$ becoming $\sigv\sigu$ in
continuous time \citep{back1992insider}---and the subsidy merely inherits it; it is
not a privacy-specific effect. For a generic horizon $[0,T]$ the terminal condition
$\Sigma(T)=0$ gives $|\Pi_M|=\sqrt T\,\sigv\sige^2/\sqrt{\sigu^2+\sige^2}$, the
$\sqrt T$ being the diffusion time-scale of the Brownian channels; the normalisation
$T=1$ absorbs it.
\end{remark}

\begin{remark}[Time-averaged-variance invariance]
\label{rem:ct-invariance}
Allowing a deterministic time-varying intensity $\sige(t)$ leaves the HJB step intact
($\alpha,\lambda$ remain constant), the Bayes identity gives
$\beta_t\Sigma(t)=\lambda(\sigu^2+\sige(t)^2)$, and imposing $\Sigma(1)=0$ yields
$\lambda^2=\sigv^2/(\sigu^2+\langle\sige^2\rangle)$ with
$\langle\sige^2\rangle:=\int_0^1\sige(t)^2\,dt$. The cumulative subsidy is then
\[
  |\Pi_M|=\int_0^1\lambda\,\sige(t)^2\,dt
         =\frac{\sigv\,\langle\sige^2\rangle}{\sqrt{\sigu^2+\langle\sige^2\rangle}},
\]
depending on the privacy profile only through its time-averaged variance: any
front-loading or back-loading that preserves $\langle\sige^2\rangle$ leaves the
subsidy unchanged, so the protocol cannot reduce it by scheduling alone.
\end{remark}

\section{The Loss-Versus-Rebalancing correspondence}
\label{sec:lvr}

The cumulative subsidy of \Cref{thm:ct-subsidy} has the same information-economic
shape as the Loss-Versus-Rebalancing (LVR) of \citet{milionis2022lvr}, with the price
observation channel replaced by the order-flow observation channel. We state the
relationship as a \emph{structural correspondence}, not a numerical identity: the two
welfare rates live in different markets (a constant-function market maker facing
external arbitrage versus a Kyle order book) and are not comparable in absolute units.

\subsection{The shared factorisation}
\label{sec:lvr-factor}

For a constant-function market maker with portfolio value $V_{\mathrm{AMM}}(q)$ as a
function of an exogenous reference price $q_t$ of diffusion intensity $\sigma$, the
general LVR rate is
\[
  \ell^{\mathrm{LVR}}(t)=-\tfrac12\,\sigma^2\,q_t^2\,V_{\mathrm{AMM}}''(q_t),
\]
positive by concavity of the bonding curve: it is the welfare the maker cedes by
quoting along its committed curve while the reference price moves exogenously. For the
constant-product curve this specialises to
$\ell^{\mathrm{LVR}}(t)=\tfrac{\sigma^2}{8}\,V_{\mathrm{AMM}}(q_t)$, the form we use
below. Under
the equilibrium of \Cref{thm:ct-equilibrium} the privacy subsidy has the constant
instantaneous rate
\[
  \ell^{\mathrm{priv}}\equiv\frac{\sigv\,\sige^2}{\sqrt{\sigu^2+\sige^2}}.
\]
Both rates factorise as \emph{(squared noise driver)} $\times$ \emph{(committed-object
factor)}:
\[
  \ell^{\mathrm{LVR}}(t)=\sigma^2\cdot\frac{V_{\mathrm{AMM}}(q_t)}{8},
  \qquad
  \ell^{\mathrm{priv}}=\sige^2\cdot\frac{\sigv}{\sqrt{\sigu^2+\sige^2}}.
\]
Unlike LVR, whose committed-object factor $V_{\mathrm{AMM}}/8$ is independent of the
noise driver, the privacy committed-object factor $\sigv/\sqrt{\sigu^2+\sige^2}$ retains
a mild dependence on $\sige$; the factorisation is thus a structural parallel, exact in
the small-$\sige$ regime where $\sqrt{\sigu^2+\sige^2}\approx\sigu$. The parallel is
summarised in \Cref{tab:lvr-correspondence}.

\begin{table}[h]
\centering
\caption{Structural correspondence between LVR and the privacy subsidy.}
\label{tab:lvr-correspondence}
\begin{tabular}{lll}
\hline
\textbf{Concept} & \textbf{LVR} \citep{milionis2022lvr} & \textbf{Privacy subsidy} \\
\hline
Committed object   & AMM curve $V_{\mathrm{AMM}}(\cdot)$ & Pricing rule $\lambda_t$ \\
Observation channel& External price $q_t$               & Noisy flow $d\ytil_t$ \\
Noise driver       & Reference-price BM                 & Privacy-noise BM $W^\varepsilon$ \\
Counterparty       & Arbitrageur                        & Informed insider \\
Welfare rate       & $\tfrac{\sigma^2}{8}V_{\mathrm{AMM}}(q_t)$
                   & $\sigv\sige^2/\sqrt{\sigu^2+\sige^2}$ \\
Solvency criterion & $\int\!\mathrm{fee}\ge\int\!\ell^{\mathrm{LVR}}$
                   & $\int\!\mathrm{fee}\ge\int\!\ell^{\mathrm{priv}}$ \\
\hline
\end{tabular}
\end{table}

\begin{proposition}[LVR / privacy-subsidy correspondence]
\label{prop:lvr}
Each welfare rate factorises into the squared intensity of its noise driver times a
closed-form function of the committed pricing object, and each generates a solvency
criterion of identical form $\int\!f\ge\int\!\ell$. For LVR the committed-object factor
$V_{\mathrm{AMM}}(q_t)/8$ is independent of $\sigma$; for the privacy subsidy the
factor $\sigv/\sqrt{\sigu^2+\sige^2}$ is itself a function of $\sige$. The
factorisation and the solvency criterion are exact and global. Only the reading
``rate is quadratic in the noise driver'' is asymptotic for the subsidy: exact for LVR,
$\sim\sigv\sige^2/\sigu$ in the small-noise regime $\sige\ll\sigu$, degrading to
$\sim\sigv\sige$ in the large-noise regime $\sige\gg\sigu$. Cumulative welfare is the
time-integral $\int_0^T\ell\,dt$; for the privacy subsidy the rate is constant in $t$,
so the cumulative subsidy is $\ell^{\mathrm{priv}}\cdot T$ and the solvency criterion
is closed-form at any noise level.
\end{proposition}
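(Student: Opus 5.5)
The plan is to check each clause of \Cref{prop:lvr} separately: the two factorisations, the solvency criterion, the asymptotic regimes, and the time integral. Every clause reduces to an identity already available in the paper.

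\emph{Privacy side.} Start from \Cref{thm:ct-subsidy}. Its per-increment form gives
\[
\E[(p_t-v)\,dy_t]=-\lambda\,\sige^2\,dt,
\qquad
\lambda=\frac{\sigv}{\sqrt{\sigu^2+\sige^2}},
\]
with $\lambda$ constant by \Cref{thm:ct-equilibrium}. So the instantaneous rate is $\ell^{\mathrm{priv}}=\sige^2\cdot\lambda$. This is exactly (squared noise intensity)$\times$(committed-object factor), with the factor being the committed depth $\lambda$. It is an identity, hence global in $\sige$.

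\emph{LVR side.} Take the general rate $-\tfrac12\sigma^2q^2V''_{\mathrm{AMM}}(q)$ as given from \citet{milionis2022lvr}. For constant product, $V(q)=2\sqrt{kq}$, so
\[
q^2V''(q)=-\tfrac14V(q),
\]
and the rate is $\tfrac{\sigma^2}{8}V$. The factor $V/8$ depends on the curve and the state, not on $\sigma$.

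\emph{Solvency criterion.} In both cases, the cumulative loss of the liquidity layer over $[0,T]$ is $\int\ell\,dt$, and a fee stream $f$ restores nonnegative profit iff $\int f\ge\int\ell$. For the privacy case this is just the zero-sum identity $\Pi_I+\Pi_N+\Pi_M=0$ with fees added to the maker's side, as in the break-even discussion after \Cref{cor:kyle-subsidy}.

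\emph{Asymptotics.} Expand the factor $\sigv/\sqrt{\sigu^2+\sige^2}$:
\begin{itemize}[leftmargin=*]
\item For $\sige\ll\sigu$, it equals $\sigv/\sigu\,(1+O(\sige^2/\sigu^2))$, so $\ell\sim\sigv\sige^2/\sigu$.
\item For $\sige\gg\sigu$, it is $\sim\sigv/\sige$, so $\ell\sim\sigv\sige$.
\end{itemize}
This mirrors \Cref{prop:kyle-shape} up to the Kyle--Back factor of two noted in \Cref{rem:ct-doubling}.

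\emph{Time integral.} Because $\ell^{\mathrm{priv}}$ does not depend on $t$, $\int_0^T\ell\,dt=\ell T$. One caveat: \Cref{rem:ct-doubling} shows that for a general horizon $T$ the equilibrium $\lambda$ itself rescales by $T^{-1/2}$. The constant-rate statement should therefore be read at a fixed equilibrium $\lambda$, or under the unit-horizon normalisation.

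The main obstacle is interpretive rather than computational. The statement says the factorisation is ``exact and global'' while the quadratic reading is only asymptotic. I will make precise that exactness refers to the identity $\ell=\sige^2\lambda(\sige)$, which holds for all $\sige$. Quadraticity, by contrast, needs the factor to be independent of $\sige$, and that fails here. The $T$-scaling caveat needs to be stated explicitly so that ``$\ell^{\mathrm{priv}}\cdot T$'' is not misread as scaling the unit-horizon rate.
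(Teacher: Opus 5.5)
Your proposal is correct and follows essentially the same route as the paper's proof sketch. Like the paper, you read both factorisations off the closed forms ($\ell^{\mathrm{priv}}=\lambda\sige^2$ from \Cref{thm:ct-subsidy}), expand $\sigv/\sqrt{\sigu^2+\sige^2}$ at both ends, and obtain time-constancy from the constant $\lambda$ of \Cref{thm:ct-equilibrium}. Your additions go slightly beyond the paper and are sound: the explicit check $q^2V''=-V/4$ for $V=2\sqrt{kq}$, and the horizon caveat, which is the sharper point. By \Cref{rem:ct-doubling} the equilibrium $\lambda$ scales as $T^{-1/2}$, so ``$\ell^{\mathrm{priv}}\cdot T$'' with the unit-horizon $\ell^{\mathrm{priv}}$ would contradict the $\sqrt T$ cumulative subsidy. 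It holds only within the unit horizon, or with $\lambda$ taken at its horizon-$T$ value.
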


\begin{proof}[Proof sketch]
The factorisations are read off the closed forms above; the LVR rate is from
\citet{milionis2022lvr} and $\ell^{\mathrm{priv}}$ from \Cref{thm:ct-subsidy}. The
asymptotic claims follow by expanding $\sigv/\sqrt{\sigu^2+\sige^2}$ at $\sige\to0$
and $\sige\to\infty$. Constancy of $\ell^{\mathrm{priv}}$ in $t$ is the constancy of
$\lambda$ in \Cref{thm:ct-equilibrium}.
\end{proof}

\subsection{Why the correspondence matters}
\label{sec:lvr-why}

\citet{milionis2022lvr} establish LVR as the foundational solvency quantity for
constant-function market makers: such a venue is solvent over a horizon only if
cumulative fee revenue exceeds cumulative LVR. \Cref{thm:impossibility} adds the
order-flow analog: a privacy-aggregated venue is solvent only if cumulative fee
revenue exceeds the cumulative privacy subsidy. LVR governs mismatched \emph{price}
observation; the privacy subsidy governs mismatched \emph{flow} observation. Both
enter a break-even fee inequality of the same form for committed-curve exchanges, and
the correspondence should be read as an organising principle for fee design under
commitment rather than as a numerical equivalence. The deadweight that this break-even
fee induces---the genuine social cost flagged in \Cref{rem:gross-net}---is quantified
in \Cref{sec:fee}.

\section{The welfare cost of the break-even fee}
\label{sec:fee}

\Cref{rem:gross-net} reduced the welfare question to the cost of the fee that funds
the subsidy. We add an explicit allocative value of trade so that a fee suppressing
trade creates a genuine deadweight, and work in the single-period Kyle model of
\Cref{sec:kyle}.

\paragraph{Extended model.} A proportional fee $\tau$ (per unit volume, revenue to the
liquidity pool) is levied. Liquidity traders carry heterogeneous per-unit
gains-from-trade, so realised liquidity volume is downward-sloping in the fee,
$U(\tau)=U(0)(1-\kappa\tau)+O(\tau^2)$ with semi-elasticity $\kappa>0$ and
$U(0)=\sqrt{2/\pi}\,\sigu$; the resulting lost surplus is the Harberger triangle.
The informed trader, maximising $\E[(v-p)x-\tau|x|\mid v]$, follows a
\emph{censored-linear} strategy: $x^*=(v-p_0\mp\tau)/(2\lambda_0)$ for $|v-p_0|>\tau$
and $x^*=0$ otherwise (the depth is held at its no-fee value $\lambda_0$ in the
partial-equilibrium regime). Informed trades carry no allocative value, so suppressed
informed volume is a transfer shift, not a deadweight.

\begin{lemma}[Harberger deadweight]
\label{lem:harberger}
The liquidity-trader surplus lost to a fee $\tau$ is
$\mathrm{DWL}_{LT}(\tau)=\int_0^\tau g\,(-U'(g))\,dg=\tfrac12\kappa\,U(0)\,\tau^2+O(\tau^3)$,
the Harberger triangle (derived, not assumed).
\end{lemma}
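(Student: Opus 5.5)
The plan is to derive the deadweight formula from the standard heterogeneous-reservation-value construction rather than posit it. I will model the liquidity traders as a continuum of potential orders, each with a per-unit gain-from-trade $g\ge0$. A trader with gain $g$ submits his order iff $g\ge\tau$, that is, iff his gain covers the fee. The realised volume is then the mass of orders with gain at least $\tau$,
\[
U(\tau)=\int_{\tau}^{\infty} dU_g ,
\]
so that $-U'(g)\,dg$ is the mass of volume whose gain lies in $[g,g+dg]$. This density is nonnegative because $U$ is downward-sloping. The normalisation $U(0)=\sqrt{2/\pi}\,\sigu=\E|u|$ identifies the fee-free liquidity volume with the Kyle noise-trader volume of \Cref{sec:kyle}.

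\textbf{Step 1: identify the lost surplus.} Orders with $g\ge\tau$ still trade; they pay $\tau$ per unit, and that payment is revenue to the pool, so it is a transfer and not lost surplus. Orders with $g<\tau$ are priced out, and each forgoes its allocative gain $g$. Informed volume is excluded because it carries no allocative value, as stated in the extended model. The social loss is therefore
\[
\mathrm{DWL}_{LT}(\tau)=\int_0^\tau g\,\big(-U'(g)\big)\,dg .
\]
This is exactly the expression in the lemma, obtained as gains of the excluded units rather than assumed.

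\textbf{Step 2: expand.} From $U(\tau)=U(0)(1-\kappa\tau)+O(\tau^2)$, and assuming $U$ is $C^2$ near $0$ (the expansion is read as a Taylor expansion), I get $-U'(g)=\kappa U(0)+O(g)$ uniformly on $[0,\tau]$. Substituting,
\[
\mathrm{DWL}_{LT}(\tau)=\int_0^\tau \big(\kappa U(0)\,g+O(g^2)\big)\,dg=\tfrac12\kappa U(0)\tau^2+O(\tau^3).
\]
As a cross-check, integrating by parts gives $\int_0^\tau g(-U')\,dg=\int_0^\tau (U(g)-U(\tau))\,dg$, which is the triangle between the demand curve and the horizontal line at $U(\tau)$. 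This is the Harberger picture directly, and it yields the same leading term.

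\textbf{Main obstacle.} The only real subtlety is regularity. The stated $O(\tau^2)$ remainder on $U$ does not by itself bound $U'$ pointwise. I will therefore work with the integration-by-parts form $\int_0^\tau (U(g)-U(\tau))\,dg$, which uses only the expansion of $U$ itself and not of $U'$. With $U(g)-U(\tau)=\kappa U(0)(\tau-g)+O(\tau^2)$ for $g\in[0,\tau]$, integrating over an interval of length $\tau$ gives $\tfrac12\kappa U(0)\tau^2+O(\tau^3)$. This route requires only that the remainder in the expansion of $U$ be $O(\tau^2)$ uniformly on $[0,\tau]$, which holds for a locally $C^2$ (or merely $C^{1,1}$) demand curve.
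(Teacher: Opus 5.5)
Your proposal is correct and follows the paper's own derivation. Both arguments use heterogeneous per-unit gains, count the excluded volume in $[g,g+dg]$ with mass $-U'(g)\,dg$ and forgone surplus $g$ per unit, integrate to $\int_0^\tau g\,(-U'(g))\,dg$, and expand using $-U'(g)=\kappa U(0)+O(g)$.

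Your integration-by-parts form $\int_0^\tau \big(U(g)-U(\tau)\big)\,dg$ is a useful refinement that the paper lacks. The paper passes straight to $-U'(g)=\kappa U(0)+O(g)$, which the stated $O(\tau^2)$ remainder on $U$ does not by itself justify. Your route needs only the stated expansion of $U$, read as a remainder bound $|R(g)|\le Cg^2$ for small $g$. That bound already gives $R(g)-R(\tau)=O(\tau^2)$ uniformly on $[0,\tau]$, so the extra $C^2$ or $C^{1,1}$ caveat is unnecessary.
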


\begin{proposition}[Break-even fee]
\label{prop:breakeven}
To leading order the fee that makes the liquidity pool solvent,
$\tau^\ast Q(\tau^\ast)=|\pi_M|$ with $Q(0)=\sqrt{2/\pi}\,(\sigu+\sqrt{\sigu^2+\sige^2})$,
is
\begin{equation}
\tau^\ast=\frac{|\pi_M|}{Q(0)}
=\frac{\sqrt{2\pi}\,\sigv\,\sige^2}{4\sqrt{\sigu^2+\sige^2}\,(\sigu+\sqrt{\sigu^2+\sige^2})}
=\frac{\sqrt{2\pi}\,\sigv}{8\,\sigu^2}\,\sige^2+O(\sige^4).
\end{equation}
\end{proposition}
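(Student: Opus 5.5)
The plan is to reduce the break-even condition $\tau^\ast Q(\tau^\ast)=|\pi_M|$ to a linear equation at leading order, compute the no-fee total volume $Q(0)$, and then substitute the closed-form subsidy of \Cref{cor:kyle-subsidy}.

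\textbf{Volume at zero fee.} Total settled volume is $Q(\tau)=U(\tau)+X(\tau)$, with liquidity volume $U(\tau)=\E|u|\,(1-\kappa\tau)+O(\tau^2)$ and informed volume $X(\tau)=\E|x^\ast|$.

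For liquidity traders, $u\sim N(0,\sigu^2)$ gives $U(0)=\sqrt{2/\pi}\,\sigu$.

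For the informed trader, at $\tau=0$ the equilibrium of \Cref{prop:kyle-eq} gives $x=\beta_0(v-p_0)$. Hence $\E|x|=\beta_0\sigv\sqrt{2/\pi}=\sqrt{2/\pi}\,\sqrt{\sigu^2+\sige^2}$.

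Summing the two pieces yields the stated $Q(0)=\sqrt{2/\pi}\,\bigl(\sigu+\sqrt{\sigu^2+\sige^2}\bigr)$.

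\textbf{Leading-order solution.} Since $Q$ is continuous at $0$, $\tau Q(\tau)=\tau Q(0)+o(\tau)$. The censored-linear strategy changes informed volume by $O(\tau)$: the censoring region has probability $O(\tau)$ and the shift of $x^\ast$ is $O(\tau)$. The liquidity response is $-\kappa\tau U(0)$. So $\tau Q(\tau)=\tau Q(0)+O(\tau^2)$.

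The right-hand side $|\pi_M|=\lambda_0\sige^2$ is $O(\sige^2)$. By the implicit function theorem the root therefore satisfies $\tau^\ast=|\pi_M|/Q(0)+O(\sige^4)$. This is why the proposition says ``to leading order''.

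\textbf{Closed form and expansion.} Substituting $|\pi_M|=\sigv\sige^2/\bigl(2\sqrt{\sigu^2+\sige^2}\bigr)$ and using $1/\sqrt{2/\pi}=\sqrt{\pi/2}=\sqrt{2\pi}/2$ gives
\[
\tau^\ast=\frac{\sqrt{2\pi}\,\sigv\,\sige^2}{4\sqrt{\sigu^2+\sige^2}\,\bigl(\sigu+\sqrt{\sigu^2+\sige^2}\bigr)}.
\]
Expanding the denominator at $\sige\to0$ gives $4\sigu\cdot2\sigu\,(1+O(\sige^2))=8\sigu^2(1+O(\sige^2))$, which is the coefficient $\sqrt{2\pi}\,\sigv/(8\sigu^2)$ with an $O(\sige^4)$ remainder.

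\textbf{Main obstacle.} The delicate step is bounding the $O(\tau^2)$ correction from informed volume, because the censored strategy is not smooth at $\tau=0$. I would compute it directly:
\[
\E|x^\ast|=\frac{1}{2\lambda_0}\,\E\bigl[(|v-p_0|-\tau)^+\bigr].
\]
This expression is differentiable in $\tau$ with derivative $-P(|v-p_0|>\tau)/(2\lambda_0)$, which is bounded. That confirms the correction is $O(\tau)$, so multiplied by $\tau$ it is $O(\tau^2)=O(\sige^4)$.

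Holding $\lambda_0$ fixed, per the partial-equilibrium convention, avoids a feedback loop through depth.
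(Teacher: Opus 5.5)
Your proposal is correct and follows the paper's proof essentially step for step: evaluate total volume at the no-fee equilibrium using the half-normal means $\E|x|=\beta_0\sigv\sqrt{2/\pi}$ and $\E|u|=\sqrt{2/\pi}\,\sigu$, divide $|\pi_M|$ by $Q(0)$, and expand the denominator to $8\sigu^2$. Your extra bound on the fee-induced volume change, via $\E|x^\ast|=\E[(|v-p_0|-\tau)^+]/(2\lambda_0)$ and the liquidity semi-elasticity, is a valid justification that the exact root differs from $|\pi_M|/Q(0)$ only at $O(\sige^4)$; the paper simply asserts this under the phrase ``to leading order''.
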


\begin{theorem}[Net welfare cost of privacy]
\label{thm:netcost}
At the break-even fee, the social cost of privacy is the deadweight
\begin{equation}
\mathrm{DWL}(\tau^\ast)=\tfrac12\kappa\,U(0)\,(\tau^\ast)^2
=\frac{\sqrt{2\pi}\,\kappa\,\sigu\,\sigv^2\,\sige^4}
       {16\,(\sigu^2+\sige^2)\,(\sigu+\sqrt{\sigu^2+\sige^2})^2}
=\frac{\sqrt{2\pi}\,\kappa\,\sigv^2}{64\,\sigu^3}\,\sige^4+O(\sige^6),
\end{equation}
strictly positive and increasing in $\sige$. Since the gross subsidy is
$|\pi_M|=O(\sige^2)$ and $\tau^\ast=O(\sige^2)$, the net cost is
$\mathrm{DWL}(\tau^\ast)=O(\sige^4)$, with ratio
$\mathrm{DWL}(\tau^\ast)/|\pi_M|=O(\sige^2)\to0$.
\emph{Privacy is welfare-neutral to second order in the noise scale; the
irrecoverable allocative loss is fourth-order.}
\end{theorem}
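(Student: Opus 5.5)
The proof composes \Cref{lem:harberger} with \Cref{prop:breakeven}. The deadweight is attributed entirely to liquidity traders, since suppressed informed volume carries no allocative value and is a transfer shift. So the social cost at the break-even fee is $\mathrm{DWL}_{LT}(\tau^\ast)=\tfrac12\kappa U(0)(\tau^\ast)^2+O((\tau^\ast)^3)$. I then substitute $U(0)=\sqrt{2/\pi}\,\sigu$ and the closed form
\[
\tau^\ast=\frac{\sqrt{2\pi}\,\sigv\sige^2}{4\sqrt{\sigu^2+\sige^2}\,(\sigu+\sqrt{\sigu^2+\sige^2})}.
\]
Squaring gives $(\tau^\ast)^2=2\pi\sigv^2\sige^4/\bigl(16(\sigu^2+\sige^2)(\sigu+\sqrt{\sigu^2+\sige^2})^2\bigr)$. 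Multiplying by $\tfrac12\kappa\sqrt{2/\pi}\,\sigu$, and using $\sqrt{2/\pi}\cdot 2\pi/2=\sqrt{2\pi}$, yields the stated exact expression. The small-noise expansion follows by setting $\sigu^2+\sige^2=\sigu^2(1+O(\sige^2))$ and $\sigu+\sqrt{\sigu^2+\sige^2}=2\sigu(1+O(\sige^2))$. This gives the leading coefficient $\sqrt{2\pi}\kappa\sigv^2/(16\cdot 4\sigu^3)=\sqrt{2\pi}\kappa\sigv^2/(64\sigu^3)$, with an $O(\sige^6)$ remainder.

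Positivity is immediate for $\sige>0$, since every factor in the expression is positive. For monotonicity I would write the deadweight as a constant times $h(s)^2$ with $s=\sige^2$ and $h(s)=s/\bigl(\sqrt{\sigu^2+s}\,(\sigu+\sqrt{\sigu^2+s})\bigr)$. Setting $r=\sqrt{\sigu^2+s}$, so that $s=r^2-\sigu^2=(r-\sigu)(r+\sigu)$, gives $h=(r-\sigu)/r=1-\sigu/r$. This is strictly increasing in $r$, hence in $\sige$. It follows that $\tau^\ast$ and $\mathrm{DWL}(\tau^\ast)$ are both increasing. This simplification also makes the closed form transparent.

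The order comparison then combines \Cref{prop:kyle-shape}, which gives $|\pi_M|=\tfrac{\sigv}{2\sigu}\sige^2+O(\sige^4)$, with the quartic leading term above. The ratio is therefore $\mathrm{DWL}(\tau^\ast)/|\pi_M|=\tfrac{\sqrt{2\pi}\kappa\sigv}{32\sigu^2}\sige^2+O(\sige^4)\to0$.

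The main obstacle is justifying that the $O(\tau^3)$ remainder in \Cref{lem:harberger} and the leading-order nature of \Cref{prop:breakeven} do not contaminate the fourth-order claim. Because $\tau^\ast=O(\sige^2)$, the Harberger remainder is $O(\sige^6)$. The correction to $\tau^\ast$ from volume response, $Q(\tau)=Q(0)(1+O(\tau))$, shifts $\tau^\ast$ by $O(\sige^4)$, and hence shifts $(\tau^\ast)^2$ by $O(\sige^6)$. Both errors are absorbed into the remainder. I would state the exact displayed formula as the leading-order deadweight evaluated at the leading-order fee, consistent with the partial-equilibrium convention of holding depth at $\lambda_0$.
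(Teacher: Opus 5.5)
Your proposal is correct and follows the paper's proof: compose \Cref{lem:harberger} with \Cref{prop:breakeven}, square $\tau^\ast$, multiply by $\tfrac12\kappa U(0)$ with $U(0)=\sqrt{2/\pi}\,\sigu$, and expand as $\sige\downarrow0$ to get the coefficient $\sqrt{2\pi}\,\kappa\sigv^2/(64\sigu^3)$ and the $O(\sige^2)$ ratio. Where you differ, you are more careful than the paper: its monotonicity argument (``each factor of $\tau^\ast$ is'' increasing) is not literally true, since the denominator factors $\sqrt{\sigu^2+\sige^2}$ and $\sigu+\sqrt{\sigu^2+\sige^2}$ also grow with $\sige$, whereas your identity $\tau^\ast=\tfrac{\sqrt{2\pi}\,\sigv}{4}\bigl(1-\sigu/\sqrt{\sigu^2+\sige^2}\bigr)$ settles monotonicity and also gives the saturation value used in \Cref{rem:corner}; your tracking of the $O(\tau^3)$ Harberger remainder and of the $O(\sige^4)$ volume-response shift in $\tau^\ast$ makes explicit what the paper drops without comment.
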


\begin{proposition}[CARA robustness]
\label{prop:cara}
Replacing the reduced-form linear demand with CARA hedgers (risk aversion $\gamma$,
endowment dispersion $\sigma_h=\sigu$) gives the same leading-order linear demand,
optimal hedge $s^*(e)=e-(\tau/(\gamma\sigv^2))\,\mathrm{sgn}(e)$ for
$|e|>\tau/(\gamma\sigv^2)$ (and $0$ otherwise), and deadweight
$\mathrm{DWL}(\tau^\ast)=\pi\,\sige^4/(64\,\gamma\,\sigu^4)=O(\sige^4)$. This is exactly
the reduced-form constant of \Cref{thm:netcost} evaluated at the microfounded
semi-elasticity $\kappa=\sqrt{\pi/2}/(\gamma\sigu\sigv^2)$ (equivalently the volume
slope $\kappa U(0)=1/(\gamma\sigv^2)$): the fourth-order scaling is unchanged and the
two microfoundations agree exactly, only reparametrising the constant.
\end{proposition}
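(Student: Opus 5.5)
The plan is to microfound the liquidity-trader demand of \Cref{sec:fee} with a CARA hedger. We then check that it reproduces the reduced form $U(\tau)=U(0)(1-\kappa\tau)+O(\tau^2)$ at an explicit $\kappa$, and substitute that $\kappa$ into \Cref{thm:netcost}. The informed side and the break-even fee of \Cref{prop:breakeven} are left untouched. Only the liquidity side changes, so $\tau^\ast=\frac{\sqrt{2\pi}\,\sigv}{8\sigu^2}\sige^2+O(\sige^4)$ carries over verbatim.

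\emph{Step 1 (optimal hedge).} A hedger holds an endowment exposure $e\sim N(0,\sigma_h^2)$ with $\sigma_h=\sigu$ and trades $s$. At leading order the execution price is centred on $v$: price impact enters only at higher order in $\tau$, consistent with the partial-equilibrium treatment of depth. The hedger's certainty equivalent is
\[
-\tfrac{\gamma}{2}\sigv^2(e-s)^2-\tau|s|.
\]
This is strictly concave with a kink at $s=0$. Its subgradient condition is a soft threshold: $s^*(e)=e-c\,\mathrm{sgn}(e)$ for $|e|>c$ and $s^*=0$ otherwise, with $c:=\tau/(\gamma\sigv^2)$. That is the stated hedge rule.

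\emph{Step 2 (demand slope and $\kappa$).} Liquidity volume is $U(\tau)=\E|s^*|=\E(|e|-c)^+$, so $U(0)=\sqrt{2/\pi}\,\sigu$. Differentiating gives $U'(\tau)=-P(|e|>c)/(\gamma\sigv^2)$. Since $P(|e|\le c)=O(\tau)$, we get $U(\tau)=U(0)-\tau/(\gamma\sigv^2)+O(\tau^2)$. Hence $\kappa U(0)=1/(\gamma\sigv^2)$, and therefore $\kappa=\sqrt{\pi/2}/(\gamma\sigu\sigv^2)$. This is the same linear leading-order demand as in \Cref{sec:fee}.

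\emph{Step 3 (deadweight).} We compute the welfare loss directly from preferences rather than through \Cref{lem:harberger}, treating fee revenue $\tau|s^*|$ as a transfer to the pool. The social loss per hedger is then the increase in residual risk cost, $\tfrac{\gamma}{2}\sigv^2\min(|e|,c)^2$. Taking expectations gives
\[
\tfrac{\gamma}{2}\sigv^2c^2\,(1-O(\tau))=\frac{\tau^2}{2\gamma\sigv^2}+O(\tau^3).
\]
This coincides with the Harberger triangle $\tfrac12\kappa U(0)\tau^2$ of \Cref{lem:harberger}, which cross-checks the microfoundation. Substituting $\tau^\ast$,
\[
\frac{1}{2\gamma\sigv^2}\cdot\frac{2\pi\sigv^2\sige^4}{64\sigu^4}=\frac{\pi\,\sige^4}{64\,\gamma\,\sigu^4}.
\]
Evaluating \Cref{thm:netcost}'s constant $\sqrt{2\pi}\,\kappa\sigv^2/(64\sigu^3)$ at this $\kappa$ gives the same value, because $\sqrt{2\pi}\sqrt{\pi/2}=\pi$. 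So the two microfoundations agree at leading order and the $O(\sige^4)$ scaling is preserved.

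The main obstacle is Step 3: justifying that the certainty-equivalent loss, with fee revenue netted out as a transfer, is the right welfare object, and that it equals the Harberger integral to the stated order. The delicate points are two. First, the censored mass $|e|\le c$ contributes only $O(\tau^3)$. Second, ignoring the $\lambda_0$ price-impact cost in the hedger's problem perturbs neither the slope nor the triangle at order $\tau^2$. I would handle both by an explicit Gaussian expansion of $\E[\min(|e|,c)^2]$ and of $\E(|e|-c)^+$ around $c=0$. I would read ``exactly'' in the statement as equality of the leading-order constants.
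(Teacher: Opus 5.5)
Your Steps 1--2 and the final substitution match the paper's proof. That covers the soft-threshold hedge, the slope $-U'(0)=1/(\gamma\sigv^2)$ pinning $\kappa=\sqrt{\pi/2}/(\gamma\sigu\sigv^2)$, the unchanged $\tau^\ast$ (since $U(0)=\E|e|=\sqrt{2/\pi}\,\sigu$), and the check $\sqrt{2\pi}\sqrt{\pi/2}=\pi$. Your reading of ``exactly'' as equality of leading-order constants is also the paper's.

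The genuine difference is Step~3. The paper gets the deadweight by feeding the CARA slope into the Harberger integral of \Cref{lem:harberger}. You instead compute it from preferences, as the residual risk cost $\tfrac{\gamma}{2}\sigv^2\E[\min(|e|,c)^2]$ with fee revenue netted out as a transfer.

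The paper's route is shorter and shows that only the demand slope needs microfounding. Yours checks that the triangle is the right welfare object here, which is not automatic. CARA hedgers respond on the intensive margin (each shaves $c$ off its hedge), not through the extensive-margin exit story used to derive \Cref{lem:harberger}.

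In fact the two coincide identically in $\tau$, not just at order $\tau^2$. By the envelope theorem the hedger's value has $\tau$-derivative $-|s^\ast|$, so the hedgers' aggregate loss is $\int_0^\tau U(g)\,dg$. Subtracting fee revenue $\tau U(\tau)$ and integrating by parts gives $\int_0^\tau g\,(-U'(g))\,dg$. Your $O(\tau^3)$ bookkeeping is also more careful than the paper's. The paper writes the triangle as exactly $\tau^2/(2\gamma\sigv^2)$, although $-U'(g)=P(|e|>g/(\gamma\sigv^2))/(\gamma\sigv^2)$ is not constant.

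One point to drop rather than prove: the remark that price impact ``enters only at higher order in $\tau$,'' and the plan to show that omitting $\lambda_0$ perturbs neither the slope nor the triangle at order $\tau^2$. That step would fail. A Kyle-type expected impact cost $\lambda_0 s^2$ does not depend on $\tau$. Adding it to the objective gives $s^\ast=(\gamma\sigv^2 e-\tau\,\mathrm{sgn}(e))/(\gamma\sigv^2+2\lambda_0)$ above the same threshold. Both $U(0)$ and the slope (now $1/(\gamma\sigv^2+2\lambda_0)$) then change at order one.

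You do not need this step. The hedge rule in the statement already fixes the objective $-\tfrac{\gamma}{2}\sigv^2(e-s)^2-\tau|s|$, which is exactly what the paper posits. Treat it as the CARA specification (hedgers taking the price as given), not as an approximation requiring justification.
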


\noindent\textit{Caveats.} The closed forms are leading-order in $\sige$
(partial-equilibrium: $\lambda$ held at $\lambda_0$, linear censoring/demand, fee
revenue rebated as in \Cref{rem:gross-net}); the deadweight counts only lost
liquidity-trader surplus; the fee-base convention affects only the constant.

\section{Endogenous privacy: a mechanism-design view}
\label{sec:mechanism}

We now let the protocol choose the privacy scale $\sige$, with the fee set at the
break-even level of \Cref{prop:breakeven} (so the subsidy is a pure transfer and the
deadweight of \Cref{thm:netcost} is the only residual welfare term). Privacy is
valued through the differential-privacy budget of the Gaussian mechanism,
$\edp(\sige)=c/\sige$ with $c=\Ddp\sqrt{2\ln(1.25/\delta)}$ (Gaussian-mechanism
query sensitivity $\Ddp$, distinct from the Glosten--Milgrom value gap $\Delta$ of
\Cref{sec:gm}; target $\delta$): a larger $\sige$ means a smaller budget, hence more
privacy. This is the classical (boundary) Gaussian-mechanism calibration
\citep{dwork2014algorithmic}, valid in the high-privacy regime $\edp<1$
(equivalently $\sige>c$), which we assume contains the optimum. The protocol maximises
$W(\sige)=B(\sige)-\mathrm{DWL}(\tau^\ast(\sige))$
over $\sige\ge0$, where $B$ is the (increasing, concave) privacy benefit; we take
the DP-grounded $B(\sige)=-b\,\edp(\sige)=-bc/\sige$.

\begin{theorem}[Optimal privacy level]
\label{thm:optsigma}
Write $A=\sqrt{2\pi}\,\kappa\,\sigv^2/(64\,\sigu^3)$ (the leading deadweight
coefficient, $\mathrm{DWL}\approx A\,\sige^4$). The leading-order objective
$W(\sige)=-bc/\sige-A\,\sige^4$ is strictly concave on $\sige>0$ with $W\to-\infty$ at
both ends, so the first-order condition
$B'(\sige^\ast)=\mathrm{DWL}'(\sige^\ast)=4A\,\sige^{\ast3}$ has a \emph{unique interior
solution}
\begin{equation}
\sige^\ast=\Big(\frac{bc}{4A}\Big)^{1/5}
=\frac{2^{4/5}\,\Ddp^{1/5}\,b^{1/5}\,\sigu^{3/5}\,[\ln(5/(4\delta))]^{1/10}}
       {\pi^{1/10}\,\kappa^{1/5}\,\sigv^{2/5}},
\end{equation}
the small-noise ($\sige\ll\sigu$) optimizer, exact as $bc\to0$. The log-elasticities
are constant:
\[
\frac{\mathrm{d}\ln\sige^\ast}{\mathrm{d}\ln(b,\kappa,\sigu,\sigv,\Ddp)}
=\Big(\tfrac15,-\tfrac15,\tfrac35,-\tfrac25,\tfrac15\Big),
\qquad
\frac{\mathrm{d}\ln\sige^\ast}{\mathrm{d}\ln\delta}=-\frac{1}{10\ln(5/(4\delta))}<0.
\]
\end{theorem}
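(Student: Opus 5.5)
The plan is to work entirely with the leading-order objective $W(\sige)=-bc/\sige-A\,\sige^4$. Here $A\,\sige^4$ is the leading term of $\mathrm{DWL}(\tau^\ast)$ from \Cref{thm:netcost}, and $-bc/\sige$ is the benefit $B(\sige)=-b\,\edp(\sige)$ with $\edp=c/\sige$. The argument is elementary calculus on $(0,\infty)$, carried out in four steps: global shape, first-order condition, closed form, and elasticities.

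\emph{Shape.} Differentiate twice: $W'(\sige)=bc/\sige^2-4A\sige^3$ and $W''(\sige)=-2bc/\sige^3-12A\sige^2$. The second derivative is strictly negative on $\sige>0$ because $b,c,A>0$; positivity of $A$ uses $\kappa>0$ and $\sigu,\sigv>0$. For the boundary behaviour, $-bc/\sige\to-\infty$ as $\sige\downarrow0$ while the quartic term stays bounded, and $-A\sige^4\to-\infty$ as $\sige\to\infty$ while $-bc/\sige\to0$. So $W$ is strictly concave, continuous, and tends to $-\infty$ at both ends. It therefore has a unique maximiser, which is interior and characterised by $W'=0$. Equivalently, since $W'$ is strictly decreasing from $+\infty$ to $-\infty$, it has exactly one root.

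\emph{FOC and closed form.} The condition $bc/\sige^2=4A\sige^3$ gives $\sige^{\ast5}=bc/(4A)$. I then substitute $A=\sqrt{2\pi}\,\kappa\sigv^2/(64\sigu^3)$ and $c=\Ddp\sqrt{2\ln(1.25/\delta)}$, and rewrite $1.25/\delta=5/(4\delta)$. The factors $\sqrt2$ cancel, leaving
\[
\frac{bc}{4A}=\frac{16\,b\,\Ddp\,\sigu^3\,[\ln(5/(4\delta))]^{1/2}}{\sqrt\pi\,\kappa\,\sigv^2}.
\]
Taking fifth roots, $16^{1/5}=2^{4/5}$, $\pi^{-1/2\cdot1/5}=\pi^{-1/10}$, and the log factor becomes $[\ln(5/(4\delta))]^{1/10}$, which is the displayed formula. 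The log-elasticities follow directly from the power-law form: exponents $\tfrac15,-\tfrac15,\tfrac35,-\tfrac25,\tfrac15$ for $b,\kappa,\sigu,\sigv,\Ddp$. For $\delta$,
\[
\frac{\mathrm d\ln\sige^\ast}{\mathrm d\ln\delta}=\tfrac1{10}\,\frac{\mathrm d\ln\ln(5/(4\delta))}{\mathrm d\ln\delta}=-\frac{1}{10\ln(5/(4\delta))},
\]
which is negative because $\delta<1<5/4$ makes the logarithm positive.

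\emph{Main obstacle.} The delicate point is not the calculus but the justification of the qualifiers ``leading-order'' and ``exact as $bc\to0$''. Two things must be checked. First, the true objective uses the exact deadweight of \Cref{thm:netcost}, $A\sige^4(1+O(\sige^2/\sigu^2))$, so the exact maximiser differs from $\sige^\ast$ by a relative error $O(\sige^{\ast2}/\sigu^2)$. Since $\sige^\ast\propto(bc)^{1/5}\to0$, this error vanishes. I would show this with an implicit-function argument on the rescaled FOC, using the fact that $W''\neq0$ at $\sige^\ast$. Second, the Gaussian-mechanism calibration requires $\sige^\ast>c$, i.e.\ $\edp<1$. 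Because $\sige^\ast\sim(bc)^{1/5}$ dominates $c$ as $c\to0$, this holds in the same limit; otherwise it is the stated standing assumption that the optimum lies in the high-privacy regime. I would state both checks explicitly and keep the concavity and uniqueness claims attached to the leading-order $W$, as the theorem does.
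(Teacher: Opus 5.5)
Your proof is correct and follows the paper's own route: strict concavity of the leading-order $W$ (you compute $W''$ directly, while the paper argues via concave $B$ and convex $A\sige^4$), $W\to-\infty$ at both ends, the first-order condition $4A\sige^{\ast5}=bc$, substitution of $A$ and $c$ (you cancel the $\sqrt2$ before taking fifth roots, while the paper combines $2^{7/10}\cdot2^{1/10}=2^{4/5}$ afterwards), and log-differentiation for the elasticities. Your two extra checks go beyond the paper, which asserts exactness as $bc\to0$ and assumes $\sige^\ast>c$ without proof; note that $\sige^\ast>c$ holds automatically only when $bc\to0$ is driven by $c\to0$ (with $c$ fixed and $b\to0$ it fails), so it has to stay a standing assumption rather than something you derive.
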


\begin{remark}[Maximal-privacy corner under the exact deadweight]
\label{rem:corner}
The leading-order quartic deadweight $A\sige^4$ is unbounded, so the optimum above is
always interior. The \emph{exact} deadweight of \Cref{thm:netcost} is bounded---it
saturates at $\mathrm{DWL}(\sige)\to\sqrt{2\pi}\,\kappa\sigu\sigv^2/16$ as
$\sige\to\infty$---so under it the protocol weighs the interior optimum against the
maximal-privacy corner $\sige\to\infty$. The interior optimum dominates iff
$bc<\sqrt{2\pi}\,\kappa\,\sigu^2\sigv^2/8$; above that threshold maximal privacy
(the corner) is optimal. We work throughout in the small-noise regime
$bc\ll\sqrt{2\pi}\,\kappa\,\sigu^2\sigv^2/8$, where the leading-order
$\sige^\ast=(bc/4A)^{1/5}$ is both accurate and interior.
\end{remark}

\begin{proposition}[Endogenising privacy unpins the co-movement elasticity]
\label{prop:comovement}
For \emph{exogenous} $\sige$, price impact and informed intensity co-move
reciprocally, $\lambda(\sige)\beta(\sige)=\tfrac12$ with
$\mathrm{d}\ln\lambda/\mathrm{d}\ln\sigv=+1$. For \emph{endogenous}
$\sige=\sige^\ast$, the total derivative gains a policy-feedback channel,
$\mathrm{d}\lambda^\ast/\mathrm{d}\sigv=\partial_{\sigv}\lambda+\partial_{\sige}\lambda\cdot(\mathrm{d}\sige^\ast/\mathrm{d}\sigv)$;
with $\theta:=\sige^{\ast2}/\sigu^2$,
\begin{equation}
\frac{\mathrm{d}\ln\lambda^\ast}{\mathrm{d}\ln\sigv}=\frac{5+7\theta}{5(1+\theta)}\in(1,\tfrac75),
\qquad
\frac{\mathrm{d}\ln\beta^\ast}{\mathrm{d}\ln\sigv}=-\frac{5+7\theta}{5(1+\theta)}.
\end{equation}
The reciprocal product $\lambda^\ast\beta^\ast=\tfrac12$ is \emph{preserved} (the
elasticities still sum to zero), so the co-movement itself does not break; what
changes is its strength. The $\sigv$-elasticity is unpinned from the textbook value
$1$ and rises to $1+\tfrac{2\theta}{5(1+\theta)}$, a strictly positive
amplification---about $13$--$20\%$ at empirically relevant $\theta\in[0.5,1]$, increasing
monotonically in the policy weight $\theta$ toward a $40\%$ ($\tfrac75$) ceiling that is
only the formal $\theta\to\infty$ limit, where the leading-order closed form
$\sige^\ast=(bc/4A)^{1/5}$ is itself an approximation. The effect is leading-order in
$\theta$ and rests on $\kappa$ being an exogenous demand primitive
(\Cref{rem:cara-comovement}), but its direction is unambiguous---$\lambda$ and $\beta$
track primitives, not a single exogenous noise level.
\end{proposition}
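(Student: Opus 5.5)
The plan is to work directly with the closed forms of \Cref{prop:kyle-eq} and the comparative statics of \Cref{thm:optsigma}. I will not re-solve any equilibrium. The exogenous-$\sige$ statement is immediate. By \eqref{eq:kyle-eq},
\[
\ln\lambda=\ln\sigv-\ln 2-\tfrac12\ln(\sigu^2+\sige^2),
\qquad
\beta=1/(2\lambda),
\]
so $\lambda\beta=\tfrac12$ identically. With $\sige$ held fixed, $\partial\ln\lambda/\partial\ln\sigv=1$.

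For the endogenous case I substitute $\sige=\sige^\ast(\sigv,\dots)$ and apply the chain rule in logarithms:
\[
\frac{\mathrm d\ln\lambda^\ast}{\mathrm d\ln\sigv}
=\frac{\partial\ln\lambda}{\partial\ln\sigv}
+\frac{\partial\ln\lambda}{\partial\ln\sige}\cdot\frac{\mathrm d\ln\sige^\ast}{\mathrm d\ln\sigv}.
\]
The first term is $1$. Differentiating the display above gives the second factor,
\[
\frac{\partial\ln\lambda}{\partial\ln\sige}=-\frac{\sige^2}{\sigu^2+\sige^2}=-\frac{\theta}{1+\theta},
\qquad \theta=\frac{\sige^{\ast2}}{\sigu^2}.
\]
The policy elasticity is the constant $\mathrm d\ln\sige^\ast/\mathrm d\ln\sigv=-\tfrac25$ read off \Cref{thm:optsigma}. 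This holds with $b,\kappa,\sigu,\Ddp,\delta$ fixed, which is where I use that $\kappa$ is an exogenous primitive not depending on $\sigv$.

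Multiplying the three pieces gives
\[
\frac{\mathrm d\ln\lambda^\ast}{\mathrm d\ln\sigv}=1+\frac{2\theta}{5(1+\theta)}=\frac{5+7\theta}{5(1+\theta)}.
\]
Since $\beta^\ast=1/(2\lambda^\ast)$ holds along the endogenous path as well, $\ln\beta^\ast=-\ln2-\ln\lambda^\ast$. Hence the $\beta$-elasticity is the negative of the $\lambda$-elasticity and the product stays $\tfrac12$.

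For the range, $\theta\mapsto 2\theta/(5(1+\theta))$ is strictly increasing on $(0,\infty)$, equals $0$ at $\theta=0$, and tends to $\tfrac25$ as $\theta\to\infty$. Since $\theta>0$ whenever $bc>0$, the elasticity lies in $(1,\tfrac75)$. Evaluating at $\theta=\tfrac12$ and $\theta=1$ gives amplifications of $\tfrac{2}{15}\approx13\%$ and $\tfrac15=20\%$, which are the quoted figures.

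The main obstacle is consistency of orders rather than algebra. The $-\tfrac25$ elasticity comes from the leading-order optimizer $\sige^\ast=(bc/4A)^{1/5}$, which is valid for $\sige\ll\sigu$, that is, small $\theta$. Meanwhile the factor $\theta/(1+\theta)$ uses the exact $\lambda$. I would state explicitly that the formula is exact given the closed form of \Cref{thm:optsigma}, and is leading-order in $\theta$ relative to the exact optimizer under the full deadweight of \Cref{thm:netcost}. The $\tfrac75$ ceiling should therefore be read only as a formal limit, which is the caveat recorded in the statement.
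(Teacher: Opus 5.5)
Your proposal is correct and follows the paper's proof essentially step for step: the same logarithmic chain rule with $\partial\ln\lambda/\partial\ln\sigv=1$, $\partial\ln\lambda/\partial\ln\sige=-\theta/(1+\theta)$, and the policy elasticity $-\tfrac25$ from \Cref{thm:optsigma}, which together give $(5+7\theta)/(5(1+\theta))$. The only difference is cosmetic: you get the $\beta^\ast$-elasticity from the identity $\beta^\ast=1/(2\lambda^\ast)$, whereas the paper recomputes its partials, and your route makes the preservation of $\lambda^\ast\beta^\ast=\tfrac12$ immediate.
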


\begin{remark}[When the unpinning holds: $\kappa$ as a demand primitive]
\label{rem:cara-comovement}
The amplification depends on how the demand semi-elasticity $\kappa$ co-varies with
asset volatility. If $\kappa\propto\sigv^{\,n}$ then $A\propto\sigv^{\,n+2}$ and
$\mathrm{d}\ln\sige^\ast/\mathrm{d}\ln\sigv=-(n+2)/5$, so the price-impact elasticity
becomes $1+\tfrac{(n+2)\theta}{5(1+\theta)}$. The amplification band $(1,\tfrac75)$ is
the $n=0$ case: a $\kappa$ fixed by liquidity-trader gains-from-trade
heterogeneity (a property of noise traders, not of $\sigv$), the natural
reduced-form reading. The CARA microfoundation of \Cref{prop:cara} instead pins
$\kappa=\sqrt{\pi/2}/(\gamma\sigu\sigv^2)\propto\sigv^{-2}$, the knife-edge $n=-2$:
there $A$ is $\sigv$-independent, $\mathrm{d}\ln\sige^\ast/\mathrm{d}\ln\sigv=0$, and
the elasticity returns to the textbook $1$. CARA thus certifies the fourth-order
net-cost \emph{scaling} (\Cref{prop:cara}), its intended role, but mutes the
co-movement amplification, a feature of $\sigv$-independent reduced-form demand.
\end{remark}

\section{Applications to privacy-preserving exchanges}
\label{sec:applications}

The preceding sections developed the privacy subsidy as a property of an
abstract committed-Bayesian maker. We now map it onto concrete
privacy-preserving exchange designs. The central object is the
\emph{signal-versus-settlement gap}: a venue instantiates our channel only if
it prices the liquidity layer on a noise-coarsened observation of order flow
while settling the true flow. We state this gap as an idealised
normative benchmark, classify existing designs by whether they meet it, and
attach orders of magnitude to the BTC/USDT case.

\subsection{The signal-versus-settlement gap as a normative benchmark}
\label{sec:apps-gap}

Every conclusion of \Cref{thm:impossibility} rests on a single structural
asymmetry: the priced-on signal $S$ is a strict coarsening of the
\emph{settled} flow $y$, so that $y$ is not $\sigma(S)$-measurable and the
executed-price wedge covariance $\Cov(v-Q,\,y)$ is strictly positive (specialising
to $\Cov(v-\E[v\mid S],\,y-\E[y\mid S])$ when a single price clears both sides). When the two coincide---when the venue settles exactly the quantity
it prices on, $S=y$---the wedge \eqref{eq:wedge} collapses to zero and there is
no subsidy. The subsidy is therefore not a generic consequence of adding noise
to an exchange; it is a consequence of adding noise \emph{to the maker's
information about flow while leaving the executed flow intact}.

This is an idealised benchmark, not a description of any
fielded system. A faithful instance requires a clean separation between an
\emph{observation} step (on which the maker's price update is computed, and into
which privacy noise is injected) and a \emph{settlement} step (in which the
trader's true order fills). The benchmark says: \emph{if} a protocol can price
on the coarsened observation while honouring the true order at the posted quote,
\emph{then} it transfers the executed-price wedge $|\Pi_M|=-\Cov(v-Q,\,y)$ (the
single-price form $-\Cov(v-\E[v\mid S],\,y-\E[y\mid S])$ when one price clears both
sides) from the liquidity layer to traders, and the comparative statics of
\Cref{sec:kyle,sec:gm} apply verbatim. The benchmark is silent on whether the
separation is implementable at acceptable cryptographic cost; it tells a protocol
designer what the separation is \emph{worth}.

\paragraph{What does \emph{not} instantiate the channel.}
The distinction is sharp enough to rule out a leading prior design. \citet{chitra2022dpcfmm}
inject differential-privacy noise into a constant-function market maker by
randomising the \emph{quantity that executes}: their Uniform Random Execution
perturbs the trade size that is actually settled against the pool, so that the
flow the maker prices on and the flow it settles are the \emph{same}
noise-perturbed quantity. In our notation this is the configuration $S=y$
(priced-on equals settled-against), which by \Cref{thm:impossibility} lies in
the no-subsidy regime: the coarsening covariance vanishes because there is no
coarsening, only a common perturbation of a single flow. Their mechanism delivers
genuine differential privacy and is valuable on its own terms; it simply does not
instantiate the subsidy channel of this paper, because the privacy noise is
load-bearing on settlement rather than only on the price signal. Our channel
requires the reverse: noise on the priced signal, true flow on settlement.

\subsection{Closest realisation: a DP-priced shielded constant-function AMM}
\label{sec:apps-cfmm}

The closest realisation of the benchmark is a shielded constant-function AMM
(CFMM) that prices on a differentially private aggregate while settling reserves
at true size. The mapping is exact only under a specific design, and it is worth
being precise about what the gap requires. In a \emph{vanilla} CFMM each order
executes against the bonding curve at its own true size, so the marginal price it
pays already reflects the true flow---there is no coarsening, and $\pi_M=0$. The
wedge appears only when the price an order transacts at is set by an aggregate that
has been \emph{deliberately} coarsened relative to what settles. Concretely,
consider a batch- or oracle-priced shielded pool whose marginal price is recomputed
from the DP-noised running (or batch) net flow $S=\ytil=y+\varepsilon$, while fills
and reserve updates track the true net flow $y$. Then traders transact at the
noised-flow price while the pool's inventory moves by the true flow, so the
liquidity layer bears exactly $|\pi_M|=\lambda_0\sige^2$ of
\eqref{eq:kyle-subsidy}, with $\sige$ the standard deviation of the injected DP
noise and the Gaussian-mechanism reading of \Cref{sec:mechanism} tying it to a
budget $\edp=c/\sige$.

Under this design the closed forms transfer directly: the comparative statics of
\Cref{prop:kyle-shape} (quadratic-then-linear subsidy, inflection at
$\sige^\star=\sqrt2\,\sigu$), the welfare incidence of \Cref{sec:kyle}, and the net
cost of \Cref{thm:netcost}. Two caveats. First, the construction is
a stylized normative benchmark, not a description of a deployed venue: it abstracts
from the cryptographic machinery (shielded reserve commitments and a proof that the
noised price was correctly derived from the true settled flow), whose cost we do not
model. Second, privacy via \emph{aggregation alone}---batching that hides individual
orders but clears at the true net-flow price, with no injected $\varepsilon$---does
\emph{not} instantiate the gap ($\pi_M=0$); the subsidy is a cost of injected noise,
not of batching. Penumbra-style sealed batches (\Cref{sec:apps-batch}) fall in the
latter, out-of-framework, class.

\subsection{MPC matching with \texorpdfstring{$\edp$}{eps}-DP direction disclosure}
\label{sec:apps-mpc}

The discrete Glosten--Milgrom instance of \Cref{sec:gm} maps onto a
multi-party-computation (MPC) matching engine that discloses trade
\emph{direction} under local differential privacy. The engine matches and settles
each order on its true direction $d$, but the direction signal the maker
prices on is passed through a symmetric randomised-response channel
\citep{warner1965rr,dwork2014algorithmic}: the disclosed bit is flipped with
probability
\[
  \eta \;=\; \frac{1}{1+e^{\edp}},
\]
which is exactly the binary symmetric channel attaining $\edp$-local DP on
a one-bit label. This is the configuration of \Cref{thm:gm-spread}: the maker
prices on the flipped direction $\dtil$ (the coarse signal) while the engine
settles the true direction $d$. The per-trade subsidy of \eqref{eq:gm-subsidy}
then reads
\[
  |\pi_M| \;=\; \mu\,\eta\,\Delta \;=\; \frac{\mu\,\Delta}{1+e^{\edp}},
\]
increasing as the privacy budget tightens ($\edp\downarrow$, $\eta\uparrow$)
and vanishing as $\edp\to\infty$ (no privacy, $\eta\to0$). As in
\Cref{sec:apps-cfmm}, the channel is instantiated only because the randomisation
is load-bearing on the \emph{disclosed} direction and not on the
\emph{settled} one; an MPC engine that instead matched on the randomised
direction would settle what it priced on and fall back into the no-subsidy
regime.

\subsection{Out-of-framework designs: batching and sealed-bid venues}
\label{sec:apps-batch}

A second family of privacy-preserving venues---batch-auction and
sealed-bid designs such as Penumbra's batched swaps \citep{penumbra-docs},
\textsc{Suave} \citep{flashbots-suave}, and the MPC dark pool of Renegade
\citep{renegade-wp}---achieves confidentiality by \emph{aggregation} rather than
by signal coarsening, and so lies outside our framework rather than inside its
no-subsidy boundary for a different reason. In a uniform-clearing batch, all
orders in a window settle at a single clearing price computed from the batch
aggregate, and the maker prices on that same aggregate. Priced-on again equals
settled-against ($S=y$ at the batch level), so \Cref{thm:impossibility} assigns
no subsidy: batching hides the \emph{identity and timing} of individual orders
but does not create a wedge between a coarse price signal and a finer settled
flow. These designs deliver privacy through a channel orthogonal to ours, and we
make no claim about their welfare---batch-auction welfare and extractable value are
studied separately \citep{zhang2025mevbatch}; we list them to delimit the framework,
not to evaluate it. The subsidy channel is specific to venues that price a continuous
liquidity layer on a deliberately noised observation while settling the true
order.

\subsection{Break-even fee as a lower bound}
\label{sec:apps-fee}

The per-period subsidy $|\pi_M|$ is the fee a shielded venue must raise to keep
its liquidity layer solvent, but it is a \emph{lower} bound on the true cost,
not the cost itself. The figure $|\pi_M|=\lambda_0\sige^2$ is evaluated at
no-fee equilibrium volumes (\Cref{cor:kyle-subsidy}); once a fee is actually
levied it suppresses volume, so the fee that restores solvency must clear a
smaller base and therefore exceeds the naive ratio $|\pi_M|/Q(0)$. The
fee-equilibrium analysis of \Cref{sec:fee} gives the leading-order break-even
fee $\tau^\ast$ of \Cref{prop:breakeven} and its deadweight of
\Cref{thm:netcost}; this genuine social cost is
fourth-order in $\sige$ while the gross subsidy is second-order, so the
quantities below---computed as gross subsidies---overstate the irrecoverable
welfare loss by an order of magnitude in the small-noise regime.

\subsection{Magnitudes: the BTC/USDT calibration}
\label{sec:apps-btc}

To fix scale we calibrate the Kyle instance to a liquid spot pair. Take a daily
horizon with value-uncertainty scale $\sigv\approx\$3{,}000$ per BTC (the daily
standard deviation of fundamental value) and
noise-flow scale $\sigu\approx1000$ BTC per day. The dimensionless subsidy at
$\sigv=\sigu=1$ is $|\pi_M|=\sige/(2\sqrt{\sigu^2+\sige^2})$, which at parity
$\sige=\sigu$ takes the one-line value
\[
  |\pi_M|\big|_{\sige=\sigu}
  \;=\; \frac{\sigv\,\sigu}{2\sqrt2}
  \;=\; \frac{3000\times1000}{2\sqrt2}
  \;\approx\; \$1.06\text{M per day}.
\]
Read as a fee, the subsidy is best stated as a \emph{rate} on the model's own
cleared notional, which is the scale-invariant object. At a BTC price level
$p_0\approx\$60$k this calibration clears $Q_0=p_0\,\E[|x|+|u|]\approx\$0.12$B per day
(informed plus liquidity flow), so the break-even fee $\tau^\ast=|\pi_M|/Q_0$ is
\emph{independent of the absolute flow scale} $\sigu$, depending only on $\sige/\sigu$
and $\sigv/p_0$. Across the privacy grid $\sige/\sigu\in\{0.1,0.5,1,\sqrt2,2\}$ it runs
$\{1.6,\,33,\,92,\,132,\,173\}$ basis points, \emph{bracketing} the $\sim\!10$\,bp
($0.1\%$) maker fee of a conventional venue rather than coinciding with it: light
privacy ($\sige=0.1\sigu$) sits well inside a standard fee, while parity privacy
($\sige=\sigu$) costs about an order of magnitude more. Real BTC/USDT venues clear
$\sim\$1$B$+$ per day---roughly $10\times$ this single-maker calibration---but because
the rate is scale-free in $\sigu$, matching that volume scales $|\pi_M|$ and $Q_0$
together and leaves $\tau^\ast$ unchanged. The full dimensionless and dollar-denominated calibration across noise
levels $\sige/\sigu\in\{0.1,0.5,1,\sqrt2,2\}$ (yielding daily subsidies of
roughly $\$15$k, $\$335$k, $\$1.06$M, $\$1.73$M, $\$2.68$M, i.e.\ fractions
$0.005,0.112,0.354,0.577,0.894$ of $\sigv\sigu$) is collected in
\Cref{app:numerics}, together with the subsidy curve $|\pi_M|(\sige)$ at
$\sigv=\sigu=1$, whose inflection at $\sige^\star=\sqrt2\,\sigu$
(\Cref{prop:kyle-shape}) marks the crossover from quadratic to linear growth.
These figures are gross-of-fee transfers under the lower-bound reading of
\Cref{sec:apps-fee}; the corresponding net deadweight is smaller by the
fourth-order factor of \Cref{thm:netcost}.

\section{Conclusion}
\label{sec:conclusion}

Privacy-preserving exchanges separate the signal a market maker prices on from the
flow it settles, and that separation has a single economic signature. A maker that
prices efficiently on a coarse signal $S$ cannot break even against the finer
settled flow $y$: its loss is exactly the executed-price wedge $-\Cov(v-Q,\,y)$
(reducing to $-\Cov(v-\E[v\mid S],\,y-\E[y\mid S])$ when a single price clears both
sides), the privacy subsidy, zero only when no information is hidden. This one impossibility
(\Cref{thm:impossibility}) is the whole story, and the three canonical
microstructure models are three readings of it. Single-period Kyle, Glosten--Milgrom,
and continuous-time Kyle--Back each yield the subsidy in closed form, and the
Loss-Versus-Rebalancing correspondence (\Cref{prop:lvr}) carries the same identity
into the automated-market-maker setting.

The transfer is large but the welfare cost is not. Gross of fees the subsidy is a
pure redistribution offset by a break-even fee; the deadweight that survives, once
that fee distorts volume, is fourth-order in the noise scale while the transfer it
funds is second-order (\Cref{thm:netcost}). A protocol valuing privacy against this
fourth-order cost therefore chooses an interior, strictly positive noise level in
closed form (\Cref{thm:optsigma}), and once privacy is endogenous the price-impact /
informed-intensity elasticity of fixed-noise Kyle is unpinned from its textbook value
(\Cref{prop:comovement}). The unified picture is thus: one impossibility, three
instances, a genuinely small fourth-order welfare cost, and an optimal privacy level
that is interior rather than zero.

\paragraph{Scope and limitations.} Three boundaries delimit these results. First, the
fee-equilibrium and mechanism-design results (\Cref{sec:fee,sec:mechanism}) are
leading-order in $\sige$ and partial-equilibrium: the depth $\lambda$ is held at its
no-fee value $\lambda_0$ while the censored-informed strategy is, strictly, non-Gaussian
and feeds back into $\lambda$. We expect this feedback to enter the deadweight only at
$O(\sige^6)$---leaving the $O(\sige^4)$ headline intact---but the full fixed-point fee
equilibrium is left to future work. Second, the Glosten--Milgrom instance is the
single-round adverse-selection spread; the per-trade subsidy $\mu\eta\Delta$ is the
object of interest, and how it compounds along the sequential price-discovery path is
not analysed here. Third, the optimal noise scale uses the leading-order (quartic)
deadweight and the boundary Gaussian-mechanism calibration; \Cref{rem:corner} records
where the exact bounded deadweight changes the picture.

\paragraph{Data availability.} The symbolic-algebra scripts (Sympy) verifying the
closed-form equilibria, the impossibility wedge, and the fee/mechanism results are
available from the author on request.

Several directions remain open. First, our welfare accounting characterizes the
break-even fee and the deadweight to leading order; a full fee equilibrium---in
which the fee, volume, and the informed-trading intensity are jointly fixed beyond
the leading order---would sharpen the net-cost statement and test the robustness of
the $O(\sige^4)$ scaling. Second, we take the privacy noise to be exogenous and
independent of the flow; adaptive or flow-correlated differential-privacy
mechanisms, which inject noise that responds to realized order flow, would change
the wedge covariance and merit a separate analysis. Third, our informed side is a
single insider; a multi-insider extension would let the camouflage and the subsidy
interact with strategic competition among the informed. Finally, the impossibility
theorem rests only on the tower property and is a natural candidate for a Lean
mechanization, which would give a machine-checked statement of the coarse-signal
wedge and its corollaries.


\appendix
\section{Proofs}
\label{app:proofs}
\subsection{Proofs of the model equilibria and the impossibility theorem}

\begin{proof}[Proof of \Cref{thm:impossibility}]
Write $p^\ast:=\E[v\mid S]$ for the committed-Bayesian quote and let
$r_v:=v-\E[v\mid S]$ and $r_y:=y-\E[y\mid S]$ denote the value- and flow-residuals
relative to the maker's information $\sigma(S)$. All quantities are square-integrable
by assumption, so every expectation below is finite; no Gaussianity is used.

\emph{General executed price.} For any square-integrable executed price $Q$,
$\Pi_M=\E[(Q-v)\,y]=-\Cov(v-Q,\,y)+\E[Q-v]\,\E[y]$, which is \eqref{eq:wedge}; the
regularity condition $\E[v-Q]\,\E[y]=0$ drops the mean term, leaving the executed-price
wedge $-\Cov(v-Q,\,y)$ (this is the form used for the side-dependent Glosten--Milgrom
quote in \Cref{sec:gm}). We now specialise to the single committed price
$Q=p^\ast=\E[v\mid S]$, where the wedge reduces further by the tower property.

\emph{Orthogonality of the efficient residual.} For any $\sigma(S)$-measurable,
square-integrable $h$,
\[
\E\big[(p^\ast-v)\,h\big]
=\E\big[-r_v\,h\big]
=\E\big[h\,\E[\,-r_v\mid S\,]\big]
=\E\big[h\cdot(\E[v\mid S]-\E[v\mid S])\big]
=0,
\]
by the tower property and the $\sigma(S)$-measurability of $\E[v\mid S]$. Thus the
efficient residual $p^\ast-v=-r_v$ is orthogonal to every function of the signal.

\emph{Leg (i): the wedge identity.} Decompose the settled flow into its $S$-measurable
projection and the unseen residual, $y=\E[y\mid S]+r_y$. Then
\[
\Pi_M=\E[(p^\ast-v)\,y]
=\underbrace{\E\big[(p^\ast-v)\,\E[y\mid S]\big]}_{=\,0\ \text{(orthogonality, }h=\E[y\mid S])}
\;+\;\E\big[(p^\ast-v)\,r_y\big]
=\E[-r_v\,r_y].
\]
Because $\E[r_y]=\E[y-\E[y\mid S]]=0$ and likewise $\E[r_v]=0$, the last expectation is
a covariance:
\[
\Pi_M=\E[-r_v\,r_y]=-\Cov(r_v,r_y)=-\Cov\!\big(v-\E[v\mid S],\;y-\E[y\mid S]\big),
\]
which is \eqref{eq:wedge-single}. If $y$ is $\sigma(S)$-measurable then $r_y\equiv0$ and
$\Pi_M=0$; in particular the non-coarsened benchmark $S=y$ gives $\Pi_M=0$. When the
coarsening covariance is strictly positive, $\Cov(r_v,r_y)>0$, the identity gives
$\Pi_M=-\Cov(r_v,r_y)<0$, the strictly negative profit of efficient pricing against the
settled flow.

\emph{Leg (ii): incompatibility.} Suppose a $\sigma(S)$-measurable rule $q=q(S)$ were
simultaneously efficient, $q=p^\ast=\E[v\mid S]$, and zero-profit against $y$,
$\E[(q-v)\,y]=0$. Efficiency makes its profit equal to the wedge, so by Leg (i)
$\E[(q-v)\,y]=\Pi_M=-\Cov(r_v,r_y)$. Under strict coarsening with positive covariance
this is $<0$, contradicting the postulated $\E[(q-v)\,y]=0$. Hence no
$\sigma(S)$-measurable rule can be both efficient and zero-profit against the settled
flow, and the unavoidable shortfall $|\Pi_M|=\Cov(r_v,r_y)$ is the privacy subsidy.
\end{proof}

\begin{proof}[Proof of \Cref{prop:kyle-eq}]
Conjecture the linear strategy $x=\beta(v-p_0)$ and the linear rule
$p=p_0+\lambda\ytil$ with $\lambda,\beta>0$. The priced signal is
\[
\ytil=\beta(v-p_0)+u+\varepsilon,
\]
jointly Gaussian with $v$, with $\E[\ytil]=0$,
$\Var(\ytil)=\beta^2\sigv^2+\sigu^2+\sige^2$, and
$\Cov(v,\ytil)=\beta\sigv^2$, the last because $u,\varepsilon\indep v$.

\emph{Bayesian rationality.} The Gaussian projection formula gives
\begin{equation}
\E[v\mid\ytil]=p_0+\frac{\Cov(v,\ytil)}{\Var(\ytil)}\,\ytil
=p_0+\frac{\beta\sigv^2}{\beta^2\sigv^2+\sigu^2+\sige^2}\,\ytil,
\label{eq:app-kyle-proj}
\end{equation}
so matching $p=p_0+\lambda\ytil$ to $\E[v\mid\ytil]$ forces
\begin{equation}
\lambda=\frac{\beta\sigv^2}{\beta^2\sigv^2+\sigu^2+\sige^2}.
\label{eq:app-kyle-lambda}
\end{equation}

\emph{Informed best response.} Conditioning on $v$ and using
$\E[u\mid v]=\E[\varepsilon\mid v]=0$,
\[
\E[(v-p)\,x\mid v]
=\E\big[(v-p_0-\lambda(x+u+\varepsilon))\,x\mid v\big]
=(v-p_0)\,x-\lambda x^2,
\]
strictly concave in $x$ (second derivative $-2\lambda<0$), with unique maximiser
$x^\ast=(v-p_0)/(2\lambda)$. Hence the equilibrium intensity is
$\beta=1/(2\lambda)$.

\emph{Fixed point.} Substituting $\beta=1/(2\lambda)$ into
\eqref{eq:app-kyle-lambda},
\[
\lambda\Big(\tfrac{1}{4\lambda^2}\sigv^2+\sigu^2+\sige^2\Big)=\frac{\sigv^2}{2\lambda}
\;\Longrightarrow\;
\frac{\sigv^2}{4\lambda}+\lambda(\sigu^2+\sige^2)=\frac{\sigv^2}{2\lambda}
\;\Longrightarrow\;
\lambda^2=\frac{\sigv^2}{4(\sigu^2+\sige^2)}.
\]
Taking the positive root,
$\lambda_0=\sigv/\big(2\sqrt{\sigu^2+\sige^2}\big)$, and back-substitution gives
$\beta_0=1/(2\lambda_0)=\sqrt{\sigu^2+\sige^2}/\sigv$, with
$\lambda_0\beta_0=\tfrac12$; this is \eqref{eq:kyle-eq}. Uniqueness within the linear
class is immediate: strict concavity pins $x^\ast$ given $\lambda$, and the reduced
scalar equation $\lambda^2=\sigv^2/\big(4(\sigu^2+\sige^2)\big)$ has a single positive
root.
\end{proof}

\begin{proof}[Proof of \Cref{lem:kyle-pl}]
Use $x=\beta_0(v-p_0)$, the independence $u,\varepsilon\indep v$, $\E[u]=\E[\varepsilon]=0$,
and the half-revealing identity $\lambda_0\beta_0=\tfrac12$ of \eqref{eq:kyle-eq}.

\emph{Informed.}
\[
\pi_I=\E\big[(v-p_0-\lambda_0(x+u+\varepsilon))\,\beta_0(v-p_0)\big]
=\beta_0\,\E[(v-p_0)^2]-\lambda_0\beta_0\,\E[(x+u+\varepsilon)(v-p_0)].
\]
Now $\E[(v-p_0)^2]=\sigv^2$, while
$\E[(x+u+\varepsilon)(v-p_0)]=\E[x(v-p_0)]=\beta_0\sigv^2$ (the $u,\varepsilon$ terms
vanish by independence). Hence
$\pi_I=\beta_0\sigv^2-\tfrac12\beta_0\sigv^2=\tfrac12\beta_0\sigv^2
=\tfrac12\sigv\sqrt{\sigu^2+\sige^2}$.

\emph{Noise.} Since $u\indep(x,\varepsilon)$ and $\E[u]=0$,
\[
\pi_N=\E\big[(v-p_0-\lambda_0(x+u+\varepsilon))\,u\big]
=-\lambda_0\,\E[u(x+u+\varepsilon)]
=-\lambda_0\,\E[u^2]
=-\lambda_0\sigu^2
=-\frac{\sigv\,\sigu^2}{2\sqrt{\sigu^2+\sige^2}}.
\]

\emph{Maker.} By the zero-sum identity $\pi_I+\pi_N+\pi_M=0$,
\[
\pi_M=-(\pi_I+\pi_N)
=-\tfrac12\sigv\sqrt{\sigu^2+\sige^2}+\frac{\sigv\sigu^2}{2\sqrt{\sigu^2+\sige^2}}
=\frac{-\sigv(\sigu^2+\sige^2)+\sigv\sigu^2}{2\sqrt{\sigu^2+\sige^2}}
=-\frac{\sigv\,\sige^2}{2\sqrt{\sigu^2+\sige^2}},
\]
which is \eqref{eq:kyle-pl}.
\end{proof}

\begin{proof}[Proof of \Cref{cor:kyle-subsidy}]
This identifies $\pi_M$ of \Cref{lem:kyle-pl} with the wedge of
\Cref{thm:impossibility} at $S=\ytil$, $y=x+u$. From $\pi_N=-\lambda_0\sigu^2$ and the
zero-sum identity, $\pi_M=-\lambda_0\sigu^2-\pi_I+ \lambda_0\sigu^2 -\ldots$; more
directly, \eqref{eq:kyle-pl} gives
\[
\pi_M=-\frac{\sigv\,\sige^2}{2\sqrt{\sigu^2+\sige^2}}
=-\lambda_0\,\sige^2,
\qquad
\lambda_0=\frac{\sigv}{2\sqrt{\sigu^2+\sige^2}},
\]
so $|\pi_M|=\lambda_0\sige^2$, with equality to zero iff $\sige=0$; this is
\eqref{eq:kyle-subsidy}.

It remains to verify that $-\pi_M$ equals the coarsening covariance
$\Cov(v-\E[v\mid\ytil],\,y-\E[y\mid\ytil])$, confirming the structural sign condition.
All variables are jointly Gaussian, so conditioning on the scalar $\ytil$ is linear
projection. Using $\Cov(v,\ytil)=\beta_0\sigv^2$,
$\Cov(y,\ytil)=\Cov(y,y+\varepsilon)=\Var(y)=\beta_0^2\sigv^2+\sigu^2$, and
$\Var(\ytil)=\beta_0^2\sigv^2+\sigu^2+\sige^2$, the residuals are
\[
v-\E[v\mid\ytil]=(v-p_0)-\frac{\beta_0\sigv^2}{\Var(\ytil)}\,\ytil,
\qquad
y-\E[y\mid\ytil]=y-\frac{\beta_0^2\sigv^2+\sigu^2}{\Var(\ytil)}\,\ytil.
\]
For jointly Gaussian residuals of a common conditioning variable,
$\Cov(v-\E[v\mid\ytil],\,y-\E[y\mid\ytil])=\Cov(v,y)-\Cov(v,\ytil)\Cov(y,\ytil)/\Var(\ytil)$.
With $\Cov(v,y)=\Cov(v,\beta_0(v-p_0)+u)=\beta_0\sigv^2$ and writing
$V:=\Var(\ytil)=\beta_0^2\sigv^2+\sigu^2+\sige^2$,
\[
\Cov(v-\E[v\mid\ytil],\,y-\E[y\mid\ytil])
=\beta_0\sigv^2-\frac{\beta_0\sigv^2\,(\beta_0^2\sigv^2+\sigu^2)}{V}
=\frac{\beta_0\sigv^2}{V}\Big(V-(\beta_0^2\sigv^2+\sigu^2)\Big)
=\frac{\beta_0\sigv^2}{V}\,\sige^2.
\]
By \eqref{eq:app-kyle-lambda} at equilibrium, $\beta_0\sigv^2/V=\lambda_0$, so the
covariance equals $\lambda_0\sige^2=-\pi_M$. It is strictly positive whenever
$\sige>0$, so all three conclusions of \Cref{thm:impossibility} hold.
\end{proof}

\begin{proof}[Proof of \Cref{prop:kyle-shape}]
Write $f(\sige):=|\pi_M|(\sige)=\sigv\sige^2/\big(2\sqrt{\sigu^2+\sige^2}\big)$ from
\eqref{eq:kyle-subsidy}. Differentiating,
\[
f'(\sige)
=\frac{\sigv}{2}\cdot\frac{2\sige\sqrt{\sigu^2+\sige^2}-\sige^2\cdot\sige/\sqrt{\sigu^2+\sige^2}}{\sigu^2+\sige^2}
=\frac{\sigv}{2}\cdot\frac{2\sige(\sigu^2+\sige^2)-\sige^3}{(\sigu^2+\sige^2)^{3/2}}
=\frac{\sigv\,\sige\,(2\sigu^2+\sige^2)}{2(\sigu^2+\sige^2)^{3/2}},
\]
strictly positive for $\sige>0$, giving strict monotonicity. Differentiating again,
the numerator's derivative and the chain rule give, after simplification,
\[
f''(\sige)=\frac{\sigv\,\sigu^2\,(2\sigu^2-\sige^2)}{2(\sigu^2+\sige^2)^{5/2}},
\]
which is positive for $\sige^2<2\sigu^2$, negative for $\sige^2>2\sigu^2$, and zero at
the single inflection $\sige^\star=\sqrt2\,\sigu$; thus $f$ is convex on
$[0,\sige^\star]$ and concave thereafter. Finally, the Taylor and asymptotic
expansions of $f$: as $\sige\downarrow0$,
$(\sigu^2+\sige^2)^{-1/2}=\sigu^{-1}\big(1-\tfrac{\sige^2}{2\sigu^2}+O(\sige^4)\big)$,
so $f(\sige)=\tfrac{\sigv}{2\sigu}\sige^2+O(\sige^4)$; as $\sige\to\infty$,
$\sqrt{\sigu^2+\sige^2}=\sige\sqrt{1+\sigu^2/\sige^2}=\sige+O(\sige^{-1})$, so
$f(\sige)=\tfrac12\sigv\sige+O(\sige^{-1})$.
\end{proof}

\begin{proof}[Proof of \Cref{thm:gm-spread}]
Write $s:=\mu(1-2\eta)$. The informed trader (probability $\mu$) buys on $v_H$ and
sells on $v_L$; the noise trader (probability $1-\mu$) buys or sells with probability
$\tfrac12$ independent of $v$. The true-direction likelihoods are therefore
\[
P(d=\text{buy}\mid v_H)=\mu+(1-\mu)\tfrac12=\tfrac{1+\mu}{2},
\qquad
P(d=\text{buy}\mid v_L)=(1-\mu)\tfrac12=\tfrac{1-\mu}{2}.
\]
Passing the true direction through the binary symmetric channel (flip probability
$\eta$),
\[
P(\dtil=\text{buy}\mid v_H)
=\tfrac{1+\mu}{2}(1-\eta)+\tfrac{1-\mu}{2}\eta
=\tfrac12+\tfrac{\mu}{2}(1-2\eta)
=\tfrac{1+s}{2},
\]
and symmetrically $P(\dtil=\text{buy}\mid v_L)=\tfrac{1-s}{2}$. Under the symmetric
prior $P(v_H)=P(v_L)=\tfrac12$,
\[
P(\dtil=\text{buy})=\tfrac12\Big(\tfrac{1+s}{2}+\tfrac{1-s}{2}\Big)=\tfrac12,
\]
so by Bayes
\[
P(v_H\mid\dtil=\text{buy})=\frac{\tfrac12\cdot\tfrac{1+s}{2}}{\tfrac12}=\tfrac{1+s}{2},
\qquad
P(v_H\mid\dtil=\text{sell})=\tfrac{1-s}{2}.
\]
The committed-Bayesian quotes \eqref{eq:gm-quotes} are the posterior means,
\[
\mathrm{ask}=v_H\tfrac{1+s}{2}+v_L\tfrac{1-s}{2}=\tfrac{v_H+v_L}{2}+\tfrac{s\Delta}{2},
\qquad
\mathrm{bid}=v_H\tfrac{1-s}{2}+v_L\tfrac{1+s}{2}=\tfrac{v_H+v_L}{2}-\tfrac{s\Delta}{2},
\]
symmetric about the prior midpoint $(v_H+v_L)/2$. Subtracting,
$\mathrm{ask}-\mathrm{bid}=s\Delta=\mu(1-2\eta)\Delta$, which is \eqref{eq:gm-spread}.
At $\eta=0$ this is the textbook spread $\mu\Delta$; at $\eta=\tfrac12$, $s=0$ and the
spread vanishes.
\end{proof}

\begin{proof}[Proof of \Cref{prop:gm-subsidy}]
Keep $s=\mu(1-2\eta)$ and the quotes from \Cref{thm:gm-spread}.

\emph{Informed.} Knowing $v$, the informed trader takes the favourable side. On $v_H$
it buys and earns $v_H-\mathrm{ask}=\tfrac{\Delta}{2}-\tfrac{s\Delta}{2}$; on $v_L$ it
sells and earns $\mathrm{bid}-v_L=\tfrac{\Delta}{2}-\tfrac{s\Delta}{2}$. Either way,
\[
\pi_I=\tfrac{(1-s)\Delta}{2}=\tfrac{1-\mu(1-2\eta)}{2}\,\Delta.
\]

\emph{Noise.} Trading each side with probability $\tfrac12$ and independent of $v$, the
noise trader's per-trade expected gain is
\[
\pi_N=\tfrac12\,\E[v-\mathrm{ask}]+\tfrac12\,\E[\mathrm{bid}-v]
=\tfrac{\mathrm{bid}-\mathrm{ask}}{2}
=-\tfrac{\mathrm{spread}}{2}
=-\tfrac{s\Delta}{2}=-\tfrac{\mu(1-2\eta)}{2}\,\Delta,
\]
using $\E[v]=(v_H+v_L)/2$, which is the midpoint about which the quotes are symmetric.

\emph{Maker.} Every trade is between one arriving trader and the maker, so the maker's
expected per-trade profit is the negative of the trader-side expectation, weighted by
arrival type:
\[
\pi_M=-\big[\mu\,\pi_I+(1-\mu)\,\pi_N\big]
=-\mu\tfrac{(1-s)\Delta}{2}+(1-\mu)\tfrac{s\Delta}{2}
=\tfrac{\Delta}{2}\big(s-\mu\big).
\]
Since $s-\mu=\mu(1-2\eta)-\mu=-2\mu\eta$, this gives $\pi_M=-\mu\eta\Delta$, and the
weighted zero-sum identity \eqref{eq:gm-zerosum} holds by construction. Thus
$|\pi_M^{\mathrm{GM}}|=\mu\eta\Delta\ge0$, with equality iff $\eta=0$, establishing
\eqref{eq:gm-pnl}--\eqref{eq:gm-subsidy}.

\emph{Identification with the executed-price wedge.} Take $S=\dtil$, let $y$ be the
signed true direction ($+1$ for a true buy, $-1$ for a true sell), and let $Q$ be the
executed price: $Q=\mathrm{ask}$ on a true buy, $Q=\mathrm{bid}$ on a true sell. Under
the symmetric prior $\E[Q]=\E[v]=(v_H+v_L)/2$ and $\E[y]=0$, so the regularity
condition of \Cref{thm:impossibility} holds and $\pi_M=-\Cov(v-Q,y)$ by
\eqref{eq:wedge}. Enumerating the joint law of $(v,d,\dtil)$ gives
$\Cov(v-Q,y)=\mu\eta\Delta$, hence $|\pi_M^{\mathrm{GM}}|=\mu\eta\Delta$, vanishing iff
$\eta=0$ (no coarsening, $\dtil$ determines $d$). This is \emph{not} the single-price
covariance: $\Cov(v-\E[v\mid\dtil],\,y)=2\mu\eta(1-\eta)\Delta$, and the two differ by
the cross-side spread term $\Cov(Q-\E[v\mid\dtil],\,y)=\mu\eta(1-2\eta)\Delta$,
because a true buy that the channel flips to $\dtil=\mathrm{sell}$ still fills at the
ask, not at $\E[v\mid\dtil]$. The single-price form \eqref{eq:wedge-single} applies
only when one price is posted for both sides; the two-sided Glosten--Milgrom quote is
the genuine executed-price instance of \eqref{eq:wedge}.
\end{proof}

\begin{proof}[Proof of \Cref{thm:ct-equilibrium}]
We carry out the Kyle--Back HJB/Riccati construction under the effective-noise
substitution $\sigu^2\mapsto\sigu^2+\sige^2$ and close the system through the
information-budget rationality condition $\Sigma(1)=0$.

\emph{Step 1 (insider HJB).} Conjecture the insider value function
$J(v,p,t)=\alpha(t)(v-p)^2+\gamma(t)$ for deterministic $\alpha,\gamma$. Under a
trading rate $\theta_t$ (so $dx_t=\theta_t\,dt$) and committed price dynamics
$dp_t=\lambda_t\,d\ytil_t$, the observed flow innovation is
$d\ytil_t=\theta_t\,dt+\sqrt{\sigu^2+\sige^2}\,d\widetilde W_t$, where $\widetilde W$
is the standard Brownian motion combining the independent noise-trader and
privacy-noise innovations $\sigu\,dW^u+\sige\,dW^\varepsilon$ (variance
$(\sigu^2+\sige^2)\,dt$). Hence
$dp_t=\lambda_t\theta_t\,dt+\lambda_t\sqrt{\sigu^2+\sige^2}\,d\widetilde W_t$ and the
dynamic-programming equation for the insider, who discounts at rate zero and collects
flow profit $(v-p_t)\,dx_t$, is
\[
0=\partial_t J+\sup_{\theta}\Big\{(v-p)\,\theta+\lambda_t\,\theta\,\partial_pJ\Big\}
+\tfrac12\lambda_t^2(\sigu^2+\sige^2)\,\partial_p^2J.
\]

\emph{Step 2 (first-order condition and ansatz match).} The supremand is linear in
$\theta$, so interior optimality requires the bracket coefficient to vanish:
$(v-p)+\lambda_t\,\partial_pJ=0$, i.e.\ $\partial_pJ=-(v-p)/\lambda_t$. From the ansatz
$\partial_pJ=-2\alpha(t)(v-p)$, matching gives
\begin{equation}
\alpha(t)\,\lambda_t=\tfrac12.
\label{eq:app-ct-alpha}
\end{equation}
At the optimum the bracket is zero, so the HJB reduces to
$0=\alpha'(t)(v-p)^2+\gamma'(t)+\tfrac12\lambda_t^2(\sigu^2+\sige^2)\cdot2\alpha(t)$.
Matching the $(v-p)^2$ coefficient forces $\alpha'(t)=0$: $\alpha$ is constant, hence
by \eqref{eq:app-ct-alpha} so is $\lambda_t\equiv\lambda$. The constant term gives
$\gamma'(t)=-\lambda^2(\sigu^2+\sige^2)\alpha=-\tfrac12\lambda(\sigu^2+\sige^2)=-c/2$
(using \eqref{eq:app-ct-alpha} and $c:=\lambda(\sigu^2+\sige^2)$ from Step 3), so
$\gamma(t)=\tfrac{c}{2}(1-t)$ with $\gamma(1)=0$ along the equilibrium path.

\emph{Step 3 (Bayes pins price impact).} The committed maker sets $\lambda_t$ at its
Bayes Kalman value
$\lambda_t=\Cov(v,d\ytil_t\mid\mathcal F^{\mathrm{MM}}_t)/\Var(d\ytil_t\mid\mathcal
F^{\mathrm{MM}}_t)$. With $dx_t=\beta_t(v-p_t)\,dt$ and posterior variance
$\Sigma(t)=\Var(v-p_t\mid\mathcal F^{\mathrm{MM}}_t)$, the informed contribution to the
flow innovation has conditional covariance with $v$ equal to $\beta_t\Sigma(t)\,dt$,
while the conditional variance of $d\ytil_t$ is $(\sigu^2+\sige^2)\,dt$ to leading
order. Hence
\begin{equation}
\lambda_t=\frac{\beta_t\,\Sigma(t)}{\sigu^2+\sige^2}.
\label{eq:app-ct-kalman}
\end{equation}
Since $\lambda$ is constant (Step 2), $\beta_t\Sigma(t)\equiv\lambda(\sigu^2+\sige^2)=:c$
is constant.

\emph{Step 4 (Riccati for the posterior variance).} Bayesian (Kalman--Bucy) filtering
of $v$ from the observation $d\ytil_t=\beta_t(v-p_t)\,dt+\sqrt{\sigu^2+\sige^2}\,d\widetilde W_t$
gives the posterior-variance ODE
\[
\frac{d\Sigma}{dt}=-\frac{\big(\beta_t\Sigma(t)\big)^2}{\sigu^2+\sige^2}
=-\frac{c^2}{\sigu^2+\sige^2},
\]
constant in $t$, so $\Sigma(t)=\sigv^2-\dfrac{c^2}{\sigu^2+\sige^2}\,t$, decreasing
linearly from $\Sigma(0)=\sigv^2$.

\emph{Step 5 (rationality binds $\Sigma(1)=0$ and pins $c$).} The insider's expected
cumulative profit along the linear path is, using committed pricing
$\E[(v-p_t)^2]=\Sigma(t)$ and \eqref{eq:app-ct-kalman},
\[
\Pi_I=\E\!\int_0^1(v-p_t)\,dx_t=\int_0^1\beta_t\,\E[(v-p_t)^2]\,dt
=\int_0^1\beta_t\Sigma(t)\,dt=\int_0^1 c\,dt=c.
\]
The insider chooses the trading intensity $c$ to maximise this profit subject to the
posterior-variance non-negativity constraint $\Sigma(t)\ge0$ for all $t\in[0,1]$. By
Step 4, $\Sigma(t)=\sigv^2-c^2t/(\sigu^2+\sige^2)$ is decreasing, so the constraint
first binds at $t=1$: $\Sigma(1)\ge0\iff c^2\le\sigv^2(\sigu^2+\sige^2)$. Since $\Pi_I=c$
is increasing in $c$, the maximiser saturates the budget,
\[
c^2=\sigv^2(\sigu^2+\sige^2),
\qquad
c=\sigv\sqrt{\sigu^2+\sige^2},
\]
and $\Sigma(1)=0$. This is the standard Kyle--Back information-budget closure: the
insider trades exactly enough to reveal its private information by the horizon and no
more. It is equivalent to the Back transversality $J(v,p,1)=0$ evaluated along the
equilibrium path ($p_1=v$ a.s.\ when $\Sigma(1)=0$), but stated as a rationality
condition it avoids the inconsistency that a pointwise $J(\cdot,1)=0$ would create with
the constant $\alpha>0$ of Step 2.

\emph{Step 6 (recover the equilibrium).} Substituting $c=\sigv\sqrt{\sigu^2+\sige^2}$:
\[
\lambda=\frac{c}{\sigu^2+\sige^2}=\frac{\sigv}{\sqrt{\sigu^2+\sige^2}},
\qquad
\Sigma(t)=\sigv^2-\frac{c^2}{\sigu^2+\sige^2}\,t=\sigv^2(1-t),
\qquad
\beta(t)=\frac{c}{\Sigma(t)}=\frac{\sqrt{\sigu^2+\sige^2}}{\sigv\,(1-t)},
\]
with $\lambda$ constant in $t$. This is the unique equilibrium in the linear Markovian
class, the uniqueness following from the unique constant $\alpha$ (Step 2), the unique
Kalman gain (Step 3), and the unique budget-saturating $c>0$ (Step 5).
\end{proof}

\begin{proof}[Proof of \Cref{thm:ct-subsidy}]
We compute the three cumulative expected P\&Ls under the equilibrium of
\Cref{thm:ct-equilibrium} and recover $\Pi_M$ as the residual; a direct Itô
computation then confirms it.

\emph{Insider.} As in Step 5 of the previous proof, committed pricing
$p_t=\E[v\mid\mathcal F^{\mathrm{MM}}_t]$ gives
$\E[(v-p_t)^2]=\E[\Sigma(t)]=\Sigma(t)$ (the posterior variance is deterministic), so
the unconditional profit rate is $\beta_t\Sigma(t)\,dt=c\,dt$ and
\[
\Pi_I=c=\sigv\sqrt{\sigu^2+\sige^2}.
\]

\emph{Noise.} A noise-trader unit of flow $du_t$ clears at the post-update price
$p_t=p_{t^-}+\lambda\,d\ytil_t$. Writing
\[
(v-p_t)\,du_t=(v-p_{t^-})\,du_t-\lambda\,d\ytil_t\cdot du_t,
\]
the first term has zero expectation: $p_{t^-}\in\mathcal F_{t^-}$ and $du_t$ is the
next, independent innovation with $\E[du_t]=0$, so by the tower property
$\E[(v-p_{t^-})\,du_t]=0$. For the Itô cross-term, with
$d\ytil_t=dx_t+du_t+d\varepsilon_t$,
\[
d\ytil_t\cdot du_t=(dx_t+du_t+d\varepsilon_t)\,du_t=(du_t)^2=\sigu^2\,dt,
\]
since $dx_t\,du_t=0$ ($dx_t$ is of order $dt$, no quadratic variation against $du_t$)
and $d\varepsilon_t\,du_t=0$ ($W^\varepsilon\indep W^u$). Hence
$\E[(v-p_t)\,du_t]=-\lambda\sigu^2\,dt$ and
\[
\Pi_N=-\lambda\sigu^2=-\frac{\sigv\,\sigu^2}{\sqrt{\sigu^2+\sige^2}}.
\]

\emph{Maker (residual).} Every executed trade is between a participant (insider or
noise trader) and the maker, with no external counterparty, so
$\Pi_I+\Pi_N+\Pi_M=0$ and
\[
\Pi_M=-(\Pi_I+\Pi_N)
=-\sigv\sqrt{\sigu^2+\sige^2}+\frac{\sigv\sigu^2}{\sqrt{\sigu^2+\sige^2}}
=\frac{-\sigv(\sigu^2+\sige^2)+\sigv\sigu^2}{\sqrt{\sigu^2+\sige^2}}
=-\frac{\sigv\,\sige^2}{\sqrt{\sigu^2+\sige^2}}.
\]

\emph{Direct verification.} Computing
$\Pi_M=\E\!\int_0^1(p_t-v)\,dy_t$ with $dy_t=dx_t+du_t$, the same post-trade
decomposition gives $\E[(p_{t^-}-v)\,dy_t]=0$ for the predictable part and an Itô
cross-term $\lambda\,d\ytil_t\cdot dy_t=\lambda\,(du_t)^2=\lambda\sigu^2\,dt$ from the
noise leg, against an informed leg contributing the rate $-c\,dt$; summing,
$\E[(p_t-v)\,dy_t]=(\lambda\sigu^2-c)\,dt
=(\sigv\sigu^2/\sqrt{\sigu^2+\sige^2}-\sigv\sqrt{\sigu^2+\sige^2})\,dt
=-\sigv\sige^2/\sqrt{\sigu^2+\sige^2}\,dt$, integrating to the same value. Therefore
\[
|\Pi_M|=\frac{\sigv\,\sige^2}{\sqrt{\sigu^2+\sige^2}},
\]
vanishing iff $\sige=0$ (the classical zero-maker-profit benchmark).

\emph{Identification with the executed-price wedge.} This is the per-increment
specialisation of \eqref{eq:wedge}. Committed posterior pricing gives $\E[v-p_t]=0$,
so the mean term drops and $\E[(p_t-v)\,dy_t]=-\Cov(v-p_t,\,dy_t)$. Writing
$v-p_t=(v-p_{t^-})-\lambda\,d\ytil_t$ with $d\ytil_t=dx_t+du_t+d\varepsilon_t$ and
$dy_t=dx_t+du_t$ where the insider rate is $dx_t=\beta_t(v-p_t)\,dt$, the surviving
leading-order contributions are the informed level
$\beta_t\,\Var(v-p_{t^-})\,dt=\beta_t\Sigma(t)\,dt=c\,dt$ and the noise It\^o term
$-\lambda\,\E[(du_t)^2]=-\lambda\sigu^2\,dt$ (the $d\varepsilon_t$ and cross terms
vanish), so
\[
\Cov(v-p_t,\,dy_t)=(c-\lambda\sigu^2)\,dt=\lambda\sige^2\,dt>0,
\]
and $\Pi_M=-\int_0^1\lambda\sige^2\,dt=-\lambda\sige^2$ is strictly negative whenever
$\sige>0$, consistent with the welfare-accounting and direct-It\^o values above.
\end{proof}

\subsection{Proofs for the fee-equilibrium and mechanism-design results}

\begin{proof}[Proof of \Cref{lem:harberger}]
Liquidity traders carry heterogeneous per-unit gains-from-trade, so realised liquidity
volume is downward-sloping in the proportional fee $\tau$,
$U(\tau)=U(0)(1-\kappa\tau)+O(\tau^2)$ with $U(0)=\sqrt{2/\pi}\,\sigu$ (the no-fee
expected liquidity volume $\E|u|$ for $u\sim N(0,\sigu^2)$) and semi-elasticity
$\kappa>0$. A liquidity trader transacts iff its private gain-from-trade exceeds the
fee; raising the fee from $g$ to $g+dg$ removes the marginal traders whose surplus lay
in $[g,g+dg]$, each of whom forgoes surplus $g$. The mass of such traders is
$-U'(g)\,dg$ to leading order, so the surplus they lose is $g\,(-U'(g))\,dg$.
Integrating from $0$ to $\tau$ gives the Harberger triangle
\[
\mathrm{DWL}_{LT}(\tau)=\int_0^\tau g\,(-U'(g))\,dg.
\]
Using $-U'(g)=\kappa U(0)+O(g)$,
\[
\mathrm{DWL}_{LT}(\tau)=\int_0^\tau g\big(\kappa U(0)+O(g)\big)\,dg
=\tfrac12\kappa\,U(0)\,\tau^2+O(\tau^3),
\]
the triangle area, derived rather than assumed. (The informed trader's volume is
suppressed by the censoring $x^\ast=(v-p_0\mp\tau)/(2\lambda_0)$ for $|v-p_0|>\tau$, but
informed trades carry no allocative value, so their suppression is a transfer shift,
not a deadweight, and does not enter $\mathrm{DWL}_{LT}$.)
\end{proof}

\begin{proof}[Proof of \Cref{prop:breakeven}]
The pool is solvent when fee revenue covers the gross subsidy of
\Cref{cor:kyle-subsidy}: $\tau\,Q(\tau)=|\pi_M|$, where $Q(\tau)$ is expected total
(informed plus liquidity) volume. To leading order evaluate the volume at the no-fee
equilibrium of \Cref{prop:kyle-eq}. There $\E|x|=\E|\beta_0(v-p_0)|=\beta_0\sigv\sqrt{2/\pi}
=\sqrt{2/\pi}\sqrt{\sigu^2+\sige^2}$ and $\E|u|=\sqrt{2/\pi}\,\sigu$, so
\[
Q(0)=\E|x|+\E|u|=\sqrt{2/\pi}\,\big(\sigu+\sqrt{\sigu^2+\sige^2}\big).
\]
Hence, to leading order, $\tau^\ast=|\pi_M|/Q(0)$. Substituting
$|\pi_M|=\sigv\sige^2/\big(2\sqrt{\sigu^2+\sige^2}\big)$ and
$Q(0)=\sqrt{2/\pi}\,(\sigu+\sqrt{\sigu^2+\sige^2})$ (and using
$1/\sqrt{2/\pi}=\sqrt{\pi/2}=\sqrt{2\pi}/2$),
\[
\tau^\ast
=\frac{\sigv\sige^2}{2\sqrt{\sigu^2+\sige^2}}\cdot
\frac{\sqrt{\pi/2}}{\sigu+\sqrt{\sigu^2+\sige^2}}
=\frac{\sqrt{2\pi}\,\sigv\,\sige^2}
{4\sqrt{\sigu^2+\sige^2}\,(\sigu+\sqrt{\sigu^2+\sige^2})}.
\]
As $\sige\downarrow0$, $\sqrt{\sigu^2+\sige^2}\to\sigu$ and the denominator
$\to4\sigu\cdot2\sigu=8\sigu^2$, giving the leading term
$\tau^\ast=\dfrac{\sqrt{2\pi}\,\sigv}{8\,\sigu^2}\,\sige^2+O(\sige^4)$.
\end{proof}

\begin{proof}[Proof of \Cref{thm:netcost}]
By \Cref{lem:harberger} the social cost at the break-even fee is the Harberger triangle
evaluated at $\tau^\ast$,
\[
\mathrm{DWL}(\tau^\ast)=\tfrac12\kappa\,U(0)\,(\tau^\ast)^2,
\qquad U(0)=\sqrt{2/\pi}\,\sigu.
\]
Substituting $\tau^\ast$ from \Cref{prop:breakeven},
\[
(\tau^\ast)^2=\frac{2\pi\,\sigv^2\,\sige^4}
{16\,(\sigu^2+\sige^2)\,(\sigu+\sqrt{\sigu^2+\sige^2})^2},
\]
so
\[
\mathrm{DWL}(\tau^\ast)
=\tfrac12\kappa\sqrt{2/\pi}\,\sigu\cdot
\frac{2\pi\,\sigv^2\,\sige^4}{16\,(\sigu^2+\sige^2)\,(\sigu+\sqrt{\sigu^2+\sige^2})^2}
=\frac{\sqrt{2\pi}\,\kappa\,\sigu\,\sigv^2\,\sige^4}
{16\,(\sigu^2+\sige^2)\,(\sigu+\sqrt{\sigu^2+\sige^2})^2},
\]
using $\tfrac12\sqrt{2/\pi}\cdot2\pi=\sqrt{2\pi}$. As $\sige\downarrow0$, the factor
$\sigu/\big[(\sigu^2+\sige^2)(\sigu+\sqrt{\sigu^2+\sige^2})^2\big]\to
\sigu/(\sigu^2\cdot4\sigu^2)=1/(4\sigu^3)$, giving the leading term
\[
\mathrm{DWL}(\tau^\ast)=\frac{\sqrt{2\pi}\,\kappa\,\sigv^2}{64\,\sigu^3}\,\sige^4+O(\sige^6).
\]
It is strictly positive and increasing in $\sige$ (each factor of $\tau^\ast$ is).
Since $|\pi_M|=O(\sige^2)$ and $\tau^\ast=O(\sige^2)$ while
$\mathrm{DWL}(\tau^\ast)\propto(\tau^\ast)^2=O(\sige^4)$, the ratio
$\mathrm{DWL}(\tau^\ast)/|\pi_M|=O(\sige^4)/O(\sige^2)=O(\sige^2)\to0$ as
$\sige\downarrow0$: privacy is welfare-neutral to second order in the noise scale, and
the irrecoverable allocative loss is fourth-order.
\end{proof}

\begin{proof}[Proof of \Cref{prop:cara}]
Replace the reduced-form linear demand with CARA hedgers of risk aversion $\gamma$ and
endowment dispersion $\sigma_h=\sigu$: a hedger with endowment $e$ chooses a hedge
position $s$ to maximise mean-variance utility net of the per-unit fee $\tau$,
\[
\max_s\;\Big\{-\gamma\,\tfrac12\sigv^2(e-s)^2-\tau|s|\Big\}.
\]
The unconstrained first-order condition $\gamma\sigv^2(e-s)=\tau\,\mathrm{sgn}(s)$ gives
the optimal hedge
\[
s^\ast(e)=e-\frac{\tau}{\gamma\sigv^2}\,\mathrm{sgn}(e),
\]
i.e.\ $s^\ast(e)=e-\tau/(\gamma\sigv^2)$ for $e>\tau/(\gamma\sigv^2)$: the fee shaves a
constant $\tau/(\gamma\sigv^2)$ off the desired hedge. Aggregating over
$e\sim N(0,\sigu^2)$, the realised hedging volume is
$U(\tau)=\E|s^\ast(e)|=U(0)-\tau/(\gamma\sigv^2)+O(\tau^2)$ with $U(0)=\E|e|=\sqrt{2/\pi}\,\sigu$,
so the volume slope is $-U'(0)=1/(\gamma\sigv^2)$. This is exactly the reduced-form
slope $\kappa\,U(0)$ with the semi-elasticity \emph{pinned} by the microfoundation,
\[
\kappa\,U(0)=\frac{1}{\gamma\sigv^2}
\quad\Longleftrightarrow\quad
\kappa=\frac{1}{\gamma\sigv^2\,U(0)}=\sqrt{\tfrac{\pi}{2}}\,\frac{1}{\gamma\sigu\sigv^2};
\]
the demand-slope identity carries the $U(0)$ factor, so $\kappa$ is \emph{not}
$1/(\gamma\sigv^2)$ itself. The lost-surplus integral is again the Harberger triangle
$\mathrm{DWL}(\tau)=\int_0^\tau g\,(-U'(g))\,dg=\tau^2/(2\gamma\sigv^2)$. Carrying the
same break-even fee $\tau^\ast=\sqrt{2\pi}\,\sigv\,\sige^2/(8\sigu^2)+O(\sige^4)$
through it yields
\[
\mathrm{DWL}(\tau^\ast)=\frac{(\tau^\ast)^2}{2\gamma\sigv^2}
=\frac{\pi\,\sige^4}{64\,\gamma\,\sigu^4}+O(\sige^6)=O(\sige^4),
\]
which is precisely what substituting the pinned $\kappa$ above into the reduced-form
constant $\sqrt{2\pi}\,\kappa\sigv^2/(64\sigu^3)$ of \Cref{thm:netcost} returns: the
two microfoundations agree \emph{exactly}, not merely in scaling. Only the constant is
reparametrised---pinned to $1/(\gamma\sigv^2)$ through the CARA demand slope---so the
$O(\sige^4)$ net-cost result is robust to the microfoundation of liquidity demand.
\end{proof}

\begin{proof}[Proof of \Cref{thm:optsigma}]
Write $A:=\sqrt{2\pi}\,\kappa\,\sigv^2/(64\,\sigu^3)$, so by \Cref{thm:netcost}
$\mathrm{DWL}(\tau^\ast(\sige))=A\,\sige^4+O(\sige^6)$, and let the benefit be the
DP-grounded $B(\sige)=-b\,\edp(\sige)=-bc/\sige$ with $c=\Ddp\sqrt{2\ln(1.25/\delta)}$.
The protocol maximises $W(\sige)=B(\sige)-\mathrm{DWL}(\tau^\ast(\sige))
=-bc/\sige-A\sige^4$ over $\sige\ge0$. Differentiating,
\[
W'(\sige)=\frac{bc}{\sige^2}-4A\,\sige^3,
\]
so the first-order condition $B'(\sige^\ast)=\mathrm{DWL}'(\sige^\ast)$ reads
$bc/\sige^{\ast2}=4A\,\sige^{\ast3}$, i.e.\ $4A\,\sige^{\ast5}=bc$ and
\[
\sige^\ast=\Big(\frac{bc}{4A}\Big)^{1/5}.
\]
This is a genuine, \emph{unique} interior maximum: $B$ is increasing and concave
($B'=bc/\sige^2>0$, $B''=-2bc/\sige^3<0$) and the leading-order $\mathrm{DWL}=A\sige^4$
is increasing and convex ($\mathrm{DWL}'=4A\sige^3>0$, $\mathrm{DWL}''=12A\sige^2>0$),
so $W$ is strictly concave on $\sige>0$ with $W\to-\infty$ both as $\sige\downarrow0$
(the privacy cost $-bc/\sige$ blows up) and as $\sige\to\infty$ (the quartic deadweight
$-A\sige^4$ blows up). The stationary point is therefore the unique global maximiser
for \emph{all} positive parameters; the leading-order objective admits no corner. (A
maximal-privacy corner arises only under the exact \emph{bounded} deadweight, which
saturates as $\sige\to\infty$: by \Cref{rem:corner} the interior optimum then dominates
the corner iff $bc<\sqrt{2\pi}\,\kappa\,\sigu^2\sigv^2/8$, and we work throughout in
that small-noise regime.) Substituting $A$ and $c=\Ddp\sqrt{2\ln(1.25/\delta)}$,
\[
\sige^\ast=\Big(\frac{bc}{4A}\Big)^{1/5}
=\Big(\frac{b\,\Ddp\sqrt{2\ln(1.25/\delta)}\cdot64\,\sigu^3}
{4\sqrt{2\pi}\,\kappa\,\sigv^2}\Big)^{1/5}
=\frac{2^{4/5}\,\Ddp^{1/5}\,b^{1/5}\,\sigu^{3/5}\,[\ln(5/(4\delta))]^{1/10}}
{\pi^{1/10}\,\kappa^{1/5}\,\sigv^{2/5}},
\]
where the constant collapses via
$64/(4\sqrt{2\pi})=16/\sqrt{2\pi}=2^4/(2^{1/2}\pi^{1/2})=2^{7/2}/\pi^{1/2}$, raised to
the $1/5$ power giving $2^{7/10}/\pi^{1/10}$, and $(2\ln(1.25/\delta))^{1/10}
=2^{1/10}[\ln(5/(4\delta))]^{1/10}$, so the two powers of $2$ combine to
$2^{7/10}\cdot2^{1/10}=2^{8/10}=2^{4/5}$. Taking logs of
$\sige^\ast=(bc/4A)^{1/5}$ and differentiating, the elasticities are constant: since
$\sige^\ast\propto b^{1/5}\kappa^{-1/5}\sigu^{3/5}\sigv^{-2/5}\Ddp^{1/5}$,
\[
\frac{\mathrm{d}\ln\sige^\ast}{\mathrm{d}\ln(b,\kappa,\sigu,\sigv,\Ddp)}
=\Big(\tfrac15,-\tfrac15,\tfrac35,-\tfrac25,\tfrac15\Big),
\]
and the $\delta$-dependence enters only through
$c\propto[\ln(1.25/\delta)]^{1/2}$, contributing
$\ln\sige^\ast\supset\tfrac1{10}\ln\ln(1.25/\delta)$, so
\[
\frac{\mathrm{d}\ln\sige^\ast}{\mathrm{d}\ln\delta}
=\frac1{10}\cdot\frac{1}{\ln(1.25/\delta)}\cdot\frac{\mathrm{d}\ln(1.25/\delta)}{\mathrm{d}\ln\delta}
=\frac1{10}\cdot\frac{-1}{\ln(5/(4\delta))}
=-\frac{1}{10\ln(5/(4\delta))}<0,
\]
using $\mathrm{d}\ln(1.25/\delta)/\mathrm{d}\ln\delta=-1$.
\end{proof}

\begin{proof}[Proof of \Cref{prop:comovement}]
For \emph{exogenous} $\sige$ the single-period equilibrium of \Cref{prop:kyle-eq} gives
$\lambda_0=\sigv/(2\sqrt{\sigu^2+\sige^2})$ and $\beta_0=\sqrt{\sigu^2+\sige^2}/\sigv$,
so $\lambda_0\beta_0=\tfrac12$ identically and, holding $\sige$ fixed,
$\mathrm{d}\ln\lambda_0/\mathrm{d}\ln\sigv=+1$ (and $\mathrm{d}\ln\beta_0/\mathrm{d}\ln\sigv=-1$).

For \emph{endogenous} $\sige=\sige^\ast(\sigv)$ the price impact becomes
$\lambda^\ast(\sigv)=\sigv/\big(2\sqrt{\sigu^2+\sige^\ast(\sigv)^2}\big)$, and the
$\sigv$-derivative acquires a policy-feedback channel through $\sige^\ast$:
\[
\frac{\mathrm{d}\ln\lambda^\ast}{\mathrm{d}\ln\sigv}
=\frac{\partial\ln\lambda^\ast}{\partial\ln\sigv}
+\frac{\partial\ln\lambda^\ast}{\partial\ln\sige}\cdot
\frac{\mathrm{d}\ln\sige^\ast}{\mathrm{d}\ln\sigv}.
\]
The partials of $\ln\lambda^\ast=\ln\sigv-\ln2-\tfrac12\ln(\sigu^2+\sige^2)$ are
\[
\frac{\partial\ln\lambda^\ast}{\partial\ln\sigv}=1,
\qquad
\frac{\partial\ln\lambda^\ast}{\partial\ln\sige}
=-\tfrac12\cdot\frac{2\sige^2}{\sigu^2+\sige^2}
=-\frac{\sige^2}{\sigu^2+\sige^2}=-\frac{\theta}{1+\theta},
\]
writing $\theta:=\sige^{\ast2}/\sigu^2$ so that
$\sige^{\ast2}/(\sigu^2+\sige^{\ast2})=\theta/(1+\theta)$. By \Cref{thm:optsigma} the
benefit and deadweight elasticities give
$\mathrm{d}\ln\sige^\ast/\mathrm{d}\ln\sigv=-\tfrac25$. Therefore
\[
\frac{\mathrm{d}\ln\lambda^\ast}{\mathrm{d}\ln\sigv}
=1+\Big(-\frac{\theta}{1+\theta}\Big)\Big(-\frac25\Big)
=1+\frac{2\theta}{5(1+\theta)}
=\frac{5(1+\theta)+2\theta}{5(1+\theta)}
=\frac{5+7\theta}{5(1+\theta)}.
\]
As $\theta$ ranges over $(0,\infty)$ this is strictly increasing from
$5/5=1$ (at $\theta\to0$) toward $7/5$ (at $\theta\to\infty$), so it lies in
$(1,\tfrac75)$. The same computation for $\beta^\ast(\sigv)
=\sqrt{\sigu^2+\sige^\ast(\sigv)^2}/\sigv$ gives
$\partial\ln\beta^\ast/\partial\ln\sigv=-1$ and
$\partial\ln\beta^\ast/\partial\ln\sige=+\theta/(1+\theta)$, hence
\[
\frac{\mathrm{d}\ln\beta^\ast}{\mathrm{d}\ln\sigv}
=-1+\frac{\theta}{1+\theta}\cdot\Big(-\frac25\Big)
=-\frac{5+7\theta}{5(1+\theta)}.
\]
The two elasticities are exact negatives, so
$\mathrm{d}\ln(\lambda^\ast\beta^\ast)/\mathrm{d}\ln\sigv=0$ and the half-revealing
product $\lambda^\ast\beta^\ast=\tfrac12$ is preserved; but the individual
$\sigv$-response is amplified beyond the exogenous reciprocal $\pm1$ to
$\pm(5+7\theta)/(5(1+\theta))\in\pm(1,\tfrac75)$---a bounded amplification of at most
$40\%$, monotone in $\theta$---so under endogenous privacy $\lambda$ and $\beta$ are
pinned by primitives rather than by a single exogenous noise level.
\end{proof}

\section{Numerical illustration}
\label{app:numerics}

This appendix illustrates the Kyle-model subsidy of \Cref{cor:kyle-subsidy}
numerically. We first tabulate the equilibrium in dimensionless form, then map it
onto a BTC/USDT calibration, and close by describing the shape of the subsidy as a
function of the privacy-noise scale.

\subsection{Dimensionless equilibrium}

Normalize $\sigv=\sigu=1$ and report the equilibrium price impact $\lambda$, the
informed intensity $\beta$, and the magnitude of the maker's subsidy
$|\pi_M|=|\Pi_M|$ as functions of the privacy-noise scale $\sige$. By
\Cref{prop:kyle-eq,cor:kyle-subsidy} these are closed forms in
$(\sigv,\sigu,\sige)$; \Cref{tab:dimensionless} evaluates them.

\begin{table}[h]
\centering
\begin{tabular}{ccccc}
\toprule
$\sige$ & $\lambda$ & $\beta$ & $|\pi_M|$ \\
\midrule
$0$           & $0.500$ & $1.000$ & $0.000$ \\
$0.5$         & $0.447$ & $1.118$ & $0.112$ \\
$1$           & $0.354$ & $1.414$ & $0.354$ \\
$\sqrt2$      & $0.289$ & $1.732$ & $0.577$ \\
$2$           & $0.224$ & $2.236$ & $0.894$ \\
$3$           & $0.158$ & $3.162$ & $1.423$ \\
$5$           & $0.098$ & $5.099$ & $2.451$ \\
\bottomrule
\end{tabular}
\caption{Dimensionless Kyle equilibrium at $\sigv=\sigu=1$. As the privacy-noise
scale $\sige$ rises, price impact $\lambda$ falls, the informed trader trades more
aggressively ($\beta$ rises), and the privacy subsidy $|\pi_M|$ grows. The
no-privacy benchmark $\sige=0$ recovers textbook Kyle ($\lambda=\tfrac12$,
$\beta=1$) with zero subsidy.}
\label{tab:dimensionless}
\end{table}

\subsection{BTC/USDT calibration}

To put the subsidy in monetary units we calibrate the scales to a daily BTC/USDT
market: a value-uncertainty $\sigv=\$3{,}000$ per BTC (the daily standard deviation of
fundamental value) and a noise-flow $\sigu=1{,}000~\text{BTC}/\text{day}$. The product
$\sigv\sigu$ then carries units of USD/day, so $|\pi_M|$ is in USD/day and the ratio
$|\pi_M|/(\sigv\sigu)$ is dimensionless. \Cref{tab:btc} tabulates $|\pi_M|$ across noise
levels $\sige/\sigu$.

\begin{table}[h]
\centering
\begin{tabular}{ccc}
\toprule
$\sige/\sigu$ & $|\pi_M|$ (USD/day) & $|\pi_M|/(\sigv\sigu)$ \\
\midrule
$0.1$     & $\sim\$15\text{k}$    & $0.005$ \\
$0.5$     & $\sim\$335\text{k}$   & $0.112$ \\
$1$       & $\sim\$1.06\text{M}$  & $0.354$ \\
$\sqrt2$  & $\sim\$1.73\text{M}$  & $0.577$ \\
$2$       & $\sim\$2.68\text{M}$  & $0.894$ \\
\bottomrule
\end{tabular}
\caption{BTC/USDT calibration of the privacy subsidy at
$\sigv=\$3{,}000$ per BTC, $\sigu=1{,}000~\text{BTC}/\text{day}$. The fraction
column reproduces the dimensionless $|\pi_M|$ of \Cref{tab:dimensionless} and is
calibration-free.}
\label{tab:btc}
\end{table}

At the symmetric noise level $\sige=\sigu$ the subsidy is
\[
|\pi_M| \;=\; \frac{\sigv\,\sigu}{2\sqrt2}
\;=\; \frac{3000\cdot 1000}{2\sqrt2}
\;=\; \$1{,}060{,}660/\text{day}.
\]
The model-consistent way to read this as a fee is the \emph{rate} on its own cleared
notional. At a BTC price $p_0\approx\$60$k the calibration clears
$Q_0=p_0\,\E[|x|+|u|]\approx\$0.12\text{B}/\text{day}$ (informed plus liquidity flow),
so the break-even fee $\tau^\ast=|\pi_M|/Q_0$ is $\{1.6,33,92,132,173\}$ basis points
across $\sige/\sigu\in\{0.1,0.5,1,\sqrt2,2\}$. This rate is independent of the absolute
flow scale $\sigu$ (it depends only on $\sige/\sigu$ and $\sigv/p_0$), so although a
real BTC/USDT venue clears $\sim\$1\text{B}+/\text{day}$---about $10\times$ this
single-maker calibration---rescaling $\sigu$ to match leaves $\tau^\ast$ unchanged. The
point is the order of magnitude: light privacy is recoverable within a conventional
$\sim\!10$\,bp fee, while parity privacy demands roughly an order of magnitude more.

\subsection{Shape of the subsidy}

\Cref{fig:subsidy} plots $|\pi_M|(\sige)=\sige^2/\big(2\sqrt{1+\sige^2}\big)$ for
$\sigv=\sigu=1$, the dimensionless column of \Cref{tab:dimensionless} as a continuous
curve. The subsidy starts at zero in the no-privacy limit $\sige=0$, rises
monotonically in the noise scale, and is convex for small $\sige$ before bending toward
its asymptotically linear growth: the curve has an inflection point at
$\sige^\star=\sqrt2\,\sigu$ (here $\sqrt2$ at unit scale), the noise level at which the
marginal subsidy per unit of added privacy noise is maximal. This inflection concerns
the \emph{transfer} alone; the interior optimal noise scale of \Cref{thm:optsigma} is
instead driven by balancing the differential-privacy benefit's marginal value
$bc/\sige^2$ against the rising marginal deadweight $4A\sige^3$, a separate object from
the subsidy curve.

\begin{figure}[H]
\centering
\begin{tikzpicture}
\begin{axis}[
  width=0.72\textwidth, height=0.42\textwidth,
  xlabel={privacy noise scale $\sige$ (units of $\sigu$)},
  ylabel={subsidy $|\pi_M|$ (units of $\sigv\sigu$)},
  domain=0:5, samples=120, axis lines=left,
  xmin=0, ymin=0, legend pos=south east,
  every axis plot/.append style={thick},
]
\addplot[blue]{x^2/(2*sqrt(1+x^2))};
\draw[densely dotted] (axis cs:1.41421,0) -- (axis cs:1.41421,{1.41421^2/(2*sqrt(1+1.41421^2))});
\node[anchor=south] at (axis cs:1.41421,{1.41421^2/(2*sqrt(1+1.41421^2))}) {\scriptsize$\sige^\star=\sqrt2$};
\end{axis}
\end{tikzpicture}
\caption{The privacy subsidy $|\pi_M|=\sige^2/(2\sqrt{1+\sige^2})$ as a function of the
noise scale $\sige$ at $\sigv=\sigu=1$. Zero at $\sige=0$, monotonically increasing,
with an inflection at $\sige^\star=\sqrt2\,\sigu$ (dotted) separating the accelerating
from the decelerating regime.}
\label{fig:subsidy}
\end{figure}

\bibliographystyle{plainnat}
\bibliography{references}

\begin{thebibliography}{35}
\providecommand{\natexlab}[1]{#1}
\providecommand{\url}[1]{\texttt{#1}}
\expandafter\ifx\csname urlstyle\endcsname\relax
  \providecommand{\doi}[1]{doi: #1}\else
  \providecommand{\doi}{doi: \begingroup \urlstyle{rm}\Url}\fi

\bibitem[Aase and {\O}ksendal(2019)]{nonfiduciary-mm}
Knut~K. Aase and Bernt {\O}ksendal.
\newblock Strategic insider trading equilibrium with a non-fiduciary market
  maker.
\newblock \emph{arXiv preprint arXiv:1908.08777}, 2019.

\bibitem[Back(1992)]{back1992insider}
Kerry Back.
\newblock Insider trading in continuous time.
\newblock \emph{Review of Financial Studies}, 5\penalty0 (3):\penalty0
  387--409, 1992.

\bibitem[Bender and Kraut(2024)]{renegade-wp}
Christopher Bender and Joseph Kraut.
\newblock Renegade whitepaper, protocol specification v0.6.
\newblock \url{https://whitepaper.renegade.fi/}, 2024.
\newblock Accessed 2026-05-15.

\bibitem[Bergemann and Morris(2019)]{bergemann2019infodesign}
Dirk Bergemann and Stephen Morris.
\newblock Information design: A unified perspective.
\newblock \emph{Journal of Economic Literature}, 57\penalty0 (1):\penalty0
  44--95, 2019.

\bibitem[Brahma et~al.(2012)Brahma, Chakraborty, Das, Lavoie, and
  Magdon-Ismail]{brahma-bayesian-mm}
Aseem Brahma, Mithun Chakraborty, Sanmay Das, Allen Lavoie, and Malik
  Magdon-Ismail.
\newblock A {B}ayesian market maker.
\newblock In \emph{Proceedings of the 13th ACM Conference on Electronic
  Commerce (EC 2012)}, 2012.

\bibitem[Buti et~al.(2017)Buti, Rindi, and Werner]{buti2017darkpool}
Sabrina Buti, Barbara Rindi, and Ingrid~M. Werner.
\newblock Dark pool trading strategies, market quality and welfare.
\newblock \emph{Journal of Financial Economics}, 124\penalty0 (2):\penalty0
  244--265, 2017.

\bibitem[Caldentey and Stacchetti(2010)]{caldentey2010}
Ren{\'e} Caldentey and Ennio Stacchetti.
\newblock Insider trading with a random deadline.
\newblock \emph{Econometrica}, 78\penalty0 (1):\penalty0 245--283, 2010.

\bibitem[Carmier(2022)]{carmier2022thermo}
Pierre Carmier.
\newblock Generalized second law of thermodynamics in the {G}losten--{M}ilgrom
  model.
\newblock \emph{arXiv preprint arXiv:2209.15429}, 2022.

\bibitem[{\c C}etin and Danilova(2016)]{cetin2017financial}
Umut {\c C}etin and Albina Danilova.
\newblock Markovian nash equilibrium in financial markets with asymmetric
  information and related forward-backward systems.
\newblock \emph{Annals of Applied Probability}, 26\penalty0 (4):\penalty0
  1996--2029, 2016.

\bibitem[Chhaibi et~al.(2025)Chhaibi, Ekren, and Noh]{chhaibi2025solvability}
Reda Chhaibi, Ibrahim Ekren, and Eunjung Noh.
\newblock Solvability of the {G}aussian {K}yle model with imperfect information
  and risk aversion.
\newblock \emph{arXiv preprint arXiv:2501.16488}, 2025.

\bibitem[Chitra et~al.(2022)Chitra, Angeris, and Evans]{chitra2022dpcfmm}
Tarun Chitra, Guillermo Angeris, and Alex Evans.
\newblock Differential privacy in constant function market makers.
\newblock In \emph{Financial Cryptography and Data Security (FC 2022)}, Lecture
  Notes in Computer Science. Springer, 2022.
\newblock IACR eprint 2021/1101.

\bibitem[Danilova(2010)]{danilova2010imperfect}
Albina Danilova.
\newblock Stock market insider trading in continuous time with imperfect
  dynamic information.
\newblock \emph{Stochastics: An International Journal of Probability and
  Stochastic Processes}, 82\penalty0 (1):\penalty0 111--131, 2010.
\newblock arXiv:1607.00035.

\bibitem[Das(2005)]{das2005learning}
Sanmay Das.
\newblock A learning market-maker in the {G}losten--{M}ilgrom model.
\newblock \emph{Quantitative Finance}, 5\penalty0 (2):\penalty0 169--180, 2005.

\bibitem[Dwork and Roth(2014)]{dwork2014algorithmic}
Cynthia Dwork and Aaron Roth.
\newblock The algorithmic foundations of differential privacy.
\newblock \emph{Foundations and Trends in Theoretical Computer Science},
  9\penalty0 (3--4):\penalty0 211--407, 2014.

\bibitem[Easley and O'Hara(1992)]{easleyohara1992}
David Easley and Maureen O'Hara.
\newblock Time and the process of security price adjustment.
\newblock \emph{Journal of Finance}, 47\penalty0 (2):\penalty0 577--605, 1992.

\bibitem[Flashbots(2024)]{flashbots-suave}
Flashbots.
\newblock The future of {MEV} is {SUAVE}.
\newblock \url{https://writings.flashbots.net/the-future-of-mev-is-suave},
  2024.
\newblock Accessed 2026-05-15.

\bibitem[Foster and Viswanathan(1996)]{foster1996strategic}
F.~Douglas Foster and S.~Viswanathan.
\newblock Strategic trading when agents forecast the forecasts of others.
\newblock \emph{Journal of Finance}, 51\penalty0 (4):\penalty0 1437--1478,
  1996.

\bibitem[Glosten(1989)]{glosten1989}
Lawrence~R. Glosten.
\newblock Insider trading, liquidity, and the role of the monopolist
  specialist.
\newblock \emph{Journal of Business}, 62\penalty0 (2):\penalty0 211--235, 1989.

\bibitem[Glosten and Harris(1988)]{glosten1988}
Lawrence~R. Glosten and Lawrence~E. Harris.
\newblock Estimating the components of the bid/ask spread.
\newblock \emph{Journal of Financial Economics}, 21\penalty0 (1):\penalty0
  123--142, 1988.

\bibitem[Glosten and Milgrom(1985)]{glosten1985}
Lawrence~R. Glosten and Paul~R. Milgrom.
\newblock Bid, ask and transaction prices in a specialist market with
  heterogeneously informed traders.
\newblock \emph{Journal of Financial Economics}, 14\penalty0 (1):\penalty0
  71--100, 1985.

\bibitem[Huddart et~al.(2001)Huddart, Hughes, and
  Levine]{huddart2001disclosure}
Steven Huddart, John~S. Hughes, and Carolyn~B. Levine.
\newblock Public disclosure and dissimulation of insider trades.
\newblock \emph{Econometrica}, 69\penalty0 (3):\penalty0 665--681, 2001.

\bibitem[Kyle(1985)]{kyle1985}
Albert~S. Kyle.
\newblock Continuous auctions and insider trading.
\newblock \emph{Econometrica}, 53\penalty0 (6):\penalty0 1315--1335, 1985.

\bibitem[Milionis et~al.(2022)Milionis, Moallemi, Roughgarden, and
  Zhang]{milionis2022lvr}
Jason Milionis, Ciamac~C. Moallemi, Tim Roughgarden, and Anthony~Lee Zhang.
\newblock Automated market making and loss-versus-rebalancing.
\newblock \emph{arXiv preprint arXiv:2208.06046}, 2022.

\bibitem[Milionis et~al.(2024)Milionis, Moallemi, and
  Roughgarden]{moallemi2022myersonian}
Jason Milionis, Ciamac~C. Moallemi, and Tim Roughgarden.
\newblock A {M}yersonian framework for optimal liquidity provision in automated
  market makers.
\newblock In \emph{Innovations in Theoretical Computer Science (ITCS 2024)},
  2024.
\newblock arXiv:2303.00208.

\bibitem[Nakamura(2026{\natexlab{a}})]{nakamura2026continuoustime}
Yuki Nakamura.
\newblock The privacy subsidy in continuous-time {K}yle: Cumulative welfare
  under noise-perturbed order-flow observation, 2026{\natexlab{a}}.

\bibitem[Nakamura(2026{\natexlab{b}})]{nakamura2026glostenmilgrom}
Yuki Nakamura.
\newblock The privacy subsidy in {G}losten--{M}ilgrom: Bid-ask spread and
  welfare under flip-noise direction observation, 2026{\natexlab{b}}.

\bibitem[Nakamura(2026{\natexlab{c}})]{nakamura2026privacysubsidy}
Yuki Nakamura.
\newblock The privacy subsidy: Kyle's $\lambda$ under noise-perturbed
  order-flow observation, 2026{\natexlab{c}}.

\bibitem[{Penumbra Labs}(2024)]{penumbra-docs}
{Penumbra Labs}.
\newblock Penumbra protocol documentation.
\newblock \url{https://protocol.penumbra.zone/}, 2024.
\newblock Accessed 2026-05-15.

\bibitem[Qiu and Zhou(2023)]{qiu2023insider}
Jixiu Qiu and Yonghui Zhou.
\newblock Insider trading with dynamic asset under market makers' partial
  observations.
\newblock \emph{AIMS Mathematics}, 8\penalty0 (10):\penalty0 25017--25036,
  2023.
\newblock \doi{10.3934/math.20231277}.

\bibitem[Routledge et~al.(2025)Routledge, Shen, and
  Zetlin-Jones]{routledge2024amm}
Bryan~R. Routledge, Yikang Shen, and Ariel Zetlin-Jones.
\newblock Automated exchange economies.
\newblock Working paper, Tepper School of Business, 2025.

\bibitem[Touzo et~al.(2021)Touzo, Marsili, and Zagier]{touzo2020information}
L{\'e}o Touzo, Matteo Marsili, and Don Zagier.
\newblock Information thermodynamics of financial markets: the
  {G}losten--{M}ilgrom model.
\newblock \emph{Journal of Statistical Mechanics: Theory and Experiment}, 2021.
\newblock arXiv:2010.01905.

\bibitem[Viswanathan and Xing(2026)]{flexible-info-kyle}
S.~Viswanathan and Hao Xing.
\newblock Flexible information acquisition in the {K}yle model.
\newblock \emph{arXiv preprint arXiv:2603.21842}, 2026.

\bibitem[Warner(1965)]{warner1965rr}
Stanley~L. Warner.
\newblock Randomized response: A survey technique for eliminating evasive
  answer bias.
\newblock \emph{Journal of the American Statistical Association}, 60\penalty0
  (309):\penalty0 63--69, 1965.

\bibitem[Zhang et~al.(2025)Zhang, Li, Sun, Chen, and Chen]{zhang2025mevbatch}
Mengqian Zhang, Yuhao Li, Xinyuan Sun, Elynn Chen, and Xi~Chen.
\newblock Maximal extractable value in batch auctions.
\newblock In \emph{Proceedings of the 26th ACM Conference on Economics and
  Computation}, 2025.

\bibitem[Zhu(2014)]{zhu2014darkpools}
Haoxiang Zhu.
\newblock Do dark pools harm price discovery?
\newblock \emph{Review of Financial Studies}, 27\penalty0 (3):\penalty0
  747--789, 2014.

\end{thebibliography}

\end{document}